\newcommand {\ignore} [1] {}
\definecolor{DarkRed}{rgb}{0.5,0.1,0.1}
\definecolor{DarkBlue}{rgb}{0.1,0.1,0.5}
\definecolor{ForestGreen}{rgb}{0.1333,0.5451,0.1333}
\definecolor{Red}{rgb}{0.9,0,0}
\crefname{property}{property}{Property}
\crefname{equation}{eq}{Eq}
\def\BState{\State\hskip-\ALG@thistlm}
\def\MathE{\hbox{{\rm I}\hskip -2pt {\rm E}}}
\newcommand{\Expect}{\MathE}
\newtheorem{theorem}{Theorem}
\newtheorem{lemma}{Lemma}[section]
\newtheorem{invariant}[lemma]{Invariant}
\newtheorem{assumption}{Assumption}
\newtheorem{corollary}[lemma]{Corollary}
\newtheorem{claim}[lemma]{Claim}
\newtheorem{fact}[lemma]{Fact}
\newtheorem{definition}[lemma]{Definition}
\crefname{lemma}{Lemma}{Lemmas}
\crefname{invariant}{Invariant}{Invariants}
\crefname{claim}{Claim}{Claims}
\newtheorem*{claim*}{Claim}
\newtheorem*{proposition*}{Proposition}
\newtheorem*{lemma*}{Lemma}
\newtheorem*{problem*}{Problem}
\newtheorem*{theorem*}{Theorem}
\newtheorem{mdresult}{Result}
\theoremstyle{definition}
\newtheorem{remark}[lemma]{Remark}
\newtheorem{observation}[lemma]{Observation}
\DeclareMathOperator*{\argmax}{arg\,max}
\renewcommand{\qed}{\nobreak \ifvmode \relax \else
      \ifdim\lastskip<1.5em \hskip-\lastskip
      \hskip1.5em plus0em minus0.5em \fi \nobreak
      \vrule height0.75em width0.5em depth0.25em\fi}
\newcommand{\eps}{\ensuremath{\varepsilon}}
\newcommand{\paren}[1]{\ensuremath{\left(#1\right)}\xspace}
\newcommand{\card}[1]{\left\vert{#1}\right\vert}
\newcommand{\set}[1]{\ensuremath{\left\{ #1 \right\}}}
\newcommand{\poly}{\mbox{\rm poly}}
\DeclareMathOperator*{\Prob}{\ensuremath{\textnormal{Pr}}}
\newcommand{\etal}{{\it et al.\,}}
\newenvironment{tbox}{\begin{tcolorbox}[
		enlarge top by=5pt,
		enlarge bottom by=5pt,
		 breakable,
		 boxsep=0pt,
                  left=4pt,
                  right=4pt,
                  top=10pt,
                  arc=0pt,
                  boxrule=1pt,toprule=1pt,
                  colback=white
                  ]%%
	}
{\end{tcolorbox}}
\newcommand{\II}{\ensuremath{\mathbb{I}}}
\newcommand{\HH}{\ensuremath{\mathbb{H}}}
\newcommand{\estar}{\ensuremath{e^{\star}}}
\def\cE{{\cal E}}
\title{Fully Dynamic Set Cover via Hypergraph Maximal Matching: \\An Optimal Approximation Through a Local Approach}
\author{Sepehr Assadi \and Shay Solomon}
\date{}
\begin{document}
\maketitle

\pagenumbering{roman}

% !TeX root = main.tex 
%!TEX root = main.tex

\begin{abstract}

In the (fully) dynamic set cover problem, we have a collection of $m$ sets from a universe of size $n$ that undergo element insertions and deletions; the goal is to maintain an approximate set cover of the universe after each update. 
We give an $O(f^2)$ update time algorithm for this problem that achieves an $f$-approximation, where $f$ is the maximum number of sets that an element belongs to; under the unique games conjecture, this 
approximation is best possible for any fixed $f$.  This is the first algorithm for dynamic set cover with
approximation ratio that \emph{exactly} matches $f$ (as opposed to \emph{almost} $f$ in prior work), as well as the first one with runtime \emph{independent of $n,m$} (for any approximation factor of $o(f^3)$). 

\medskip

Prior to our work, the state-of-the-art algorithms for this problem were $O(f^2)$ update time algorithms of Gupta \etal [STOC'17] and Bhattacharya \etal [IPCO'17] with $O(f^3)$ approximation, 
and the  recent algorithm of Bhattacharya~\etal~[FOCS'19] with $O(f \cdot \log{n}/\eps^2)$ update time and $(1+\eps) \cdot f$ approximation, improving the $O(f^2 \cdot \log{n}/\eps^5)$ bound of Abboud~\etal~[STOC'19]. 

\medskip

The key technical ingredient of our work is an algorithm for maintaining a \emph{maximal} matching in a dynamic hypergraph of rank $r$---where each hyperedge has at most $r$ vertices---that undergoes hyperedge insertions and deletions 
in $O(r^2)$ amortized update time; our algorithm is randomized, and the bound on the update time holds in expectation and with high probability. This result generalizes the maximal matching algorithm of Solomon [FOCS'16] with constant update time in ordinary graphs to hypergraphs, and is of independent merit; the previous state-of-the-art algorithms for set cover do not translate to (integral) matchings for hypergraphs, let alone a maximal one.
Our quantitative result for the set cover problem is translated directly from this qualitative result for maximal matching
using standard reductions. 
 
 \medskip
 
An important advantage of our approach over the previous ones for approximation $(1+\eps) \cdot f$ (by Abboud~\etal~[STOC'19] and Bhattacharya~\etal~[FOCS'19]) is that it is inherently {\em local} and can thus be distributed efficiently to achieve low amortized round and message complexities.

\end{abstract}

\clearpage

\setcounter{tocdepth}{2}
\tableofcontents

\clearpage

\pagenumbering{arabic}
\setcounter{page}{1}

% !TeX root = main.tex 
%!TEX root = main.tex
\newcommand{\DD}{\ensuremath{\mathcal{D}}}
\newcommand{\addrandom}{\ensuremath{\textnormal{\textsf{insert-hyperedge}}\xspace}}
\newcommand{\handleedge}{\ensuremath{\textnormal{\textsf{handle-hyperedge}}\xspace}}
\newcommand{\FF}{\mathcal{F}}
\renewcommand{\SS}{\mathcal{S}}

\section{Introduction}\label{sec:intro}

In the set cover problem, we are given a family $\SS = (S_1,\ldots,S_m)$ of $m$ sets on a universe $[n]$. The goal is to find a minimum-size subfamily of sets $\FF \subseteq \SS$ such that 
$\FF$ covers all the elements of $[n]$. Throughout the paper, we use $f$ to denote the maximum {frequency} of any element $i \in [n]$ inside the sets in $\SS$ (by frequency of element $i$, we mean the number of sets
in $\SS$ that contain $i$). 

The set cover problem is one of the most fundamental and well-studied NP-hard problems, with two classic approximation algorithms, both with runtime $O(f n)$: greedy $\ln n$-approximation and primal-dual $f$-approximation.
One cannot achieve approximation $(1-\eps) \ln n$ unless P = NP \cite{williamson2011design,dinur2014analytical}
or approximation $f-\eps$ for any fixed $f$ under the unique games conjecture \cite{KR08}.

In recent years there is a growing body of work on this problem in the dynamic setting, where one would like to efficiently maintain a set cover for a universe that is subject to element insertions and deletions.
The holy grail is to coincide with the bounds of the static setting: 
approximation factor of either $\ln{n}$ or $f$ with (amortized) update time $O(f)$ (as a static runtime of $\Theta(f n)$ means that we spend $\Theta(f)$ time per each element of the universe on average). 
Indeed, in any dynamic model where element updates are explicit, update time $\Omega(f)$ is inevitable.
Even in stronger models where updates are supported implicitly in constant time, 
recent SETH-based conditional lower bounds 
imply that update time $O(f^{1-\eps})$ requires polynomial approximation factor \cite{abboud2019dynamic}.

The dynamic set cover problem was first studied by Gupta~\etal~\cite{GKKP17} who gave an $O(\log n)$-approximation with update time $O(f \log n)$ based on a greedy algorithm. 
The rest of the known algorithms are primal-dual-based and their approximation factor depends only on $f$.
The state-of-the-art algorithms  are: 
\begin{itemize}
\item An $O(f^3)$-approximation with
$O(f^2)$ update time, by Gupta \etal \cite{GKKP17} and independently by Bhattacharya \etal \cite{BCH17}; 
\item A $(1+\eps) \cdot f$-approximation with update time $O(f \cdot \log{n}/\eps^2)$
by Bhattacharya~\etal \cite{bhattacharya2019new},  
improving the $O(f^2 \cdot \log{n}/\eps^5)$ bound of Abboud~\etal \cite{abboud2019dynamic} 
as well as an earlier result by Bhattacharya \etal \cite{BhattacharyaHI15}. 
\end{itemize}
\noindent
Interestingly, the state-of-the-art algorithms for this problem are all {\em deterministic} 
(the algorithm of~\cite{abboud2019dynamic} is  randomized however).

\paragraph{Our Result.} In this work we demonstrate the power of randomization for the dynamic set cover problem by achieving the best possible approximation of $f$ with runtime independent of both $m,n$: 

\begin{theorem*}[\textbf{Informal}]
	There is an algorithm for the dynamic set cover problem that achieves an \emph{exact} $f$-approximation in $O(f^2)$ expected (and with high probability) amortized update time. 
\end{theorem*}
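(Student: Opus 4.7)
The plan is to derive the theorem as a direct consequence of the hypergraph maximal matching result announced in the abstract, via the classical primal-dual correspondence between set cover and matching. Given a dynamic set cover instance $([n], \SS)$ with maximum frequency $f$, I would set up an auxiliary hypergraph $H$ whose vertex set is $\SS$ (the sets) and whose hyperedges are indexed by the current universe: for each element $i$ currently in the universe, the hyperedge is $e_i = \{S_j \in \SS : i \in S_j\}$. Each hyperedge has size at most $f$, so $H$ has rank $r \leq f$. Insertions and deletions of universe elements in the set cover instance translate one-to-one to insertions and deletions of the corresponding hyperedges in $H$, so the problem of dynamically maintaining an $f$-approximate set cover reduces to dynamically maintaining a maximal matching of $H$, together with a small amount of bookkeeping.

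Let $M$ be any maximal matching of $H$, viewed as a set of universe elements via the bijection $e_i \leftrightarrow i$, and take as the candidate cover $\FF = \bigcup_{i \in M} \{S_j \in \SS : i \in S_j\}$, i.e., the collection of all sets containing some matched element. First, $\FF$ is a feasible cover: for any universe element $i$, by maximality of $M$, either $i \in M$---in which case every set containing $i$ is in $\FF$---or $e_i$ shares a vertex $S_j$ with some $e_k$ for $k \in M$, meaning $S_j \in \FF$ and $i \in S_j$. Second, $|\FF| \leq f \cdot |M|$ since each matched element contributes at most $f$ sets to $\FF$, and $|M| \leq \OPT$ since any feasible cover must cover every matched element, but by the matching property no single set covers two matched elements. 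Hence $|\FF| \leq f \cdot \OPT$, giving approximation ratio exactly $f$.

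For the dynamic maintenance, I would invoke the hypergraph maximal matching algorithm to maintain $M$ in $H$ in $O(r^2) = O(f^2)$ amortized expected (and w.h.p.) time per hyperedge update. To convert $M$ into $\FF$, I would keep, for each vertex $v \in \SS$, a counter $c_v$ of the matched hyperedges incident to $v$, and set $\FF = \{v : c_v \geq 1\}$; each insertion or removal of a hyperedge $e$ in $M$ triggers $|e| \leq f$ counter updates, so this bookkeeping is subsumed by the $O(f^2)$ matching update time, assuming the standard property that the amortized number of matching changes per update is $O(f)$. The main obstacle---which is the substantive technical content of the paper---is proving the hypergraph maximal matching result itself; the reduction above is essentially the classical primal-dual $f$-approximation argument lifted to the dynamic setting, and once the matching algorithm is in hand the theorem follows immediately.
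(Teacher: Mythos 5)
Your proposal is correct and follows essentially the same route as the paper: represent the set system as a rank-$f$ hypergraph with sets as vertices and elements as hyperedges, observe (as in the paper's Fact~\ref{fact:matching-vc}) that the vertices incident on a maximal matching form an $f$-approximate vertex cover, and then invoke the $O(r^2)$-amortized dynamic hypergraph maximal matching algorithm. The counter bookkeeping you add is harmless but unnecessary, since the paper's data structure $M(v)$ already records for each vertex (set) whether it is matched, so the cover $\{v : M(v) \neq \perp\}$ can be read off directly.
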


This gives the first algorithm for dynamic set cover with approximation ratio that \emph{exactly} matches $f$ (as opposed to \emph{almost} $f$ in prior work), as well as the first one with runtime \emph{independent of $n,m$} (for any approximation factor of $o(f^3)$). The bound $O(f^2)$ on the update time of our algorithm holds with high probability and in expectation. As in \cite{abboud2019dynamic} and in the great majority of randomized dynamic graph algorithms, we assume an oblivious adversary. 
We shall remark that even for the much simpler problem of dynamic vertex cover (corresponding to $f=2$ case), no algorithm is known against an adaptive adversary that achieves an exact $2$-approximation in time independent of other input parameters (although $(2+\eps)$-approximation algorithms have been known for some time now~\cite{BhattacharyaK19,BCH17,BHI15,BHN17}).

\paragraph{Maximal Hypergraph Matching.} The key technical ingredient of our work is an algorithm for maintaining a \emph{maximal} matching in a dynamic hypergraph of rank $r$---where each hyperedge has at most $r$ vertices---that undergoes hyperedge insertions and deletions 
in $O(r^2)$ amortized update time.
This result generalizes the maximal matching algorithm of Solomon \cite{Sol16} with constant update time for ordinary graphs, and is of independent merit; the previous state-of-the-art algorithms for set cover do not translate to (integral) matchings for hypergraphs, let alone a maximal one (however the algorithm of~\cite{abboud2019dynamic} translates to
an integral (non-maximal) matching). The result for set cover follows immediately from this: taking all the matched vertices in a maximal hypergraph matching of rank $f$ yields an $f$-approximate hypergraph vertex cover, or equivalently, an $f$-approximate set cover; 
(see~\Cref{sec:prelim} for details of this standard reduction).

We stress that there is an inherent difference between a maximal matching (yielding exactly $f$-approximation) and an almost-maximal matching
(yielding $(1+\eps) \cdot f$-approximation), wherein an $\eps$-fraction of the potentially matched vertices may be unmatched; this is true in general but particularly important in the dynamic setting. 
Despite extensive work on dynamic algorithms for matching and vertex cover in ordinary graphs ($f = 2$), the state-of-the-art deterministic algorithm (or even randomized against adaptive adversary) for 2-approximate matching and vertex cover (via a maximal matching) has update time $O(\sqrt{|E|})$ \cite{NS13}, while $(2+\eps)$-approximate matching and vertex cover can be maintained deterministically in poly-log update time \cite{BHI15,BhattacharyaHN16}; for approximate vertex cover and \emph{fractional} matching, the update time can be further reduced to $O(1/\eps^2)$ \cite{BhattacharyaK19}. 
Therefore, it is only natural that our algorithm, which achieves an integral maximal hypergraph matching, is randomized and assumes an oblivious adversary.

\paragraph{Distributed networks.}
There is a growing body of work on distributed networks that change dynamically 
(cf.\ \cite{ParterPS16,PS16,CHK16,AOSS18,KS18,BC19,DBLP:journals/corr/abs-2010-16177}).
A distributed network can be modeled as an undirected (hyper)graph $G(V,E)$, where each vertex $v \in V$ is identified with a unique processor and the edge set $E$ corresponds to direct communication links between the processors. 
In a static distributed setting all processors wake up simultaneously,
and computation proceeds in fault-free synchronous rounds during which every processor exchanges messages of size $O(\log n)$ with its neighbors.\footnote{We consider the standard $\mathcal{CONGEST}$ model (cf.~\cite{PelB00}), which captures the essence of spatial locality and congestion. For hypergraphs, the messages should be of size $O(\log m)$, where $m$ is the number of edges in the graph.}

We focus on the standard setting in dynamic graph algorithms---dynamically changing networks that undergo edge updates (both insertions and deletions, a single edge update per step), which initially contain no edges. 
But in a {\em distributed} network we are subject to the following {\em local} constraint: After each edge update, only the affected vertices---the endpoints of the updated edge---are woken up; this is referred to in previous work as the {($\mathcal{CONGEST}$) \em local wakeup model}. 
Those affected vertices initiate an update procedure, which also involves waking up the vertices in the network (beyond the affected ones) that are required to participate in the update procedure, to adjust all outputs to agree on a valid global solution---in our case a maximal hypergraph matching; the output of each vertex is the set of its incident edges that belong to the matching. 
We make a standard assumption that the edge updates occur in large enough time gaps, and hence the network is always ``stable'' before the next change occurs (see, e.g.,~\cite{ParterPS16,CHK16,AOSS18}). 
In this setting, the goal is to optimize (1) the number of communication rounds, and (2) the number of messages,
needed for repairing the solution per edge update,  over a worst-case sequence of edge updates.

Our dynamic maximal hypergraph matching algorithm can be naturally adapted to \emph{distributed networks}.
Note that following an edge update, $O(1)$ communication rounds trivially suffice for updating a maximal hypergraph matching. 
However, the number of messages sent per update via this naive algorithm may be a factor of $r^2$ greater than the maximum degree in the hypergraph, which could be $\Omega({n \choose r-1})$, where $n$ is the number of vertices and $r$ is the rank.
An important objective is to design a dynamic distributed algorithm that achieves, in addition to low round complexity, a low {\em message complexity}, which is obtained by our algorithm as mentioned below. 

The inclusion-maximality of our maintained matching enforces our algorithm to work persistently so there is never any ``slack''; that is, the algorithm makes sure that every edge that can be added to the matching is added to it, which stands in contrast to ``lazy'' approximate-maximal matching (or approximate set cover) algorithms, which may wait to accumulate an $\eps$-factor additive slack in size or weight, and only then run an update procedure. 
However, such a lazy update procedure is inherently non-local, where, following an edge update $e$, the required changes to the maintained graph structure may involve edges and vertices that are arbitrarily far from $e$; indeed, this is the case with the previous algorithms that achieve approximation factor close to $f$ \cite{bhattacharya2019new,abboud2019dynamic}; moreover, the ``lazy'' feature of these algorithms must rely on a centralized agent that orchestrates the update algorithm with the use of global data structures,  and this is, in fact, the key behind the efficiency of these algorithms  \cite{bhattacharya2019new,abboud2019dynamic}.  
Our algorithm, on the other hand, does not employ any global update procedure or global data structures, and as such it is inherently {\em local} and can be easily distributed, so that the average number of messages sent per update is $O(r^2)$, matching the sequential update time. The number of rounds is clearly upper bounded by the number of messages.

Some more details on this application of our main result are deferred to Section \ref{appdist}.

\paragraph{Recent related work:}
Independently and concurrently to our work, Bhattacharya, Henzinger, Nanongkai, and Wu~\cite{DBLP:journals/corr/abs-2002-11171,BhattacharyaHNW21} obtained an algorithm for dynamic set cover  with 
	$O(f^2/\eps^3)$ amortized update time and $O(f \cdot \log^2{n}/\eps^3)$ worst case update time and $(1+\eps) \cdot f$ approximation. While closely related, their results and ours are incomparable. Our algorithm can achieve a better approximation ratio of \emph{exactly} $f$ as opposed to $(1+\eps) f$ (which is the 
	first dynamic algorithm with this guarantee) but is randomized, while their result is deterministic and can work for weighted set cover (with extra $\log{C}$-dependence on the update time where $C$ is the maximum weight of any set). 
		Also, as mentioned, our algorithm is inherently local and can be distributed efficiently, whereas the algorithm of Bhattacharya~\etal, as the previous aforementioned algorithms, is non-local. 
	In terms of techniques, the two works are  entirely disjoint.  

\subsection{Technical overview}

At a  high level, all the previous state-of-the-art algorithms
 \cite{GKKP17,BCH17,bhattacharya2019new} (as well as the independent work of~\cite{DBLP:journals/corr/abs-2002-11171,BhattacharyaHNW21})  follow a deterministic primal-dual approach by maintaining a fractional packing solution as a dual certificate.
The main advantage of this approach over ours, of course, is in being deterministic, and its drawbacks are that 
(1) the approximation factor is almost $f$ rather than exactly $f$, (2) it does not give rise to an \emph{integral} matching in the context of hypergraphs, and (3) it is non-local.

Our algorithm generalizes and strengthens the maximal matching algorithm by Solomon \cite{Sol16}, which, in turn, builds on and refines the pioneering approach of Baswana \etal \cite{BGS11}. For conciseness, we shall sometimes refer only to \cite{Sol16} in the following (to avoid explaining the differences between \cite{Sol16} and \cite{BGS11}); of course, by that we do not mean to take any credit of \cite{BGS11}, on which \cite{Sol16} relies. 

\paragraph{Maximal matching algorithm of~\cite{BGS11,Sol16}.} Consider a deletion of a matched edge $(u,v)$ from the graph (for $r=2$), which is the only nontrivial update. 
Focusing on $u$, if $u$ has an unmatched neighbor, we need to match $u$.  
To avoid a naive scan of the neighbors of $u$ (requiring $O(n)$ time),
the key idea is to match $u$ with a randomly sampled (possibly matched) neighbor $w$.
Under the oblivious adversarial model, the expected number of edges incident on $u$ that are deleted from the graph before the deletion of edge $(u,w)$ is roughly half the  ``sample space'' size of $u$, i.e., the number of neighbors of $u$ from which we sampled $w$, which can be viewed as ``time token'' (or potential value) that is expected to arrive in the future. 

To benefit from these tokens,~\cite{BGS11,Sol16} introduces a \emph{leveling scheme} where the \emph{levels} of vertices are exponentially smaller estimates of these potential values: Unmatched vertices have level $-1$ and 
matched vertices are assigned the levels of their matched edge, which is roughly the logarithm of the sample space size.
This defines a dynamic hierarchical partition of \emph{vertices} into $O(\log n)$ levels.

A key ingredient in the algorithm of \cite{BGS11, Sol16} is the {\em sampling rule}: A random mate $w$ is chosen for a vertex $u$ among its neighbors of strictly lower level. Intuitively, if $w$'s level is lower than $u$'s, then $w$'s potential value is much smaller than $u$'s, and the newly created matched edge $(u,w)$ provides enough potential  to cover the cost of deleting the old matched edge on $w$ (if any).

\paragraph{Difficulties of going from ordinary graphs to hypergraphs.} The main difficulty of extending the prior work in~\cite{BGS11,Sol16} 
to hypergraphs of rank $r > 2$ has to do with the ``vertex-centric'' approach taken by these works~\cite{BGS11,Sol16}. For instance, even at the definition level, 
it is already unclear how to generalize the sampling rule of the algorithm for hypergraphs, even for $r = 3$,
since the hyperedges incident on $u$ may consist of endpoints of various levels, some smaller than that of $u$, some higher.
Ideally we would want to sample the matched hyperedge among those where {\em all} endpoints have lower level than that of $u$, but it is a-priori unclear how to maintain this hyperedge set efficiently. 
In particular, to perform the sampling rule efficiently, the strategy of \cite{Sol16} is to dynamically orient each edge towards the lower level endpoint, and a central obstacle is to efficiently cope with edges where both endpoints are at the same level. For hypergraphs, naturally, these obstacles become more intricate considering there are $r$ different endpoints now.

There are also other hurdles that need to be carefully dealt with, such as the following.
Say a matched hyperedge $e = (u_1,\ldots,u_r)$ gets deleted from the hypergraph. 
When $r = 2$, any edge incident on $u_1$ is different than any edge incident on $u_2$, hence informally the update algorithm can handle $u_1$ and $u_2$ {\em independently} of each other. 
When $r >2$ (even for $r = 3$), different endpoints of $e$ may share (many) hyperedges in common. 
Therefore, when choosing random matched hyperedges for the newly unmatched endpoints of $e$, we need to (i) be careful not to create conflicting matched hyperedges, but at the same time (ii) keep the sample spaces of endpoints
sufficiently large so that the potential values can cover the runtime of the update procedure;  balancing between these two contradictory requirements is a key challenge in our algorithm. 

\paragraph{An $O(r^3)$ update time algorithm.} We manage to cope with these and other hurdles by instead switching to a ``hyperedge-centric'' view of the algorithm. Informally speaking, this means that 
instead of letting vertices derive the potential values and levels, we assign these values to the hyperedges and use those to define the corresponding level for remaining vertices. 
Under this new view of the algorithm, we can indeed generalize the approach of~\cite{Sol16} to obtain an $O(r^4)$-update time algorithm for rank $r$ hypergraphs. We note that considering the intricacies in~\cite{Sol16}, 
already achieving this $O(r^4)$ time bound turned out to be considerably challenging.

Improving the update time to $O(r^3)$ is based on the following insight. In this hyperedge-centric view, 
a level-$\ell$ matched hyperedge $e$ is sampled to the matching from a sample space $S(e)$ of size roughly $\alpha^{\ell}$, where $\alpha = \Theta(r)$. We refer to the hyperedges in $S(e)$ as the {\em core hyperedges} of $e$. 
All the core hyperedges of $e$ are then also assigned a level $\ell$. Let us focus on an $e' = (u_1,\ldots,u_r) \in S(e)$. 
Subsequently, $u_1$ may initiate the creation of a new matched hyperedge at level $\ell' > \ell$, 
at which stage we randomly sample a new matched hyperedge, denoted by $e_1$, 
among all its incident hyperedges of level lower than $\ell'$, so $e' \in S(e_1)$, i.e., hyperedge $e'$ is now a core hyperedge of $e_1$ as well. 
Perhaps later $u_2$ may initiate the creation of a new matched hyperedge $e_2$ at level $\ell'' > \ell'$, and then hyperedge $e'$ will
become a core hyperedge of $e_2$, and so on and so forth. Thus any hyperedge may serve as a core hyperedge of up to $r$ matched hyperedges at any point in time. 

To shave a factor of $r$ from the runtime, we need to make sure that each hyperedge serves as a core hyperedge of a \emph{single} matched hyperedge.
This is achieved by ``freezing''  (or \emph{temporarily} deleting) all core hyperedges of a newly created matched hyperedge $e$; in the sequel (see Section~\Cref{sec:deleted}) we shall refer to these hyperedges as {\em temporarily deleted hyperedges} (due to $e$) rather than ``core hyperedges'', and they will comprise the set  $\DD(e)$.
Then, whenever the matched hyperedge gets deleted from the matching, 
we need to ``unfreeze'' these core hyperedges and update all their \emph{ignored} data structures -- our analysis shows that this can be carried out efficiently. 

Since this algorithm already requires an entirely new view of the previous approaches for ordinary graphs and several nontrivial ideas, we present it as a standalone result 
in~\Cref{sec:r4}. 

\paragraph{An $O(r^2)$ update time algorithm.} The main step however is to improve the update time from $O(r^3)$ to $O(r^2)$. We next provide a couple of technical highlights behind this improvement.

In our algorithm, the potential values of level-$\ell$ matched hyperedges are in the range $[\alpha^{\ell},\alpha^{\ell+1})$, for $\alpha = O(r)$,
hence they may differ by a factor of $O(r)$.
Consider the moment a level-$\ell$ matched hyperedge $e = (u_1,\ldots,u_r)$ gets deleted by the adversary; the leveling scheme allows us to assume 
we have an $O(\alpha^{\ell+2})$ ``potential time'' for handling this hyperedge. To get an update time of $O(r^2)$, we need to handle each of the endpoints $u_i \in e$ within time $O(\alpha^{\ell+1})$ 
time or instead ``contribute'' to the potential by creating a new matched hyperedge. If $u_i$ has more than $\alpha^{\ell+1}$ incident hyperedges of level at most $\ell$,
we can sample a random hyperedge among them to be added to the matching, thereby creating a level-$(\ell+1)$ matched hyperedge;
we discuss some issues related to the creation of matched edges later. 
But if $u_i$ has slightly less than $\alpha^{\ell+1}$ incident edges,
this sample space size suffices only for creating a level-$\ell$ matched edge, but it is crucial that the sample space of a level-$\ell$ matched edge would consist only of edges where all endpoints have level strictly lower than $\ell$. 
Even checking whether this is the case is too costly, since iterating over all endpoints of all such edges takes time (slightly less than) $O(\alpha^{\ell+2})$,
and if we are in the same scenario for each $u_i$ this gives rise to a runtime of $O(\alpha^{\ell+3})$,
and thus to an amortized update time of $O(r^3)$.
Even if we could check this for free, if most of the edges have endpoints of level $\ell$, we need to find those endpoints, and to pass the ``ownership'' of the edges to those endpoints. 
To cope with these issues, we maintain a data structure for each hyperedge $e$  that keeps track of all its endpoints of highest level and in $O(1)$ time returns an arbitrary such endpoint or reports that none exists.
Of course, now the challenge becomes maintaining these data structures for the edges with $O(r^2)$ update time.

Consider the moment that a level-$\ell$ matched edge $e$ is created. 
At this stage, another crucial invariant tells us to ``raise'' all endpoints of edge $e$ to level $\ell$, and then to update the ownership set of each endpoint according to its up-to-date level. 
One challenging case is when each endpoint $u_i$ of $e$ now owns slightly less than $\alpha^{\ell+1}$ new edges.
This sample space size does not suffice for creating a level-$(\ell+1)$ matched edge yet it is too costly to update each of the endpoints of these edges about the up-to-date level of $u_i$; summing over all endpoints of $e$, this again gives rise to 
update time of $O(r^3)$. However, if we are equipped with the aforementioned data structure, we can efficiently focus on the edges where all endpoints have level lower than $\ell$, and can thus create a level-$\ell$ matched edge. This is not enough, however, since we are merely replacing one level-$\ell$ matched edge by another, and this process could repeat over and over.
Our goal would be to replace one level-$\ell$ matched edge by \emph{at least two others}, so as to provide enough increase in ``potential'' to cover for this runtime.
Since the total number of edges incident on $e$ in this case is slightly less than $\alpha^{\ell+2}$, the intuition is that we should be able to easily achieve here a fan-out of 2, and therefore a valid charging argument.
Alas, there is one significant caveat when working with hypergraphs, which we already mentioned above--- dependencies.
It is possible that the first level-$\ell$ matched edge that we randomly sampled for $u_1$ intersects all the edges incident on $e$, which will result in fan-out 1.

To overcome this final obstacle, we take our hyperedge-centric view to the next level by employing a new sampling method different than~\cite{BGS11,Sol16} altogether. 
In particular, in this case, our sample space is not necessarily restricted to hyperedges incident on a single vertex, but could be an arbitrary hyperedge set as a function of the deleted hyperedge; however, importantly, to achieve a {\em local} sampling method, we will make sure that the entire sample space is incident to the endpoints of a single edge.
This new sampling method (see Procedure $\addrandom$ in~\Cref{sec:update-r2}) entails a few technical complications primarily to ensure that we still get enough ``potential'' from the adversary, but it ultimately enables us to achieve the desired update time bound of $O(r^2)$. 

\paragraph{The role of randomness in our algorithm.} 
Our algorithm relies crucially on randomization and on the oblivious adversary assumption, and indeed the probabilistic analysis employed in this work is highly nontrivial. In particular, the usage of randomization for reducing the update time bound from $O(f^3)$ to $O(f^2)$ relies on several new insights. 
We note that if the entire update sequence is known in advance 
and is stored in some data structure that allows for fast access, which is sometimes referred to as the ``(dynamic) offline setting'' (cf.\ \cite{CS18})---then a straightforward variant of our algorithm works  {\em deterministically} with $O(r^2)$ amortized update time. 
More specifically, whenever a matched edge is randomly sampled by our algorithm (which is always done uniformly) from a carefully chosen sample space of edges---in the offline setting, instead of randomly sampling the matched edge,
one can choose the matched edge {\em deterministically} to be the one that will be deleted last among all edges in the sample space. 
It is not difficult to verify that this simple tweak translates our probabilistic $O(f^2)$ amortized update time bound into a deterministic $O(f^2)$ time bound, while avoiding the entire   probabilistic analysis. Consequently, we believe it is instructive to defer the probabilistic ingredients of the analysis to the last sections (Sections \ref{app:r4} and \ref{app:r2}), and first present  the description of the algorithm and the non-probabilistic ingredients of its analysis.

%
% !TeX root = main.tex 
%!TEX root = main.tex

\renewcommand{\HH}{\ensuremath{\mathcal{H}}}
\newcommand{\EE}{\ensuremath{\mathcal{E}}}
\newcommand{\VV}{\ensuremath{\mathcal{V}}}
\newcommand{\MM}{\ensuremath{\mathcal{M}}}
\renewcommand{\SS}{\ensuremath{\mathcal{S}}}

\newcommand{\MMs}{\ensuremath{\mathcal{M}^{\star}}}
\newcommand{\SSs}{\ensuremath{\mathcal{S}^{\star}}}

\section{Preliminaries}\label{sec:prelim}

\paragraph{Hypergraph Notation.} For a hypergraph $G = (V,E)$, $V$ denotes the set of vertices and $E$ denotes the set of hyperedges. We use $v \in e$ to mean that $v$ is one of the vertices incident on 
the hyperedge $e$.  
Rank $r$ of a hypergraph is the maximum number of vertices incident on each edge, i.e., $r := \max\set{ \card{e} \mid e \in E}$. 
 
 A {matching} $M$ in $G$ is a collection of vertex-disjoint hyperedges of $M$. A matching $M$ is called \emph{maximal} if no other edge of $G$ can be directly added to $M$ without violating its matching constraint. 
 A vertex cover $U$ in $G$ is a collection of vertices so that every hyperedge in $E$ has at least one endpoint in $U$. We have the following standard fact. 
 
\begin{fact}\label{fact:matching-vc}
	Let $G$ be any hypergraph with rank $r$. Suppose $M$ is a maximal matching in $G$ and $U$ denotes all vertices incident on $M$. Then $U$ is an $r$-approximate vertex cover of $G$. 
\end{fact}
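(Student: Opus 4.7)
The plan is to decompose the statement into two independent claims and verify each: (i) the vertex set $U$ is a valid vertex cover of $G$, and (ii) $|U| \le r \cdot |U^*|$, where $U^*$ is any (in particular, minimum) vertex cover of $G$. Together these give the $r$-approximation.

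For (i), I would argue by contradiction, directly from the maximality of $M$. Suppose toward contradiction that there exists a hyperedge $e \in E$ with $e \cap U = \emptyset$. By the definition of $U$, this means $e$ is vertex-disjoint from every hyperedge of $M$. Hence $M \cup \set{e}$ is still a matching, which contradicts the assumption that $M$ is maximal. Therefore every hyperedge of $G$ contains at least one vertex of $U$, so $U$ is a vertex cover.

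For (ii), I would use the standard double inequality. Fix any vertex cover $U^*$ of $G$. Every hyperedge $e \in M$ must have some endpoint in $U^*$; since the hyperedges of $M$ are pairwise vertex-disjoint, the endpoints of $U^*$ witnessing the covering of distinct matched hyperedges are themselves distinct, and hence $|U^*| \ge |M|$. On the other hand, each hyperedge of $M$ has at most $r$ vertices (by the rank bound), so $|U| \le r \cdot |M|$. Combining these two inequalities gives $|U| \le r \cdot |M| \le r \cdot |U^*|$, as desired.

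The main obstacle here is essentially absent: this is simply the hypergraph analogue of the classical fact that any maximal matching in an ordinary graph yields a $2$-approximate vertex cover, with the rank bound $|e| \le r$ replacing $|e| = 2$. The only part worth stating carefully is the use of vertex-disjointness of $M$ to conclude $|M| \le |U^*|$; all remaining steps are immediate counting.
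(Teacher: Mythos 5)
Your proof is correct and is exactly the standard argument the paper implicitly appeals to (the paper states this fact without proof as a known reduction): maximality of $M$ gives coverage, and disjointness of matched hyperedges plus the rank bound gives $|U| \le r\cdot|M| \le r\cdot|U^*|$. Nothing is missing.
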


\paragraph{Hypergraph Formulation of Set Cover.} Consider a family $\SS = \set{S_1,\ldots,S_m}$ of sets over a universe $[n]$. We can represent $\SS$ by a hypergraph $G$ on $m$ vertices corresponding to sets of $\SS$ and 
$n$ hyperedges corresponding to elements of $[n]$: Any element $i \in [n]$ is now a hyperedge between vertices corresponding to sets in $\SS$ that contain $i$. It is easy to see that there is a one-to-one correspondence between
set covers of $\SS$ and vertex covers of $G$ and that the rank $r$ of $G$ is the same as the maximum frequency parameter $f$ in the set cover instance. 

With this transformation, by~\Cref{fact:matching-vc}, obtaining an $f$-approximation to 
this instance of set cover reduces to obtaining a maximal matching of $G$. This is the direction we take in this paper for designing fully dynamic algorithms for set cover by designing a fully dynamic algorithm for hypergraph maximal matching. 
%
% !TeX root = main.tex 
%!TEX root = main.tex

\newcommand{\staticalg}{\ensuremath{\textnormal{\texttt{static-sampling}}}\xspace}

\newcommand{\sample}{\textnormal{\ensuremath{\textsf{Sample}}}\xspace}

\newcommand{\OO}{\ensuremath{\mathcal{O}}}
\newcommand{\NN}{\ensuremath{\mathcal{N}}}
\renewcommand{\II}{\ensuremath{\mathcal{I}}}
\renewcommand{\AA}{\ensuremath{\mathcal{A}}}

\newcommand{\handlefree}{\ensuremath{\textnormal{\texttt{handle-free}}}\xspace}
\newcommand{\setowner}{\ensuremath{\textnormal{\texttt{set-owner}}}\xspace}
\newcommand{\setlevel}{\ensuremath{\textnormal{\texttt{set-level}}}\xspace}
\newcommand{\setlvl}{\ensuremath{\textnormal{\texttt{set-level}}}\xspace}
\newcommand{\detsettle}{\ensuremath{\textnormal{\texttt{deterministic-settle}}}\xspace}
\newcommand{\detr}{\ensuremath{\textnormal{\texttt{deterministic-settle}}}\xspace}
\newcommand{\randsettle}{\ensuremath{\textnormal{\texttt{random-settle}}}\xspace}
\newcommand{\rand}{\ensuremath{\textnormal{\texttt{random-settle}}}\xspace}

\newcommand{\ellold}{\ell^{\old}}
\newcommand{\ellnew}{\ell^{\new}}

\newcommand{\old}{\ensuremath{\textnormal{\textsf{old}}}}
\newcommand{\new}{\ensuremath{\textnormal{\textsf{new}}}}

\newcommand{\OOold}{\ensuremath{\mathcal{O}^{\old}}}
\newcommand{\OOnew}{\ensuremath{\mathcal{O}^{\new}}}
\newcommand{\IIold}{\ensuremath{\mathcal{I}}^{\old}}
\newcommand{\IInew}{\ensuremath{\mathcal{I}}^{\new}}

\newcommand{\tildeo}{\widetilde{o}}
\newcommand{\ellstar}{\ell^{*}}

\section{An $O(r^3)$-Update Time Algorithm}\label{sec:r4}

Throughout, we use $M$ to denote the maximal matching of the underlying hypergraph $G = (V,E)$ maintained by the algorithm. We use the following parameters in our algorithm: 
\begin{align}
	\alpha := (4 \cdot r), \qquad L := \lceil \log_{\alpha}\card{N} \rceil \label{eq:par},
\end{align}
where $N$ approximates the dynamic number of edges $|E|$ plus the fixed number of vertices $|V|$ from above, 
so that $\log_{\alpha} \card{N} = \Theta(\log_\alpha (|V| + |E|))$. 
Every $\Omega(N)$ steps we update the value of $N$, and as a result rebuild all the data structures; 
this adds a runtime of $O(|V| + |E|) = O(N)$ every $\Omega(N)$ update steps, hence a negligible overhead to the amortized cost of the algorithm. We may henceforth ignore this technical subtlety and treat $N$ as a fixed value in what follows.

\subsection{A Leveling Scheme and Hyperedge Ownerships}\label{sec:leveling-scheme}

We use a leveling scheme for the input hypergraph that partitions hyperedges and vertices. This is done by assigning a \emph{level} $\ell(e)$ to each hyperedge $e \in E$ and a \emph{level} $\ell(v)$ to each vertex $v \in V$. 
The leveling scheme 
satisfies the following invariant. 

\begin{invariant}\label{inv:level}
	Our leveling scheme satisfies the following properties: 
	\begin{enumerate}[label=(\roman*)]
		\item For any hyperedge $e \in E$, $0 \leq \ell(e) \leq L$ and for any vertex $v \in V$, $-1 \leq \ell(v) \leq L$; moreover, $\ell(v) = -1$ iff $v$ is \textbf{unmatched} by $M$. 
		\item For any \textbf{matched} hyperedge $e \in M$ and any incident vertex $v \in e$, $\ell(v) = \ell(e)$.
		\item For any \textbf{unmatched} hyperedge $e \notin M$, $\ell(e) = \max_{v \in e} \ell(v)$. 
	\end{enumerate}
\end{invariant}
Our leveling scheme needs only to specify $\ell(e)$ for each $e \in M$; the rest are fixed deterministically by~\Cref{inv:level}. Moreover, this invariant ensures that the matching $M$ obtained by the algorithm is maximal 
as all unmatched vertices are at level $-1$ while the level of any hyperedge is at least $0$ and at the same time equal to the maximum level of any of its incident vertices. 

Based on the leveling scheme, we assign each hyperedge $e$ to exactly one of its incident vertices $v \in e$ with $\ell(v) = \ell(e)$ to \emph{own} (the ties between multiple vertices at the same level are broken by the algorithm). 
We use $\OO(v)$ to denote the set of  edges owned by $v$. 

This definition, combined with \Cref{inv:level}, directly implies the following invariant  which we record  for the future references. 
\begin{invariant}\label{inv:record}
	${(i)}$ For any vertex $v \in V$, any owned hyperedge $e \in \OO(v)$, and any other incident vertex $u \in e$, $\ell(u) \leq \ell(v)$. ${(ii)}$ For any vertex $v \in V$, any incident hyperedge $e \in v$ has $\ell(e) \geq \ell(v)$. 
\end{invariant}

\subsection{Temporarily Deleted Hyperedges}\label{sec:deleted}

To obtain the desired update time of $O(r^3)$, we would need to allow some hyperedges of the hypergraph to be considered \emph{temporarily deleted} and no longer participate in any of other data structures; moreover, whenever we no longer consider them 
deleted, we simply treat them as a hyperedge insertion to our hypergraph and handle them similarly (which will take $O(r)$ time per each hyperedge exactly as in the previous algorithm). The role of these deleted hyperedges will become apparent 
only in the probabilistic analysis of~\Cref{app:r4}; however, it is the goal of the algorithm itself (rather than the analysis) to cope efficiently with the required deletions. We shall note that these deletions constitute one of several differences of our algorithm with that of~\cite{Sol16}. 

The following invariant allows us to maintain the maximality of our matching (and thus the correctness of our algorithm). 

\begin{invariant}\label{inv:deleted}	
	Any hyperedge that is \underline{temporarily deleted} is incident on some \textbf{matched} hyperedge. 
\end{invariant}

To maintain~\Cref{inv:deleted}, each matched hyperedge $e \in M$ is \emph{responsible} for a set of deleted edges denoted by $\DD(e)$ and stored in a linked-list data structure. 
This set will be \emph{finalized} at the time $e$ joins the matching $M$ and will remain unchanged throughout the algorithm until $e$ is removed from $M$; at that point, we simply bring back all hyperedges in $\DD(e)$
to the graph as new hyperedge insertions. 

We shall note that besides this data structure $\DD(e)$, these deleted hyperedges \emph{do not appear} in any other data structure of the algorithm and do not (necessarily) satisfy any of the
invariants in the algorithm -- they are simply treated as if they do not belong to the hypergraph. \Cref{inv:deleted} ensures that even though we are ignoring temporarily deleted hyperedges in the algorithm, the resulting maximal matching on the hypergraph of 
undeleted hyperedges is still a maximal matching for the entire hypergraph. As such, throughout the rest of the paper, with a slight abuse of notation, whenever we talk about hyperedges of $G$, we are referring to the hyperedges that are not temporarily deleted 
(unless explicitly stated otherwise).

\subsection{Data Structures}

We maintain the following information for each vertex $v \in V$ (again, to emphasize, we ignore the temporarily deleted hyperedges in all the following data structures): 
\begin{itemize}
	\item $\ell(v)$: the level of  $v$ in the leveling scheme; 
	\item $M(v)$: the hyperedge in $M$ incident on $v$ (if $v$ is unmatched $M(v) = \perp$);
	\item $\OO(v)$: the set of hyperedges $e$ owned by $v$ -- we define $o_v := \card{\OO(v)}$; 
	\item  $\NN(v)$: the set of hyperedges $e$ incident on $v$; 
	\item $\AA(v,\ell)$ for any integer $\ell \geq \ell(v)$: the set of hyperedges $e \in \NN(v)$ that are \emph{not} owned by $v$ and have level $\ell(e) = \ell$ -- we define $a_{v,\ell} := \card{\AA(v,\ell)}$.
\end{itemize}
We also maintain the following information for each hyperedge $e \in E$: 
\begin{itemize}
	\item $\ell(e)$:  the level of $e$ in the leveling scheme; 
	\item $O(e)$:  the single vertex $v \in e$ that owns $e$, i.e., $e \in \OO(v)$;
	\item $M(e)$: a Boolean variable to indicate whether or not $e$ is matched. 
\end{itemize}
We also maintain back-and-forth pointers 
between these different data structures that refer to the same hyperedge or vertex in a straightforward way.

In the following, we introduce the main procedures used for updating these data structures. In these procedures, whenever there is room for confusion, we use superscript $*^{\textsf{old}}$ to denote a parameter or data structure $*$ before the update and $*^{\textsf{new}}$ to denote the value of $*$ after the update. 
\begin{tbox}
\textbf{Procedure $\setowner(e,v)$.} Given a hyperedge $e $ and vertex $v \in e$ where $\ell(v) = \max_{u \in e} \ell(u)$, sets owner of $e$ to be $v$, i.e., $O(e) = v$.  
\end{tbox}
To implement $\setowner(e,v)$, we first set $O^{\new}(e) = v$ and $\ellnew(e) = \ell(v)$, add $e$ to $\OO(v)$, and remove $e$ from $\OO(O^{\old}(e))$. Then, if $\ellnew(e) = \ellold(e)$, there is nothing else to do. Otherwise, for any $w \neq v \in e$, 
we  remove $e$ from $\AA(w,\ellold(e))$ and instead insert $e$ in $\AA(w,\ellnew(e))$.

\begin{claim}\label{clm:set-owner}
	$\setowner(e,v)$ takes $O(r)$ time. 
\end{claim}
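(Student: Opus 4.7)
The plan is to account for the cost of each substep in the implementation of $\setowner(e,v)$ and observe that all nontrivial cost is absorbed by a single pass over the at most $r$ endpoints of $e$. I will assume the standard convention, already announced in the paper, that all the data structures involved ($\OO(\cdot)$, $\NN(\cdot)$, $\AA(\cdot,\cdot)$, and the membership indicators on hyperedges and vertices) are stored as doubly linked lists with back-and-forth pointers between every occurrence of a given hyperedge/vertex across these lists. Under this convention, inserting a new element into, or deleting a known occurrence from, any one of these lists costs $O(1)$.

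First, I will bound the preamble: the assignments $O^{\new}(e) \leftarrow v$ and $\ell^{\new}(e) \leftarrow \ell(v)$ are each $O(1)$, as is the comparison $\ell^{\new}(e) = \ell^{\old}(e)$ performed next. Appending $e$ to $\OO(v)$ costs $O(1)$, and deleting $e$ from $\OO(O^{\old}(e))$ also costs $O(1)$ because the back-and-forth pointer from the hyperedge record of $e$ directly locates its occurrence in $\OO(O^{\old}(e))$ without a search.

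Next I will analyze the $\AA$-updates, which constitute the only potentially nontrivial part. If $\ell^{\new}(e) = \ell^{\old}(e)$, the procedure terminates, so the total cost is $O(1)$ and we are done. Otherwise, the algorithm iterates over the vertices $w \in e$ with $w \neq v$; since $|e| \leq r$, there are at most $r-1$ such vertices. For each such $w$, we remove $e$ from $\AA(w, \ell^{\old}(e))$ and insert it into $\AA(w, \ell^{\new}(e))$; again using the back-and-forth pointer from $e$'s occurrence record to its position in $\AA(w, \ell^{\old}(e))$, each removal and each insertion is $O(1)$. Summing over the at most $r-1$ iterations gives $O(r)$.

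Adding the $O(1)$ preamble and the $O(r)$ loop yields the claimed bound of $O(r)$. The only conceptual point to verify, rather than a calculation, is that the back-and-forth pointers are kept consistent by the algorithm so that every list operation above is indeed constant time; this is a routine bookkeeping guarantee that is preserved by the other update procedures and does not affect the asymptotic cost of $\setowner$ itself.
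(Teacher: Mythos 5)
Your proposal is correct and is exactly the argument the paper has in mind (the paper simply declares the claim straightforward): an $O(1)$ preamble for the ownership and level updates, plus a loop over at most $r-1$ other endpoints with $O(1)$ work per endpoint for the $\AA$-list updates, all justified by the back-and-forth pointers the paper maintains between data structures. Nothing is missing.
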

The proof of \Cref{clm:set-owner} is straightforward. 
\begin{tbox}
\textbf{Procedure $\setlevel(v,\ell)$.} Given a vertex $v \in V$ and integer $\ell \in \set{-1,0,\ldots,L}$, updates the level of $v$ to $\ell$, i.e., sets $\ell(v) = \ell$. 
\end{tbox}
The implementation of $\setlevel(v,\ell)$ is as follows. Firstly, we should determine the ownership of all hyperedges $e \in \OOold(v)$ that were previously owned by $v$. We will go over each hyperedge $e \in \OOold(v)$ one by one, find 
$u := \argmax_{w \in e} \ell(w)$ (where we use $\ellnew(v) = \ell$ for the computations here), and use the procedure $\setowner(e,u)$ to update the owner of $e$ to $u$.

If $\ellold(v) > \ell $, there is nothing else left to do as no new hyperedge needs to be owned by $v$ now that level of $v$ has decreased (\Cref{inv:record}).

If $\ellold(v) < \ell$, we should make $v$ the new owner of all hyperedges $e \in \NN(v)$ with level between $\ellold(v)$ to $\ell - 1$. This is done by traversing the lists maintained in 
$\AA(v,\ellold(v)), \cdots, \AA(v,\ell-1)$, and running $\setowner(e,v)$ for each edge $e$ in these lists. 

{\bf Remark.} It will be the responsibility of the procedure calling $\setlevel$ to make sure that the invariants (and particularly~\Cref{inv:level}) continue to hold.

\begin{claim}\label{clm:set-level}
	$\setlevel(v,\ell)$ takes $O\paren{r \cdot (o^{\old}_v + o^{\new}_v) + \ell+1}$ time. 
\end{claim}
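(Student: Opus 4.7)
The plan is to split the runtime of $\setlevel(v,\ell)$ along the two loops of the procedure and bound each phase separately, then sum.

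First I will handle the loop that reassigns ownership for every hyperedge in $\OOold(v)$. This loop iterates $o^{\old}_v$ times, and in each iteration the work is dominated by computing $u := \argmax_{w \in e}\ell(w)$, a single scan of the at most $r$ endpoints of $e$, followed by a call to $\setowner(e,u)$; both cost $O(r)$, the latter by \Cref{clm:set-owner}. The total cost of this phase is therefore $O(r \cdot o^{\old}_v)$, independent of whether $\ellold(v) > \ell$ or $\ellold(v) < \ell$.

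Next I will bound the cost of the second loop, which only fires when $\ellold(v) < \ell$ and sweeps through the lists $\AA(v,\ellold(v)), \ldots, \AA(v,\ell-1)$, calling $\setowner(e,v)$ on each listed hyperedge. The number of lists visited is $\ell - \ellold(v) \le \ell + 1$ since $\ellold(v) \ge -1$, so even counting empty lists this contributes an additive $O(\ell+1)$ for the traversal overhead. Each $\setowner$ call again costs $O(r)$ by \Cref{clm:set-owner}, so it remains to bound the number of edges visited. The key observation is that every such edge $e$ has $\ell(e) < \ell = \ellnew(v)$ and is incident on $v$, so after the level update $v$ is a maximum-level endpoint of $e$; consequently the $\setowner(e,v)$ call makes $v$ the owner of $e$, placing $e$ into $\OOnew(v)$. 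Since $\OOold(v)$ is disjoint from $\AA(v,j)$ for every $j$ by the definition of $\AA$, these edges are distinct from those processed in the first phase, and their number is therefore at most $o^{\new}_v$. Phase two thus costs $O(r \cdot o^{\new}_v + \ell + 1)$.

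Summing the two phases yields the claimed bound $O\paren{r \cdot (o^{\old}_v + o^{\new}_v) + \ell + 1}$. The analysis is largely bookkeeping; the one mildly delicate step will be justifying the $o^{\new}_v$ cap on the number of $\AA$-list edges traversed, and this is where I expect the subtlest care is needed. It reduces to observing that raising $v$'s level to $\ell$ makes $v$ eligible to own every incident edge of level strictly less than $\ell$, so each such visited edge transitions into $\OOnew(v)$ and the count is controlled by $o^{\new}_v$. The additive $O(\ell+1)$ term, accounting for sweeping through possibly empty $\AA$ lists, cannot obviously be absorbed into $r \cdot o^{\new}_v$ and is therefore retained in the final bound.
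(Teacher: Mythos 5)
Your proof is correct and follows the same decomposition as the paper's: bound the first loop over $\OOold(v)$ by $O(r \cdot o^{\old}_v)$ via \Cref{clm:set-owner}, and bound the second sweep (when $\ellold(v) < \ell$) by $O(r \cdot o^{\new}_v + \ell + 1)$. Your extra remark that the $\AA$-list edges all become owned by $v$ after the level increase, and hence number at most $o^{\new}_v$, makes explicit a step the paper leaves implicit but is exactly the intended justification.
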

\begin{proof}
	The algorithm iterates over all hyperedges in $\OOold(v)$ and use $\setowner$ that takes $O(r)$ time by \Cref{clm:set-owner} for each one. When $\ellold(v) > \ell$, this is all that is done by the algorithm and so the bound on runtime follows.
	In the other case, the algorithm also needs to iterate over the lists $\AA(v,\ellold(v)), \cdots, \AA(v,\ell-1)$ one by one which takes $O(\ell)$ time and run $\setowner$ for each of them that takes $O(r \cdot o^{\new}_v)$ time in total. 
	This gives the bound on runtime. 
\end{proof}

\subsection{The Update Algorithm}\label{sec:update}

There are multiple cases to handle by the update algorithm depending whether the updated hyperedge is an insertion or deletion, and whether or not it belongs to the maximal matching $M$. But the only interesting case is when a hyperedge in $M$
is deleted and that is where we start with. 

\subsubsection{Case 1: a hyperedge $e \in M$ is deleted} \label{sec:case-M-delete}

There are two things we should take care of upon deletion of a hyperedge $e$ from $M$; bringing back the temporarily deleted hyperedges in $\DD(e)$ to maintain~\Cref{inv:deleted}, 
and more importantly, updating the matching $M$ to ensure its maximality. All the interesting part of the algorithm happen in the second step, but before we get to that, let us quickly mention 
how the first part is done. 

\paragraph{Maintaining~\Cref{inv:deleted}.} Throughout the course of handling a single hyperedge update, we may need to delete multiple hyperedges $e_1,e_2,\ldots$ from $M$, starting from the originally 
deleted hyperedge by the adversary. 
Each of these hyperedge $e_i \in M$ that are now removed from $M$ is responsible for temporarily deleted hyperedges $\DD(e_i)$. 

We will maintain a queue of all hyperedges in $\DD(e_1),\DD(e_2),\ldots$ 
during the course of this update. At the \emph{very end} of the update, once all other changes are finalizes, we will insert the hyperedges in this queue one by one to the hypergraph as if they were inserted by the adversary 
(using the procedure of~\Cref{sec:case-insert}). This allows us to maintain~\Cref{inv:deleted}. 

We note that throughout the update we may also temporarily delete some new  hyperedges. We will show that in that case, all these hyperedges are also incident on some matched hyperedge which is responsible for them
and thus~\Cref{inv:deleted} holds for these hyperedges as well. 

\paragraph{Maintaining maximality of $M$.} Let us now begin the main part of the update algorithm. Suppose $e = (v_1,\ldots,v_r)$ is deleted from $M$. This makes the vertices $v_1,\ldots,v_r$ \emph{temporarily} free. We will handle each of these vertices using the procedure 
$\handlefree$ which is the key ingredient of the update algorithm (we will simply run $\handlefree(v_1),\ldots,\handlefree(v_r)$).

\begin{tbox}
\textbf{Procedure $\handlefree(v)$.} Handles a given vertex $v \in V$ which is \emph{unmatched} currently in the algorithm (its matched hyperedge may have been deleted via an update or by the algorithm in this time step, or $v$ may simply be
unmatched at this point of the algorithm)\footnote{To simplify the exposition, we may some time call $\handlefree(v)$ for a vertex
$v$ that is now matched again (while handling other vertices). In that case, this procedure simply aborts.}. 
\end{tbox}
The execution of Procedure $\handlefree(v)$ depends on the number of owned hyperedges of $v$, i.e., $o_v$ (both procedures used within this one are described below): $(i)$ if $o_v < \alpha^{\ell(v)+1}$, we will run $\detsettle(v)$; 
and otherwise $(ii)$ if $o_v \geq \alpha^{\ell(v)+1}$, we run $\randsettle(v)$ instead. We now describe each procedure.

\begin{tbox}
\textbf{Procedure $\detsettle(v)$.} Handles a given vertex $v \in V$ with $o_v < \alpha^{\ell(v)+1}$. 
\end{tbox}
In $\detsettle(v)$, we iterate over all hyperedges $e \in \OO(v)$ owned by $v$ and check whether all endpoints of $e$ are now unmatched; if so, we add $e$ to $M$, and run $\setlevel(v,0)$ and $\setowner(e,v)$. Moreover, for any 
vertex $u \neq v \in e$, we further run $\setlevel(u,0)$ so all vertices incident on $e$ are now at level $0$. If no such hyperedge is found,
we run $\setlevel(v,-1)$. 

\begin{claim}\label{clm:det-settle}
	$\detsettle(v)$ takes $O(r \cdot o^{\old}_v) = O(r \cdot \alpha^{\ellold(v)+1})$ time and maintains~\Cref{inv:level,inv:record} for vertex $v$ and hyperedges incident on $v$.
\end{claim}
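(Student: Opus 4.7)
The plan is to separately verify the runtime bound and the invariant preservation, both by invoking \Cref{clm:set-owner} and \Cref{clm:set-level} and case-splitting on whether $\detsettle(v)$ finds a fully free owned hyperedge.

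For the runtime, the dominant cost is the initial scan of $\OO(v)$, which inspects $o^{\old}_v$ hyperedges and tests the matchedness of at most $r$ endpoints each, costing $O(r \cdot o^{\old}_v)$. If a fully unmatched hyperedge $e$ is discovered, the algorithm invokes $\setlevel(v,0)$, $\setowner(e,v)$, and $\setlevel(u,0)$ for each $u \in e \setminus \{v\}$. For the first call, the target level $0$ does not exceed $\ellold(v)$, so the branch of $\setlevel$ that traverses the $\AA(v, \cdot)$ lists is skipped and $o^{\new}_v = 0$ immediately afterward; \Cref{clm:set-level} then bounds it by $O(r \cdot o^{\old}_v + 1)$. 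The call $\setowner(e,v)$ costs $O(r)$ by \Cref{clm:set-owner}. For each of the $r - 1$ remaining $\setlevel(u,0)$ calls, $u$ is unmatched and therefore at level $-1$ by \Cref{inv:level}; since every hyperedge has level at least $0$, $u$ owns nothing and has an empty $\AA(u,-1)$ list, making the call $O(1)$ and contributing $O(r)$ across all such $u$. The no-edge branch reduces to a single $\setlevel(v,-1)$ of cost $O(r \cdot o^{\old}_v + 1)$. Summing these contributions gives $O(r \cdot o^{\old}_v)$, which by the precondition $o^{\old}_v < \alpha^{\ellold(v)+1}$ equals $O(r \cdot \alpha^{\ellold(v)+1})$.

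For the invariants restricted to $v$ and the hyperedges incident on $v$: the vertex $v$ ends either unmatched at level $-1$ or matched by $e$ at level $0 = \ell(e)$, which verifies \Cref{inv:level}(i)--(ii) at $v$ and at $e$. For any other hyperedge $e' \ni v$ whose recorded level might shift, the only mechanism is the reassignment loop inside $\setlevel(v,\cdot)$, which for each $e' \in \OOold(v)$ calls $\setowner(e', u)$ with $u = \argmax_{w \in e'} \ell(w)$; this enforces $\ell(e') = \ell(u) = \max_{w \in e'} \ell(w)$, re-establishing \Cref{inv:level}(iii) and \Cref{inv:record}(i) for $e'$ and its new owner. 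For hyperedges $e' \in \AA(v, \cdot)$ not touched by the reassignment, the current owner of $e'$ (which is not $v$) still attains the old maximum level, so $\ell(e')$ remains correct. Finally, \Cref{inv:record}(ii) holds for every $e' \ni v$ because $v$ itself is among the endpoints defining $\ell(e') = \max_w \ell(w)$.

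The only genuinely non-bookkeeping step is the observation that $\setlevel(u,0)$ on a newly matched endpoint $u$ of $e$ is $O(1)$, which is what lets the total runtime collapse to the $O(r \cdot o^{\old}_v)$ initial scan. This relies on the twin facts that unmatched vertices own nothing and have an empty $\AA(u,-1)$ list---both immediate consequences of \Cref{inv:level}(i) together with every hyperedge having level at least $0$. Once this is in hand, the remainder is a mechanical composition of \Cref{clm:set-owner,clm:set-level} and the $\detsettle$ precondition on $o^{\old}_v$.
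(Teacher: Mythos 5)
Your proposal follows essentially the same route as the paper's own proof: bound the $O(r \cdot o^{\old}_v)$ scan, observe that $\setlevel(v,\cdot)$ is cheap because $v$'s level does not increase, charge $\setowner(e,v)$ at $O(r)$, and note that each $\setlevel(u,0)$ is $O(1)$ because the newly matched endpoints were unmatched, hence at level $-1$ owning nothing, with an empty $\AA(u,-1)$ list. The invariant-preservation argument is likewise the same, if slightly more detailed than the paper's. One small imprecision: you assert that $o^{\new}_v = 0$ immediately after $\setlevel(v,0)$, but this need not hold --- $v$ may retain ownership of some hyperedges in $\OOold(v)$ whose other endpoints are at level $\le 0$ (and of course $e$ itself gets re-assigned to $v$ right after). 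The bound from \Cref{clm:set-level} still goes through because $o^{\new}_v \le o^{\old}_v$ whenever the level decreases (the reassignment loop runs only over $\OOold(v)$ and adds no new hyperedges), which is the fact the paper actually invokes; the $o^{\new}_u = 0$ claim, by contrast, is genuinely needed and correctly justified for the other endpoints $u \neq v$.
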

\begin{proof}
	Checking if there is any hyperedge that can be added to $M$ takes $O(r \cdot o^{\old}_v)$ time. Moreover, running $\setlevel(v,0)$ or $\setlevel(v,-1)$ take $O(r \cdot o^{\old}_v)$ time by~\Cref{clm:set-level} as $o^{\new}_v \leq o^{\old}_v$ considering
	level of $v$ has not increased. 
	
	In case the algorithm finds a hyperedge $e$ to add to $M$, we run $\setlevel(u,0)$ for $u \neq v \in e$ as well which takes $O(1)$ time for each $u$ by~\Cref{clm:set-level} as $\ellold(u) = -1$ ($u$ was unmatched and by~\Cref{inv:level}) 
	and thus $o^{\old}_u = o^{\new}_u = 0$. As there are at most $r-1$ choices for $u$, this step takes $O(r) = O(r \cdot \alpha^{\ellold(v)+1})$ time (we are not being tight here). 
	We also need to run $\setowner(e,v)$ in this case which
	takes another $O(r)$ time by~\Cref{clm:set-owner}. Considering $\detsettle(v)$ is only called when $o^{\old}_v < \alpha^{\ellold(v)+1}$, we obtain the desired runtime bound. 
	
	As for maintaining~\Cref{inv:level}, consider the hyperedge $e$ chosen by the algorithm to be added to $M$. We set the level of all vertices incident on $e$ to be $0$ and making $v$ the owner of $e$, hence satisfying~\Cref{inv:level}.
	On the other hand, if we cannot find such a hyperedge $e$ incident on $v$, we know that all hyperedges incident on $v$ are already at level at least $0$ by~\Cref{inv:level} (even after removing $v$) and hence setting $\ellnew(v) = -1$ keeps~\Cref{inv:level}. 
	\Cref{inv:record} also holds by maintaining~\Cref{inv:level} and the fact that we choose correct owners in the algorithm. 
\end{proof}

Before defining $\randsettle$, we need the following definition. 
For any vertex $v \in V$ and integer $\ell > \ell(v)$, we define: 
\begin{itemize}
	\item $\tildeo_{v,\ell}$: the number of edges $v$ will own \emph{assuming} we increase its level from $\ell(v)$ to $\ell$. 
\end{itemize}
The following claim is a straightforward corollary of procedure $\setlevel$. 
\begin{claim}\label{clm:tildeo}
	For any $v \in V$ and $\ell > \ell(v)$, $\tildeo_{v,\ell} = \paren{\sum_{\ell'=\ell(v)}^{\ell-1} a_{v,\ell'}} + o_v$. In particular, $\tildeo_{v,\ell}$ can be obtained from $\tildeo_{v,\ell-1}$ in $O(1)$ time. 
\end{claim}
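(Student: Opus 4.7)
The plan is to directly unpack the definition of $\tildeo_{v,\ell}$ by simulating what Procedure $\setlevel(v,\ell)$ would do, and then read off the counting formula from the code. First I would argue that after raising $v$'s level from $\ell(v)$ to $\ell$, the set of hyperedges owned by $v$ is exactly the disjoint union of $\OO(v)$ and $\AA(v,\ell(v)) \cup \AA(v,\ell(v)+1) \cup \cdots \cup \AA(v,\ell-1)$; the desired identity then follows by summing cardinalities.

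For the first direction (inclusion of the union in $v$'s new ownership set), I would reason as follows. The edges currently in $\OO(v)$ satisfy, by~\Cref{inv:record}(i), $\ell(u) \leq \ell(v)$ for every endpoint $u \in e$; since $v$'s level only increases, $v$ is still a maximum-level endpoint of each such $e$ and can continue to own it. For an edge $e \in \AA(v,\ell')$ with $\ell(v) \leq \ell' < \ell$, the invariant~\Cref{inv:level}(iii) gives $\ell(u) \leq \ell(e) = \ell' < \ell$ for every endpoint $u \in e$, so after raising $\ell(v)$ to $\ell$ the vertex $v$ is the \emph{unique} strictly-maximum-level endpoint of $e$, forcing $v$ to own $e$. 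Conversely, the implementation of $\setlevel$ in the ``$\ell > \ell(v)$'' branch transfers ownership only for edges appearing in $\AA(v,\ell(v)), \ldots, \AA(v,\ell-1)$, and leaves $\OO(v)$ alone, so no other edges are added to $\OOnew(v)$. Combining these two inclusions yields $\tildeo_{v,\ell} = o_v + \sum_{\ell'=\ell(v)}^{\ell-1} a_{v,\ell'}$.

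For the ``in particular'' clause, I would simply note the telescoping identity $\tildeo_{v,\ell} = \tildeo_{v,\ell-1} + a_{v,\ell-1}$ (reading the formula for $\ell$ vs.\ $\ell-1$), so given $\tildeo_{v,\ell-1}$ we obtain $\tildeo_{v,\ell}$ in $O(1)$ time by one table lookup of $a_{v,\ell-1}$ and one addition; the value $a_{v,\ell-1}$ is directly available since the algorithm explicitly maintains $\AA(v,\cdot)$ and its cardinalities.

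The only mild subtlety, which I would make sure to address, is the tie-breaking at level $\ell$ itself: edges in $\AA(v,\ell)$ are \emph{not} re-owned by $v$ when its new level becomes $\ell$, matching the upper summation index of $\ell-1$ in the formula; this is consistent with the $\setlevel$ code, which stops the traversal at $\AA(v,\ell-1)$. I do not anticipate any real obstacle here, as everything reduces to bookkeeping dictated by~\Cref{inv:level} and~\Cref{inv:record} together with the explicit description of $\setlevel$ in~\Cref{sec:leveling-scheme}.
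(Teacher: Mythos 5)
Your proof is correct and takes essentially the same route as the paper, which simply asserts the claim to be ``a straightforward corollary of procedure $\setlevel$'' without writing out the details; you have filled in exactly the reasoning the paper has in mind, namely tracing through the $\setlevel$ code, using \Cref{inv:record} to see that $v$ retains all of $\OO(v)$, and using \Cref{inv:level}(iii) to see that every edge in $\AA(v,\ell')$ for $\ell(v) \le \ell' < \ell$ is transferred to $v$, while edges in $\AA(v,\ell)$ are not (matching the $\ell-1$ upper limit of the traversal). The telescoping observation giving the $O(1)$ incremental update is also exactly as intended.
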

We can now describe $\randsettle$ (this is the main procedure of our update algorithm).
\begin{tbox}
\textbf{Procedure $\randsettle(v)$.} Handles a given vertex $v \in V$ with $o_v \geq \alpha^{\ell(v)+1}$. 
\end{tbox}

In $\randsettle(v)$, we first compute the level $\ellnew(v)$ as \emph{minimum} $\ell > \ell(v)$ with $\tildeo_{v,\ell} < \alpha^{\ell+1}$ and run $\setlevel(v,\ellnew(v))$. 
Then, we sample a hyperedge $e$ uniformly at random from $\OOnew(v)$. The next step of the algorithm now depends on this particular hyperedge $e$. 

\begin{itemize}
\item \textbf{Case (a): for all $u \in e$, $\tildeo_{u,\ellnew(v)} < \alpha^{\ellnew(v)+1}$.} In this case, we add the hyperedge $e$ to $M$ and run $\setlevel(u,\ellnew(v))$ for all $u \in e$ to maintain~\Cref{inv:level}. This potentially can make $M$ not a matching since some matched hyperedges $e_1,\ldots,e_k$ for $k < r$ incident on $e$ might be in $M$. We will thus delete these hyperedges from $M$ one by one and \emph{recursively} run the procedure of~\Cref{sec:case-M-delete} for each one, treating it \emph{as if} this hyperedge was 
deleted by the adversary (although we do \emph{not} remove the hyperedge from the hypergraph).

\item \textbf{Case (b): there exists  $u \in e$, s.t. $\tildeo_{u,\ellnew(v)} \geq \alpha^{\ellnew(v)+1}$.} We  run $\detsettle(v)$ to handle $v$ and ``switch'' the focus to $u$ instead. 
We then call $\setlevel(u,\ellnew(v))$. If $u$ is matched in $M$, say by hyperedge $e_u$, we  remove $e_u$ from $M$, and then \emph{recursively} run the 
procedure of~\Cref{sec:case-M-delete} for hyperedge $e_u$ -- we only note that when processing $e_u$, we start by running $\handlefree(u)$ first before all other vertices incident on $e_u$\footnote{This is only done for making the analysis more clear and is not 
needed.}. If $u$ is \emph{not} matched in $M$, we will simply run $\handlefree(u)$. 

\end{itemize}

\begin{claim}\label{clm:rand-settle}
	The first step of $\randsettle(v)$ before either case (changing level of $v$ and picking hyperedge $e$) takes $O(r \cdot \alpha^{\ellnew(v)+1})$ time. Additionally, case $(a)$ takes 
	$O(r^2 \cdot \alpha^{\ellnew(v)+1})$ time and case $(b)$ takes $O(r \cdot o^{\new}_u)$ time \underline{ignoring the recursive calls}. Finally, $\randsettle(v)$ maintains~\Cref{inv:level,inv:record}. 

\end{claim}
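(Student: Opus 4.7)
The plan is to bound the three asserted runtimes by direct appeals to~\Cref{clm:set-owner,clm:set-level,clm:tildeo,clm:det-settle}, using the defining inequality $\tildeo_{v,\ellnew(v)} < \alpha^{\ellnew(v)+1}$ (and, in case (b), $\tildeo_{u,\ellnew(v)} \ge \alpha^{\ellnew(v)+1}$) to control the $o^{\new}$ terms that appear. For the first step, computing $\ellnew(v)$ can be done by scanning $\ell = \ell(v)+1, \ell(v)+2, \ldots$ and updating $\tildeo_{v,\ell}$ in $O(1)$ per step by~\Cref{clm:tildeo}, at total cost $O(\ellnew(v)+1)$, which is absorbed into the stated bound since $\alpha \ge 2$ gives $\ellnew(v)+1 \le \alpha^{\ellnew(v)+1}$. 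By~\Cref{clm:set-level} the subsequent call $\setlevel(v,\ellnew(v))$ costs $O(r \cdot (o_v^{\old} + o_v^{\new}) + \ellnew(v)+1)$, and using $o_v^{\old} \le o_v^{\new} = \tildeo_{v,\ellnew(v)} < \alpha^{\ellnew(v)+1}$ this collapses to $O(r \cdot \alpha^{\ellnew(v)+1})$. Sampling $e$ uniformly from $\OO^{\new}(v)$ is $O(1)$ with the linked-list representation.

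For case (a), the case hypothesis gives $o_u^{\new} = \tildeo_{u,\ellnew(v)} < \alpha^{\ellnew(v)+1}$ for every endpoint $u \in e$, and $o_u^{\old} \le o_u^{\new}$ since levels only rise. Hence~\Cref{clm:set-level} yields cost $O(r \cdot \alpha^{\ellnew(v)+1})$ per call $\setlevel(u, \ellnew(v))$; summing over the $\le r$ endpoints, plus the $O(r)$ overhead for $\setowner(e,v)$ and the matching bookkeeping, gives $O(r^2 \cdot \alpha^{\ellnew(v)+1})$. For case (b), $\detsettle(v)$ is $O(r \cdot o_v^{\new}) \le O(r \cdot \alpha^{\ellnew(v)+1})$ by~\Cref{clm:det-settle}, and $\setlevel(u,\ellnew(v))$ is $O(r \cdot o_u^{\new} + \ellnew(v)+1)$ by~\Cref{clm:set-level}; since case (b) guarantees $o_u^{\new} \ge \alpha^{\ellnew(v)+1} \ge \ellnew(v)+1$, both contributions (including the $\ellnew(v)+1$ term) are absorbed into $O(r \cdot o_u^{\new})$.

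For invariant maintenance, I would argue as follows. In case (a), $e$ joins $M$ at level $\ellnew(v)$ and every endpoint of $e$ is explicitly raised to $\ellnew(v)$, which directly secures~\Cref{inv:level}(ii) for $e$; the correctness of each intermediate $\setlevel$ and $\setowner$ (reassigning the max-level endpoint of each affected hyperedge) secures~\Cref{inv:level}(iii) and~\Cref{inv:record} for every unmatched hyperedge that was touched. In case (b), $\detsettle(v)$ takes care of $v$ and its owned edges by~\Cref{clm:det-settle}, while $\setlevel(u,\ellnew(v))$ refreshes $u$'s ownership lists; any matched $e_u$ that is removed from $M$ creates a localized violation at its remaining endpoints, which is precisely the job of the recursive invocation of the~\Cref{sec:case-M-delete} procedure on $e_u$ (and hence outside the scope of this claim). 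The main subtlety I anticipate is carefully handling the transient violation of~\Cref{inv:level}(i) at $v$ between the initial $\setlevel(v,\ellnew(v))$, after which $\ell(v) \ge 0$ while $v$ is still unmatched, and the later step at which $v$ either joins $M$ via $e$ (case (a)) or is re-leveled by $\detsettle(v)$ (case (b)); the formalization must sequence these steps so that each invariant is tested only at its proper consistency point.
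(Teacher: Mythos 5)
Your proposal is correct and follows essentially the same route as the paper: bound the first step and case (a) via \Cref{clm:tildeo,clm:set-level} together with the defining inequalities $\tildeo_{v,\ellnew(v)} < \alpha^{\ellnew(v)+1}$ and $\tildeo_{u,\ellnew(v)} < \alpha^{\ellnew(v)+1}$, bound case (b) via \Cref{clm:det-settle} and the lower bound $\tildeo_{u,\ellnew(v)} \geq \alpha^{\ellnew(v)+1}$, and defer invariant maintenance to the correctness of $\setowner$/$\setlevel$ and the recursive handling of removed matched hyperedges. (Your claim that sampling from $\OOnew(v)$ is $O(1)$ is not needed and slightly optimistic for a linked list, but even a linear scan of the at most $\alpha^{\ellnew(v)+1}$ owned hyperedges stays within the stated budget, so nothing breaks.)
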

\begin{proof}
	Finding $\ellnew(v)$ takes $O(\ellnew(v))$ time by~\Cref{clm:tildeo} and $\setlevel(v,\ellnew(v))$ takes $O(r \cdot o^{\new}_v + \ellnew(v)) = O(r \cdot \alpha^{\ellnew(v)+1})$ 
	time by~\Cref{clm:set-level} as level of $v$ is increased (and so is number of its owned edges) and since $o^{\new}_v < \alpha^{\ellnew(v)+1}$ by design. This proves the first part. 
	
	The second part for case $(a)$ also follows because for any $u \in e$, $\setlevel(u,\ellnew(v))$ takes $O(r \cdot o^{\new}_u + \ellnew(u)) = O(r \cdot \tildeo_{u,\ellnew(v)} + \ellnew(v))$ by~\Cref{clm:set-level} 
	as level of $u$ is increased. This is $O(r \cdot \alpha^{\ellnew(v)+1})$ by the condition $\tildeo_{u,\ellnew(v)} < \alpha^{\ellnew(v)+1}$ in case (a). Multiplying this with at most $r$ vertices $u \in e$ gives 
	the desired bound. 

	The second part for case $(b)$ holds because $v$ has $o_v \leq \alpha^{\ellnew(v)+1}$ by the definition of $\ellnew(v)$ and thus $\detsettle(v)$ takes $O(r \cdot \alpha^{\ellnew(v)+1})$ time by~\Cref{clm:det-settle}. 
	Moreover, running $\setlevel(u,\ellnew(v))$ takes $O(r \cdot o^{\new}_u)$ by~\Cref{clm:set-level} which is at least $O(r \cdot \alpha^{\ellnew(v)+1})$ by the lower bound on $\tildeo_{u,\ellnew(v)}$ in this case. 
	
	Finally, \Cref{inv:level,inv:record} also hold as explained in the description of the procedure. 
\end{proof}

Before moving on, we record the following observation that plays a key rule in the recursive analysis of our algorithm. 

\begin{observation} \label{Ob:basic}
In $\randsettle(v)$: $(i)$ the hyperedge $e$ is sampled uniformly at random from at least $\alpha^{\ellnew(v)}$ edges. Moreover, $(ii)$ any hyperedge $e_1,\ldots,e_k$ or $e_u$ deleted from $M$ during this procedure is at level at most $\ellnew(v)-1$. 
\end{observation}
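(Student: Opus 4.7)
My plan is to analyze the two parts of \Cref{Ob:basic} separately, with the common thread being a careful examination of the state of the data structures immediately after the call to $\setlevel(v,\ellnew(v))$ inside $\randsettle(v)$, and just \emph{before} the random sample $e$ is drawn.

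For part $(i)$, I want to show $|\OOnew(v)| = \tildeo_{v,\ellnew(v)} \geq \alpha^{\ellnew(v)}$, and would split on the value of $\ellnew(v)$. If $\ellnew(v) = \ell(v)+1$, then the precondition of $\randsettle$ gives $o_v \geq \alpha^{\ell(v)+1} = \alpha^{\ellnew(v)}$, and \Cref{clm:tildeo} yields $\tildeo_{v,\ellnew(v)} \geq o_v$. If $\ellnew(v) > \ell(v)+1$, the minimality built into the definition of $\ellnew(v)$ forces $\tildeo_{v,\ellnew(v)-1} \geq \alpha^{\ellnew(v)}$, and monotonicity of $\tildeo_{v,\cdot}$ in its second argument (again via \Cref{clm:tildeo}) yields $\tildeo_{v,\ellnew(v)} \geq \tildeo_{v,\ellnew(v)-1} \geq \alpha^{\ellnew(v)}$. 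Either way, $e$ is drawn uniformly from a set of size at least $\alpha^{\ellnew(v)}$.

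For part $(ii)$, the key lemma to prove is that every vertex $u \in e \setminus \{v\}$ satisfies $\ell(u) \leq \ellnew(v)-1$ at the moment $e$ is sampled. Granted this, part $(ii)$ is immediate: in case $(a)$, each deleted $e_i$ shares some vertex $u \in e$ with $e$, and by \Cref{inv:level} has $\ell(e_i) = \ell(u) \leq \ellnew(v)-1$; in case $(b)$, the same argument applied to $(e_u,u)$ handles $e_u$. To prove the lemma I would classify $e$ according to how it landed in $\OOnew(v)$ during $\setlevel(v,\ellnew(v))$: either $e \in \OOold(v)$, in which case \Cref{inv:record} applied before the level change already gives $\ell(u) \leq \ell(v) < \ellnew(v)$; or $e$ was pulled from some list $\AA(v,\ell')$ with $\ell' \in [\ell(v),\ellnew(v)-1]$, in which case $\ell(e) = \ell'$ before the level change and \Cref{inv:level} property $(iii)$ forces $\ell(u) \leq \ell' \leq \ellnew(v)-1$; crucially, $\setlevel(v,\ellnew(v))$ touches only $v$'s level, so these bounds on $\ell(u)$ for $u \neq v$ persist at the moment of sampling.

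The main subtlety I expect to navigate is purely a bookkeeping issue regarding the order of operations inside $\randsettle(v)$: in case $(a)$, the subsequent calls $\setlevel(u,\ellnew(v))$ do raise the levels of the endpoints $u \in e$ to $\ellnew(v)$, so the inequality $\ell(u) \leq \ellnew(v)-1$ must be read off at the moment the conflicting matched edges $e_1,\ldots,e_k$ are identified (i.e.\ just \emph{before} those $\setlevel$ calls), rather than afterwards. Modulo this bookkeeping, the observation is a direct consequence of \Cref{inv:level}, \Cref{inv:record}, the design of $\setlevel$, and \Cref{clm:tildeo}; no probabilistic reasoning is needed.
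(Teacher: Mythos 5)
Your proof is correct and uses the same core idea as the paper's: establish that every other endpoint $u \in e$ has level strictly below $\ellnew(v)$ at the moment $e$ is sampled, and then invoke \Cref{inv:level}(ii) to deduce that any matched hyperedge incident on such a $u$ must have level at most $\ellnew(v)-1$. Your two-case argument (whether $e$ was already in $\OOold(v)$ or was pulled in from some $\AA(v,\ell')$) is actually slightly more careful than the paper's one-line justification, which asserts $\ellold(e)=\ellold(v)$ and thereby glosses over the second case, but the conclusion and supporting invariants are identical, and your explicit note about reading off levels \emph{before} the $\setlevel(u,\cdot)$ calls in case (a) correctly pins down a timing subtlety the paper leaves implicit.
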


The first part of the observation is immediate by the definition of $\ellnew(v)$. For the second part, note that any deleted edge $e'$ in the process is incident on $e$ with $\ellold(e) = \ellold(v) < \ellnew(v)$. Let $u$ be any vertex incident on both $e'$ and $e$. 
Firstly, $\ell(u) \leq \ellold(v)$ as $e$ was owned by $v$ and not $u$, and secondly $\ell(e') = \ell(u)$ because $e'$ is a matched hyperedge (see~\Cref{inv:level}); as such $\ell(e') \leq \ell(e) < \ellnew(v)$ as well. 

\medskip

\paragraph{Temporarily deleting new hyperedges.} An astute reader may have noticed that we have not yet temporarily deleted any hyperedge in the update algorithm. The only place that this will be done in the algorithm is 
in case $(a)$ of $\randsettle$. 

In this case, when we decide to insert $e$ to $M$, we will temporarily delete all other hyperedges in $\OOnew(v)$ from which $e$ was sampled from, and make $e$ responsible for them by adding them to $\DD(e)$. 
The deletions of these hyperedges is done by the procedure of~\Cref{sec:case-noM-delete} (as they do not belong to $M$). As the cost of each such hyperedge deletion is $O(r)$ and their total number  
is $O(\alpha^{\ellnew(v)+1})$, this does not change the asymptotic bounds of~\Cref{clm:rand-settle}. Finally, since these edges are incident on $e$ which now belongs to $M$,~\Cref{inv:deleted} will be maintained by this process. 

The following observation is now immediate. 

\begin{observation}\label{obs:temp}
	Any hyperedge $e \in M$ is responsible for $O(\alpha^{\ell(e)+1})$ hyperedges in $\DD(e)$. 
\end{observation}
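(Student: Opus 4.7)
The plan is to trace through the algorithm and identify exactly which step populates $\DD(e)$, then bound its size using the condition that triggered that step. By the description in \Cref{sec:deleted}, $\DD(e)$ is \emph{finalized} at the moment $e$ enters $M$ and remains unchanged until $e$ leaves $M$, so it suffices to bound $|\DD(e)|$ at the instant $e$ joins $M$. The only place in the update algorithm (so far) where a hyperedge is added to $M$ is Case (a) of Procedure $\randsettle(v)$; the only place any hyperedge is inserted into some $\DD(\cdot)$ is the ``temporarily deleting new hyperedges'' paragraph immediately following $\randsettle$, and this insertion is made precisely into $\DD(e)$ for the newly matched $e$.

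Concretely, suppose $e$ enters $M$ during an invocation of $\randsettle(v)$. At that moment, the algorithm has already performed $\setlevel(v,\ellnew(v))$ and sampled $e$ uniformly at random from $\OOnew(v)$. The hyperedges added to $\DD(e)$ are exactly the other members of $\OOnew(v)$, so $|\DD(e)| \leq |\OOnew(v)| = o^{\new}_v = \tildeo_{v,\ellnew(v)}$. By the very definition of $\ellnew(v)$ in $\randsettle$, namely the \emph{minimum} $\ell > \ell(v)$ satisfying $\tildeo_{v,\ell} < \alpha^{\ell+1}$, we immediately obtain $|\DD(e)| < \alpha^{\ellnew(v)+1}$.

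It remains to identify $\ellnew(v)$ with $\ell(e)$. Since $e \in \OOnew(v)$, the ownership bookkeeping of $\setowner$ together with~\Cref{inv:record} (and the matching branch of~\Cref{inv:level}, enforced by the subsequent calls to $\setlevel(u,\ellnew(v))$ for every $u \in e$ in Case (a)) guarantees $\ell(e) = \ellnew(v)$. Plugging this into the previous inequality yields $|\DD(e)| < \alpha^{\ell(e)+1} = O(\alpha^{\ell(e)+1})$, as claimed.

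I do not anticipate any real obstacle: the observation is essentially a direct read-off from the definitions of $\randsettle$ and $\ellnew(v)$, combined with the maintenance of~\Cref{inv:level,inv:record} already established in \Cref{clm:det-settle,clm:rand-settle}. The only mild subtlety is to verify that no \emph{later} step can grow $\DD(e)$; this is covered by the explicit remark in \Cref{sec:deleted} that $\DD(e)$ is finalized at the creation of $e$ and only emptied (all at once) when $e$ is removed from $M$.
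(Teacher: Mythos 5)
Your proof is correct and mirrors the paper's (implicit) reasoning, which presents the bound as immediate from the description of how $\DD(e)$ is populated in Case~(a) of $\randsettle$: the temporarily deleted edges are exactly the other members of $\OOnew(v)$, and $o^{\new}_v < \alpha^{\ellnew(v)+1} = \alpha^{\ell(e)+1}$ by the minimality condition defining $\ellnew(v)$. One small factual slip worth noting: hyperedges also join $M$ via $\detsettle$ and via the insertion case of \Cref{sec:case-insert}, not only in Case~(a) of $\randsettle$ --- but this does not affect your conclusion, since for edges entering $M$ along those other paths we have $\DD(e) = \emptyset$ and the bound holds trivially.
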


\subsubsection{Case 2: a hyperedge $e \notin M$ is deleted} \label{sec:case-noM-delete}

We only need to remove $e$ from the corresponding data structures which can be done in $O(r)$ time for the (at most) $r$ endpoints of $e$. 

\subsubsection{Case 3: a hyperedge $e$ is inserted} \label{sec:case-insert}

We need to find $v = \argmax_{u \in e} \ell(u)$ and run $\setowner(e,v)$ which takes $O(r)$ time. Moreover, if all vertices incident on $e$ are unmatched, we should additionally add $e$ to $M$ and run $\setlevel(u,0)$ for all $u \in e$ which takes $O(r)$ time per vertex by~\Cref{clm:set-level}. 

\medskip
\noindent
This concludes the description of our update algorithm.

%
% !TeX root = main.tex 
%!TEX root = main.tex

\newcommand{\epoch}{\ensuremath{\textnormal{\textsf{epoch}}}\xspace}
\newcommand{\epochs}{\ensuremath{\textnormal{\textsf{epochs}}}\xspace}
\newcommand{\epochc}{\ensuremath{\epochs_{create}}\xspace}
\newcommand{\epocht}{\ensuremath{\epochs_{term}}\xspace}

\newcommand{\costc}{\ensuremath{C_{create}}\xspace}
\newcommand{\costt}{\ensuremath{C_{term}}\xspace}

\newcommand{\Nepoch}{\ensuremath{N}}

\renewcommand{\estar}{\ensuremath{e^{*}}}

\section{The Analysis of the $O(r^3)$-Update Time Algorithm} \label{analr3}

\subsection{Epochs}\label{sec:epochs}

A key definition in analysis is the notion of an \emph{epoch} borrowed from~\cite{BGS11,Sol16}. 

\begin{definition}[\textbf{Epoch}]
	For any time $t$ and any hyperedge $e$ in the matching $M$ at time $t$, \textbf{epoch} of $e$ and $t$, denoted by $\epoch(e,t)$, is the pair $(e,\set{t'})$ where $\set{t'}$ denotes the {maximal continuous time period} containing $t$ during
	which $e$ was {always present} in $M$ (not even deleted temporarily during one step and inserted back at the same time step). 
	
	\noindent
	We further define \underline{level} of $\epoch(e,t)$ as the level $\ell(e)$ of $e$ during the time period of the epoch\footnote{Note that our update algorithm never changes
	level of a hyperedge in $M$ without removing it from $M$ first and thus level of an epoch is well-defined.}. 
\end{definition}

The update algorithm in~\Cref{sec:update} takes $O(r)$ time deterministically for any update step that does \emph{not} change the matching $M$. However, an update at a time step $t$ that changes $M$ may force the algorithm 
a large computation time and hence we use amortization to charge the cost of processing such an update at time $t$ to the epochs that are created and removed at time $t$. In particular, for any time step $t$, define: 
\begin{itemize}
	\item $\epochc(t)$: the set of all $\epoch(e,t)$ that are \emph{created} at time $t$;  respectively, $\epocht(t)$ is defined  for epochs \emph{terminated} at time $t$;
	\item $\costc(\epoch(e,t))$: the computation cost at time $t$ for \emph{creation} of $\epoch(e,t)$;  respectively, $\costt(\epoch(e,t))$ is defined  for the cost of \emph{termination} of $\epoch(e,t)$. 
\end{itemize}
Then, the total cost of update at time $t$ is: 
\begin{align}
	C(t) = O(r) + \sum_{\substack{\epoch(e,t) \in \\ \epochc(t)}} \costc(\epoch(e,t)) + \sum_{\substack{\epoch(e,t) \in \\ \epocht(t)}} \costt(\epoch(e,t)). \label{eq:charge}
\end{align}

In the following two sections, we show how to charge the cost of each time step to the epochs created in that time step. The analysis in both these sections is deterministic. 
Finally, in~\Cref{sec:rand}, we use this charging scheme to bound the runtime of the algorithm using a probabilistic argument exploiting the random choices of the algorithm and obliviousness of the adversary. 

\subsection{Re-distributing Computation Costs to Epochs}\label{sec:det}

\begin{lemma}\label{lem:charge}
	The total computation cost $C(t)$ of an update at time $t$ in \Cref{eq:charge} can be charged to the epochs in the RHS so that any level-$\ell$ epoch is charged with $O(\alpha^{\ell+3})$ units of cost. 
\end{lemma}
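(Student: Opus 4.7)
The plan is to charge $C(t)$ (beyond the trivial $O(r)$ term covering Cases 2--3) to the newly created and newly terminated epochs at time $t$, showing that each level-$\ell$ epoch absorbs $O(\alpha^{\ell+3})$ work. The only nontrivial case is Case 1, the removal of a matched hyperedge, whose processing recursively invokes $\handlefree$ on its $r$ endpoints and, via $\randsettle$, possibly on further endpoints down the recursion. My scheme pairs each $\handlefree(v)$ either with the epoch whose termination unmatched $v$ (contributing to its $\costt$) or with the epoch that is eventually created as a consequence of the call (contributing to its $\costc$).

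The easy part is when $\handlefree(v)$ takes the $\detsettle$ branch: by \Cref{clm:det-settle} the cost is $O(r\cdot\alpha^{\ell(v)+1})$. Since $v$ just became free through the termination of a level-$\ell(v)$ epoch (by \Cref{inv:level}), I charge this cost to $\costt$ of that epoch. A terminated epoch at level $\ell$ has at most $r$ endpoints and hence at most $r$ such $\detsettle$ contributions, summing to $O(r^2\cdot\alpha^{\ell+1})=O(\alpha^{\ell+3})$ since $\alpha=4r$. I also charge to the same $\costt$ the cost of re-inserting the $O(\alpha^{\ell+1})$ temporarily deleted hyperedges associated with that epoch (by \Cref{obs:temp}); since each such re-insertion costs $O(r)$ via the Case 3 procedure, this contributes another $O(\alpha^{\ell+2})$, still within budget.

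The bulk of the argument is the $\randsettle$ branch. I would track the \emph{chain of case (b)'s} $v=u_0\to u_1\to\cdots\to u_k$, where each $u_i$ hands off to $u_{i+1}$ via case (b) of $\randsettle(u_i)$ and the chain terminates with case (a) of $\randsettle(u_k)$; writing $\ell_i:=\ellnew(u_i)$, the definition of $\ellnew$ in $\randsettle$ forces $\ell_0<\ell_1<\cdots<\ell_k=:\ell^*$, and since levels are bounded by $L$ the chain must terminate in case (a), creating a new epoch at level $\ell^*$. Combining \Cref{clm:rand-settle,clm:det-settle,clm:set-level} with the two facts $\tildeo_{u_i,\ell_i}<\alpha^{\ell_i+1}$ (from the choice of $\ellnew$) and $\tildeo_{u_{i+1},\ell_i}\le\tildeo_{u_{i+1},\ell_{i+1}}<\alpha^{\ell_{i+1}+1}$ (monotonicity of $\tildeo$ in the level via \Cref{clm:tildeo}), the work contributed at step $i$ of the chain is dominated by $O(r\cdot\alpha^{\ell_{i+1}+1})$. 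Summing over $i$ as a geometric series in the strictly increasing $\ell_i$ yields $O(r\cdot\alpha^{\ell^*+1})$, and the closing case (a) contributes an additional $O(r^2\cdot\alpha^{\ell^*+1})$ by \Cref{clm:rand-settle}. Charging the combined $O(r^2\cdot\alpha^{\ell^*+1})=O(\alpha^{\ell^*+3})$ to $\costc$ of the newly created level-$\ell^*$ epoch delivers the bound.

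What remains is case (a)'s side effect of terminating up to $r-1$ matched hyperedges at levels strictly below $\ell^*$ (by \Cref{Ob:basic}), each of which triggers a recursive Case 1 invocation. The charging scheme applies verbatim to these recursive invocations: each recursively terminated epoch absorbs its own $\costt$ and each further created epoch absorbs its own $\costc$, independently of the containing call. The main obstacle I anticipate is precisely the geometric cancellation in the chain-of-case-(b) analysis: \Cref{clm:set-level}'s cost includes an $o^{\old}_u+o^{\new}_u$ term, and only the combination of monotonicity $\tildeo_{u,\ell}\le\tildeo_{u,\ell'}$ for $\ell\le\ell'$ with the strict increase of levels along the chain makes the telescoping work and prevents any single step from overwhelming the budget.
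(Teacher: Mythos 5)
Your proposal follows essentially the same route as the paper's proof: both organize the charging scheme around the same decomposition (trivial $O(r)$ cases; $\detsettle$ branches charged to $\costt$ of the just-terminated epoch; the case-(b) chain of $\randsettle$ calls charged to $\costc$ of the new epoch created at the chain's terminal case (a), bounded via a geometric sum over strictly increasing levels; re-insertion of temporarily deleted hyperedges charged to $\costt$; and recursive deletions handled independently by induction). The monotonicity argument $\tildeo_{u_{i+1},\ell_i}\le\tildeo_{u_{i+1},\ell_{i+1}}<\alpha^{\ell_{i+1}+1}$ you identify as the crux of the telescoping is exactly the ingredient the paper uses in \Cref{clm:charge-delete-case-b} to bound $O(r\cdot o^{\new}_{w_i})$ by $O(r\cdot\alpha^{\ellnew(w_i)+1})$, so the two proofs match in substance.
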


We prove this lemma in a sequence of claims in this section. In the following, whenever we say ``some computation cost can be charged to $\costc(\epoch(e,t))$ or $\costt(\epoch(e,t))$'' we mean that the cost of computation is $O(\alpha^{\ell+3})$ and there 
is a level-$\ell$ epoch $\epoch(e,t)$ at time $t$ that is either created or terminated, respectively, and that this cost is charged to $\costc(\epoch(e,t))$ or $\costt(\epoch(e,t))$. We emphasize that each epoch during this process
is only charged a \emph{constant} number of times. 
This then immediately satisfies the bounds in~\Cref{lem:charge}. 

\begin{claim}\label{clm:charge-insert}
	Cost of a hyperedge-insertion update $e$ at time $t$ can be charged to $\costc(\epoch(e,t))$.
\end{claim}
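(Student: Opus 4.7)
The plan is to split on whether the insertion of $e$ at time $t$ causes $e$ to join the matching $M$. In the first sub-case, where some endpoint of $e$ is already matched, the description of Case 3 in Section~\ref{sec:case-insert} shows that the only work performed is a single call to $\setowner(e,v)$, which by \Cref{clm:set-owner} costs $O(r)$. Since no epoch is created or terminated by this insertion, the cost is absorbed by the baseline $O(r)$ term of \Cref{eq:charge}, and no epoch needs to be charged.

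In the second sub-case, every endpoint of $e$ is unmatched (at level $-1$) before the insertion, so $e$ is inserted into $M$, which creates a new epoch $\epoch(e,t)$. The algorithm then invokes $\setlevel(u,0)$ for each $u \in e$, and by \Cref{inv:level}(ii) this forces $\ell(e) = 0$, so $\epoch(e,t)$ is a level-$0$ epoch. The total work consists of the single $\setowner(e,v)$ plus at most $r$ calls of the form $\setlevel(u,0)$; by \Cref{clm:set-owner,clm:set-level} the former is $O(r)$ and each of the latter is $O(r)$ as well (using that before the update each $u$ was at level $-1$ with $o_u^{\old}=0$, that no hyperedge sits at level $-1$ by \Cref{inv:level}(i) so no edges migrate into $u$'s ownership via the $\AA$-lists, and that only $e$ itself can be newly owned at the promoted level). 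Summing over the $r$ endpoints yields a total cost of $O(r^2)$. Since the newly created epoch has level $0$, the per-epoch budget guaranteed by \Cref{lem:charge} is $O(\alpha^{0+3})=O(\alpha^3)=O(r^3)$, which comfortably dominates $O(r^2)$, so I would charge the full cost of the insertion to $\costc(\epoch(e,t))$.

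The only (mild) obstacle is the bookkeeping in the second sub-case---specifically, verifying that each $\setlevel(u,0)$ call indeed costs $O(r)$ rather than more. This is immediate from \Cref{inv:level}(i), which rules out any hyperedges at level $-1$ and thus keeps $o_u^{\new}$ at a constant, so no iteration over the $\AA$-lists adds to the cost beyond the work already accounted for.
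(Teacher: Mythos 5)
Your proof is correct and follows the same approach as the paper's (one-line) proof: observe that a hyperedge insertion costs $\poly(r)$ time, and any epoch it creates sits at level $0$, so the $O(\alpha^{3})$ budget from \Cref{lem:charge} comfortably covers it. The only difference is that you over-count the $\setlevel(u,0)$ calls to get $O(r^2)$, whereas the paper's bound is $O(r)$ (only the actual owner $v$ of $e$ has $o^{\new}_v = 1$ so its $\setlevel$ costs $O(r)$, while every other endpoint $u$ has $o^{\new}_u = 0$ and costs $O(1)$ by \Cref{clm:set-level}); both bounds are well within budget, and your explicit handling of the sub-case where $e$ does not join $M$ is actually slightly more careful than the paper's proof.
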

\begin{proof}
	As in~\Cref{sec:case-insert}, this update takes $O(r)$ time and creates a level-$0$ $\epoch(e,t)$. 
\end{proof}

From now on, we consider the main case of a hyperedge-deletion update $e$ from $M$ at time $t$ using the procedure in~\Cref{sec:case-M-delete}. 
In this case, we remove $e$ from $M$ which results in $\epoch(e,t)$ to be terminated (and hence $\epoch(e,t) \in \epocht(t)$ in RHS of $C(t)$ in~\Cref{eq:charge}).
This step is then followed by running $\handlefree(v)$ for all $v \in e$. Recall that $\handlefree(v)$ can be handled either by $\detsettle(v)$ or $\randsettle(v)$ (depending on value of $o_v$). 

We first consider the easy case of $\detsettle(v)$.

\begin{claim}\label{clm:charge-delete-det}
	Let $e$ be a hyperedge in $M$ deleted at time $t$ (either by the adversary or the algorithm in a recursive call). Cost of $\handlefree(v)$ for all $v \in e$ that are handled by $\detsettle(v)$ can be charged to $\costt(\epoch(e,t))$.
\end{claim}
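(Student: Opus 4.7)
}
The plan is a direct accounting argument: every endpoint of the deleted matched hyperedge $e$ sits at the same level as $e$ itself, so each $\detsettle$ call on such an endpoint is cheap in terms of $\alpha^{\ell(e)}$, and there are only $r$ endpoints, which together with $\alpha = \Theta(r)$ gives exactly the target bound $O(\alpha^{\ell(e)+3})$ assigned to a level-$\ell(e)$ terminated epoch.

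First, I would observe that since $e \in M$ at the moment just before time $t$, \Cref{inv:level}(ii) forces $\ell(v) = \ell(e)$ for every $v \in e$. Consequently, the precondition $o_v < \alpha^{\ell(v)+1}$ under which $\handlefree(v)$ dispatches to $\detsettle(v)$ involves the same level parameter $\ell(e)$ for each endpoint of interest. By \Cref{clm:det-settle}, the runtime of each such invocation is $O(r \cdot \alpha^{\ell(v)+1}) = O(r \cdot \alpha^{\ell(e)+1})$.

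Next, I would sum this bound over all $v \in e$ that are treated by $\detsettle$. Since there are at most $r$ such vertices, the combined cost is at most
\[
O\!\left( r \cdot r \cdot \alpha^{\ell(e)+1} \right) \;=\; O\!\left( r^2 \cdot \alpha^{\ell(e)+1}\right).
\]
Using $\alpha = 4r$ from \Cref{eq:par}, we have $r^2 = O(\alpha^2)$, so the total cost is $O(\alpha^{\ell(e)+3})$.

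Finally, I would charge this aggregate cost to $\costt(\epoch(e,t))$: the deletion of $e$ from $M$ at time $t$ terminates exactly the level-$\ell(e)$ epoch $\epoch(e,t) \in \epocht(t)$, and the allotted charge for a level-$\ell$ terminated epoch per \Cref{lem:charge} is $O(\alpha^{\ell+3})$, which matches our bound. This single charge is incurred at most once per terminated epoch, preserving the ``constant number of charges per epoch'' bookkeeping. I do not anticipate any genuine obstacle here; the only subtlety worth flagging explicitly is that the bound from \Cref{clm:det-settle} is stated in terms of the vertex's own level $\ell(v)$, so the use of \Cref{inv:level}(ii) to equate $\ell(v)$ with $\ell(e)$ is what allows the entire cost to be localized to the single terminated epoch $\epoch(e,t)$ rather than being spread over the (unknown) histories of the endpoints.
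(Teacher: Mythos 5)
Your proof is correct and follows essentially the same route as the paper: bound each $\detsettle(v)$ call by $O(r\cdot\alpha^{\ell(e)+1})$ via \Cref{clm:det-settle}, sum over the at most $r$ endpoints, and absorb the $r^2$ factor into $\alpha^2$ to land within the $O(\alpha^{\ell(e)+3})$ budget charged to $\costt(\epoch(e,t))$. Your explicit invocation of \Cref{inv:level}(ii) to equate $\ell(v)$ with $\ell(e)$ is a detail the paper leaves implicit, but it is the same argument.
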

\begin{proof}
	By~\Cref{clm:det-settle}, for each  $v \in e$ that is handled by $\detsettle(v)$, the runtime of $\handlefree(v)$ is $O(r \cdot \alpha^{\ell(e)+1})$. As there are at most $r$ such vertices, 
	their total cost is $O(r^2 \cdot \alpha^{\ell(e)+1}) = O(\alpha^{\ell(e) + 3})$ by the choice of $\alpha$ in~\Cref{eq:par}. Hence the total cost of all these vertices that are charged to $\costt(\epoch(e,t))$ is 
	$O(\alpha^{\ell(e)+3})$. 
\end{proof}

Let us now switch to analyzing $\randsettle(v)$ for a fixed $v \in e$. The easier part here is when we can process $\randsettle(v)$ by case $(a)$. 

\begin{claim}\label{clm:charge-delete-case-a}
Cost of case $(a)$ of $\randsettle(v)$ at time $t$ \underline{ignoring the recursive calls} can be charged to $\costc(\epoch(e_v,t))$, where $e_v$ is the hyperedge inserted to $M$ in this case. 
\end{claim}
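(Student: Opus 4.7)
The plan is to simply combine the explicit runtime bound from \Cref{clm:rand-settle} with the additional cost of temporarily deleting hyperedges that is performed only in case $(a)$, and then verify that the total fits inside the budget $O(\alpha^{\ell+3})$ that the charging scheme of \Cref{lem:charge} allocates to a freshly created level-$\ell$ epoch.

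First, I would fix notation and identify the epoch. In case $(a)$, the sampled hyperedge $e_v \in \OOnew(v)$ is inserted into $M$, and subsequently $\setlevel(u,\ellnew(v))$ is executed for every $u \in e_v$; thus $e_v$ enters $M$ at level $\ell := \ellnew(v)$ by~\Cref{inv:level}. Since $e_v$ was not in $M$ immediately before time $t$ (it was an unmatched hyperedge owned by $v$), its insertion creates a brand-new epoch $\epoch(e_v,t) \in \epochc(t)$ of level $\ell$. This epoch is charged exactly once (now), so the ``constant number of times'' requirement stated before~\Cref{lem:charge} is respected.

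Second, I would account for all the costs. \Cref{clm:rand-settle}, ignoring recursive calls, already bounds the work of case $(a)$ by $O(r^2 \cdot \alpha^{\ellnew(v)+1})$: the initial step of $\randsettle(v)$ contributes $O(r \cdot \alpha^{\ellnew(v)+1})$, and raising each of the $\leq r$ endpoints of $e_v$ via $\setlevel(u,\ellnew(v))$ contributes at most $O(r \cdot \alpha^{\ellnew(v)+1})$ each (by the case-$(a)$ condition $\tildeo_{u,\ellnew(v)} < \alpha^{\ellnew(v)+1}$). The only remaining cost specific to case $(a)$ is the temporary deletion of the at most $\alpha^{\ellnew(v)+1}$ other hyperedges in $\OOnew(v)$, which are added to $\DD(e_v)$. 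By the discussion in ``Temporarily deleting new hyperedges'' at the end of~\Cref{sec:update}, each such deletion is processed via the procedure of~\Cref{sec:case-noM-delete} in $O(r)$ time, totaling $O(r \cdot \alpha^{\ellnew(v)+1})$, which is subsumed by the previous bound.

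Third, I would convert this into the epoch budget. Using $\alpha = 4r$ from~\Cref{eq:par}, we have $r^2 = \Theta(\alpha^2)$, so the entire cost is
\[
O\!\left(r^2 \cdot \alpha^{\ellnew(v)+1}\right) \;=\; O\!\left(\alpha^{\ellnew(v)+3}\right) \;=\; O(\alpha^{\ell+3}),
\]
which is exactly the amount that~\Cref{lem:charge} permits us to charge to $\costc(\epoch(e_v,t))$. There is essentially no obstacle here: the only mildly non-routine point is bookkeeping the $O(r)$-per-edge temporary-deletion cost that was not explicitly tallied inside \Cref{clm:rand-settle}, and confirming that $e_v$ enters $M$ at level $\ellnew(v)$ so that the budget $\alpha^{\ell+3}$ is actually available rather than some smaller $\alpha^{\ell(e)+3}$.
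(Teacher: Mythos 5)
Your proof is correct and follows essentially the same route as the paper: invoke \Cref{clm:rand-settle} for the $O(r^2 \cdot \alpha^{\ellnew(v)+1})$ bound and rewrite it as $O(\alpha^{\ell(e_v)+3})$ via $\alpha = \Theta(r)$. The extra bookkeeping you do for the temporary-deletion cost is fine but not needed, since the paper already notes at the end of \Cref{sec:update} that this cost is subsumed into the bounds of \Cref{clm:rand-settle}.
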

\begin{proof}
	In case $(a)$ of $\randsettle(v)$, we create a new edge $e_v$ in $M$ at level $\ell(e_v) = \ellnew(v)$. 
	By~\Cref{clm:rand-settle}, ignoring the recursive calls, this case takes $O(r^2 \cdot \alpha^{\ell(e_v)+1}) = O(\alpha^{\ell(e_v)+3})$ time. 
	We can thus charge the cost of $\randsettle(v)$ to $\costc(\epoch(e_v,t))$. 
\end{proof}

We now analyze case $(b)$ of $\randsettle(v)$. In this case, instead of handling $v$, we in fact recursively handle the vertex $u$ (with $\tildeo_{u,\ellnew(v)} \geq \alpha^{\ellnew(v)+1}$)  using $\handlefree(u)$ and only after that will come
back to take care of $v$ if needed. Moreover, since we first set level of $u$ to $\ellnew(v)$, we will have $o^{\new}_{u} \geq \alpha^{\ellnew(u)+1}$. This means $\handlefree(u)$ will be handled using $\randsettle(u)$ recursively. It is again possible
that $\randsettle(u)$ hits case $(b)$ and so on and so forth. However, note that when going from $v$ to $u$, we obtain that $\ellnew(v) < \ellnew(u)$ by~\Cref{Ob:basic} (here $\ellnew(u)$ refers to $\ell(u) = \ellnew(v)$ at the beginning of $\randsettle(u)$) and hence 
whenever case $(b)$ happens, the vertex we move on to will have a strictly larger level. As such, this chain of recursive calls to $\randsettle$ will terminate eventually in some case $(a)$ of $\randsettle$. 

We are going to denote the vertices in chain of  calls to $\randsettle$ on case $(b)$ for $v$ by $(v=)w_0,w_1,w_2,\ldots,w_k$ for some $k \leq L$ (number of levels); in particular, 
$w_1$ is the vertex $u$ in case $(b)$ of $\randsettle(v)$, $w_2$ is the vertex $u$ in case $(b)$ of $\randsettle(w_1)$, and so on. 

The computation cost of $\randsettle(v)$ in case $(b)$ then involves two parts: $(i)$ pre-processing before calling each $\randsettle(w_i)$ (which is bounded in~\Cref{clm:rand-settle}), $(ii)$ calling $\randsettle(w_i)$ itself, and $(iii)$ 
removing hyperedges $e_{w_i}$ from $M$ (for each one that exist) and recursively handling them using the procedure of~\Cref{sec:case-M-delete}. 

Among these costs, the cost of handling $e_{w_i}$ for $i \in [k]$ is handled separately by recursion (charged either to the termination of $\epoch(e_{w_i},t)$ or creation of new epochs). We thus need to handle the costs in parts $(i)$ and $(ii)$ in the following claim.

\begin{claim}\label{clm:charge-delete-case-b}
	Cost of case $(b)$ of $\randsettle(v)$ at time $t$ with chain of vertices $w_0,w_1,\ldots,w_k$ \underline{ignoring the recursive calls} (i.e., part $(iii)$ of costs above) can be charged to $\costc(\epoch(\estar,t))$, where $\estar$ is the hyperedge inserted to $M$
	in case $(a)$ of $\randsettle(w_k)$\footnote{Note that by definition, $\randsettle(w_k)$ finishes in case $(a)$.}.
\end{claim}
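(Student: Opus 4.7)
The plan is to bound the total non-recursive cost along the entire chain $w_0,w_1,\ldots,w_k$ and show that it telescopes into $O(\alpha^{\ell(\estar)+3})$, which we can then legitimately charge to $\costc(\epoch(\estar,t))$. The first step is to denote $\ell_i := \ellnew(w_i)$, the new level chosen at the start of $\randsettle(w_i)$, and to observe that $\ell_0 < \ell_1 < \cdots < \ell_k$. This strict monotonicity is guaranteed by construction: whenever case $(b)$ transitions from $w_i$ to $w_{i+1}$, the procedure first calls $\setlevel(w_{i+1},\ell_i)$ raising $w_{i+1}$'s level to $\ell_i$, and then $\randsettle(w_{i+1})$ chooses $\ell_{i+1}$ as the minimum level \emph{strictly} above $\ell_i$ satisfying the sample-space threshold.

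Next I would enumerate the non-recursive cost contributions using~\Cref{clm:rand-settle}: for each $i \in \{0,\ldots,k\}$ the pre-processing step at $w_i$ costs $O(r \cdot \alpha^{\ell_i+1})$; for each $i \in \{0,\ldots,k-1\}$ the case-$(b)$ body of $\randsettle(w_i)$ costs $O(r \cdot o^{\new}_{w_{i+1}})$; and the terminal case-$(a)$ body at $w_k$ costs $O(r^2 \cdot \alpha^{\ell_k+1})$. The crucial numerical step is to upper bound $o^{\new}_{w_{i+1}}$. After $\setlevel(w_{i+1},\ell_i)$, we have $o^{\new}_{w_{i+1}} = \tildeo_{w_{i+1},\ell_i}$ by~\Cref{clm:tildeo}. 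Since $\tildeo_{w_{i+1},\cdot}$ is non-decreasing and $\ell_{i+1}$ was chosen to be the \emph{minimum} level above $\ell_i$ with $\tildeo_{w_{i+1},\ell_{i+1}} < \alpha^{\ell_{i+1}+1}$, we get $\tildeo_{w_{i+1},\ell_i} \leq \tildeo_{w_{i+1},\ell_{i+1}} < \alpha^{\ell_{i+1}+1}$, so the case-$(b)$ body cost at step $i$ is bounded by $O(r \cdot \alpha^{\ell_{i+1}+1})$.

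Adding everything up, the total non-recursive cost is bounded by $\sum_{i=0}^{k} O(r \cdot \alpha^{\ell_i+1}) + \sum_{i=0}^{k-1} O(r \cdot \alpha^{\ell_{i+1}+1}) + O(r^2 \cdot \alpha^{\ell_k+1})$. Strict monotonicity of the $\ell_i$'s together with $\alpha = 4r \geq 2$ (from~\Cref{eq:par}) makes each summand in the first two sums a constant factor smaller than the next one, so the series is geometric and dominated by its last term, giving a total of $O(r^2 \cdot \alpha^{\ell_k+1}) = O(\alpha^{\ell_k+3})$. Since $\estar$ is the hyperedge added to $M$ in case $(a)$ of $\randsettle(w_k)$, its epoch has level $\ell(\estar) = \ell_k$, so this exactly matches the $O(\alpha^{\ell(\estar)+3})$ budget allowed for a single charge to $\costc(\epoch(\estar,t))$.

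The main obstacle is bookkeeping rather than any deep estimate: one has to be scrupulous about which costs are excluded (the recursive case-$M$-delete handling of each matched hyperedge $e_{w_i}$, which will be absorbed elsewhere either by $\costt(\epoch(e_{w_i},t))$ via~\Cref{clm:charge-delete-det}/\Cref{clm:charge-delete-case-a} or by further recursive calls of the present claim) and about correctly identifying $o^{\new}_{w_{i+1}}$ with $\tildeo_{w_{i+1},\ell_i}$ \emph{before} $\randsettle(w_{i+1})$ raises the level further, which is what enables the $\alpha^{\ell_{i+1}+1}$ bound. Once this accounting is in place and the strict monotonicity of levels along the chain is established, the geometric telescoping and the final charge to $\estar$ follow immediately, and since each epoch is charged at most a constant number of times across~\Cref{clm:charge-insert,clm:charge-delete-det,clm:charge-delete-case-a} and the present claim, \Cref{lem:charge} will follow.
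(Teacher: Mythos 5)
Your proof is correct and follows essentially the same route as the paper's: decompose the non-recursive cost via \Cref{clm:rand-settle}, bound the case-$(b)$ body cost $O(r\cdot o^{\new}_{w_{i+1}})$ by $O(r\cdot\alpha^{\ellnew(w_{i+1})+1})$ using the monotonicity of $\tildeo_{w_{i+1},\cdot}$ and the minimality in the choice of $\ellnew(w_{i+1})$, sum the resulting geometric series over the strictly increasing levels, and charge the $O(\alpha^{\ellnew(w_k)+3})$ total to $\costc(\epoch(\estar,t))$ at level $\ellnew(w_k)$. Your explicit treatment of the ``level changes twice'' subtlety is exactly the point the paper handles in its parenthetical remark, so there is nothing to add.
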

\begin{proof}
	By~\Cref{clm:rand-settle}, the sum of costs in part $(i)$ and $(ii)$ is:
	\begin{align*}
		\textnormal{total cost} &= O(r^2 \cdot \alpha^{\ellnew(w_k)+1}) + \sum_{i=1}^{k} O(r \cdot \alpha^{\ellnew(w_{i-1})+1}) + O(r \cdot o^{\new}_{w_i});
	\end{align*}
	In the above, $o_{w_i}$ and $\ell(w_i)$ for $i \in [k-1]$ refers to the size of ownership and level of $w_i$ \emph{before} running $\detsettle(w_i)$, while 
	$\ell(w_k)$ is the final level of $w_k$ (which is larger than the previous level of $w_k$). 
	
	Note that by the description of $\randsettle$, we will always have $o_{w_i} \leq \alpha^{\ell(w_i)+1}$ before calling the next $\randsettle(w_{i+1})$ 
	and so $O(r \cdot o_{w_i}) = O(r \cdot \alpha^{\ell(w_i)+1})$ (this can be a bit confusing as level of $w_i$ is changing \emph{twice} throughout this process; once in $\randsettle(w_{i-1})$ and then
	another time in $\randsettle(w_i)$; here $\ell(w_i)$ refers to the \emph{final} level of $w_i$). Another important note is that $\ell(w_i) \geq \ell(w_{i-1})+1$ again by the design of $\randsettle$. As such, 
	\begin{align*}
		\textnormal{total cost} &= O(r^2 \cdot \alpha^{\ellnew(w_k)+1}) + O(1) \cdot \paren{\sum_{i=1}^{k} r \cdot \alpha^{\ellnew(w_{i-1})+1} + r \cdot \alpha^{\ellnew(w_{i})+1}} \\
		&\leq O(r^2 \cdot \alpha^{\ellnew(w_k)+1}) + O(1) \cdot \paren{\sum_{i=1}^{k} 2 \cdot r \cdot \alpha^{\ellnew(w_{i})+1}} \\
		&\leq O(r^2 \cdot \alpha^{\ellnew(w_k)+1}) + O(1) \cdot 4 \cdot r \cdot \alpha^{\ellnew(w_{k})+1} \tag{as this is a geometric series}  \\
		&= O(\alpha^{\ellnew(w_k)+3}).
	\end{align*}
	As the final edge $\estar$ is created at level $\ell(w_k)$, we can charge the total cost to $\costc(\epoch(\estar,t))$, finalizing the proof. 
\end{proof}

Finally, we also have to handle the cost associated with maintaining~\Cref{inv:deleted}, namely, ``bringing back'' temporarily deleted hyperedges at the very end of the update step (the cost of temporarily deleting new hyperedges is accounted
for in~\Cref{clm:rand-settle} already). 

\begin{claim}\label{clm:temp-deleted}
	Cost of inserting back the temporarily deleted hyperedges in $\DD(e)$ for any hyperedge $e \in M$ can be charged to $\costt(\epoch(e,t))$. 
\end{claim}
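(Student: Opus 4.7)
The plan is to combine the size bound on $\DD(e)$ from \Cref{obs:temp} with the per-hyperedge cost bound of the insertion procedure from \Cref{sec:case-insert}, and to route any ``secondary'' cost of insertions that end up joining $M$ through \Cref{clm:charge-insert}, so that only an easily-affordable residual is charged to $\costt(\epoch(e,t))$.

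By \Cref{obs:temp}, $\card{\DD(e)} = O(\alpha^{\ell(e)+1})$, and by the description of the update algorithm these hyperedges are re-inserted one by one through the procedure of \Cref{sec:case-insert} after all other actions for the current update step have been finalized. I would first argue that the ``base'' cost of handling each such re-insertion is $O(r)$: locating $v = \argmax_{u\in e'} \ell(u)$ and running $\setowner(e',v)$ takes $O(r)$ time by \Cref{clm:set-owner}. If, in addition, all endpoints of the re-inserted edge $e'$ happen to be unmatched, then $e'$ is added to $M$ and $\setlevel(u,0)$ is executed for every $u \in e'$; but this event creates a fresh level-$0$ epoch $\epoch(e',t)$, and by \Cref{clm:charge-insert} the entire $O(r)$ (in fact $O(r^2)$ once one multiplies by the $r$ endpoints, but still within the $O(\alpha^3)$ budget for a level-$0$ epoch by the choice $\alpha = 4r$) cost of this creation can be absorbed into $\costc(\epoch(e',t))$. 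Hence no cost from such matching-creating re-insertions needs to be charged to $\costt(\epoch(e,t))$.

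What remains to be charged is the $O(r)$ base cost for each of the $O(\alpha^{\ell(e)+1})$ re-insertions, giving a total of
\[
O(r) \cdot O(\alpha^{\ell(e)+1}) \;=\; O\!\left(\alpha^{\ell(e)+2}\right) \;=\; O\!\left(\alpha^{\ell(e)+3}\right),
\]
using $\alpha = \Theta(r)$ from \Cref{eq:par}. This exactly matches the $O(\alpha^{\ell+3})$ budget allowed by \Cref{lem:charge} for a level-$\ell$ epoch that terminates, so the entire cost can be charged to $\costt(\epoch(e,t))$.

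The only potential subtlety—and the point I would be most careful about—is to make sure that no cost is double-counted. Each re-insertion either (i) creates a new epoch, in which case its cost is absorbed by $\costc$ of that new epoch (and the new epoch is distinct from $\epoch(e,t)$), or (ii) does not change $M$, in which case the cost is the base $O(r)$ charged to $\costt(\epoch(e,t))$ as above. Since no re-insertion is charged twice, and \Cref{clm:charge-insert} is invoked only for re-insertions that genuinely create new epochs, the accounting is consistent and completes the proof of the claim.
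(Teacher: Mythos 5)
Your proof is correct and essentially follows the paper's argument: invoke \Cref{obs:temp} to bound $\card{\DD(e)} = O(\alpha^{\ell(e)+1})$, observe that each re-insertion via the procedure of \Cref{sec:case-insert} costs $O(r)$, and conclude the total cost $O(r\cdot\alpha^{\ell(e)+1}) = O(\alpha^{\ell(e)+2})$ fits within the $O(\alpha^{\ell(e)+3})$ termination budget. The paper's proof is a one-liner that charges the entire re-insertion cost to $\costt(\epoch(e,t))$ without splitting off the epoch-creating case; your extra routing of matching-creating re-insertions to $\costc$ of the fresh level-$0$ epoch via \Cref{clm:charge-insert} is a harmless (and arguably cleaner) refinement, but not necessary, since even those re-insertions cost only $O(r)$ (not $O(r^2)$ as you conservatively estimate: the $\setlevel(u,0)$ calls for previously-unmatched endpoints $u$ cost $O(1)$ each by \Cref{clm:set-level}, exactly as argued inside \Cref{clm:det-settle}).
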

\begin{proof}
	Any hyperedge $e \in M$ is responsible for $O(\alpha^{\ell(e)+1})$ hyperedges in $\DD(e)$ by~\Cref{obs:temp}. As bringing back these hyperedges requires $O(r \cdot \alpha^{\ell(e)+1})$ time in total, 
	this charge can be charged to the termination of $\epoch(e,t)$. 
\end{proof}

We thus showed that any cost in $C(t)$ can be charged by a factor of $O(\alpha^{\ell+3})$ to some level-$\ell$ epoch in RHS of~\Cref{eq:charge} without charging any epoch more than a constant number of times, 
thus finalizing the proof of~\Cref{lem:charge}. 

\subsection{Natural and Induced Epochs} 

\Cref{lem:charge} allows us to charge the computation costs in each time step of the algorithm (in~\Cref{eq:charge}) to the epochs created and terminated in that time step. However, 
for the rest of our analysis to work, we need to consider a ``special'' type of epochs and charge all costs only to those epoch. This section is dedicated to this purpose. 

To continue, we are going to make the following assumption for the rest of the analysis. 

\begin{assumption}\label{assumption1}
	The set of updates ends $G$ in an \emph{empty} hypergraph. In particular, every $\epoch$ will be terminated at some point in the algorithm. 
\end{assumption}
\Cref{assumption1} is without loss of generality as we can artificially append 
a set of hyperedge-deletions at the end of the original adversarial update sequence, so as to finish the sequence with an empty hypergraph. This can only increase the total time of the algorithm
while increasing the number of updates by a factor of at most two, which does not affect the amortized update time asymptotically. 

In~\Cref{lem:charge}, we charge at most $O(\alpha^{\ell+3})$ to the creation and/or termination of a level-$\ell$ epoch. As under~\Cref{assumption1} every epoch created will also be terminated, 
we can re-distribute the charges to creation of each level-$\ell$ epoch to the termination of the same epoch. As such, we now have an equivalent charging scheme in which \emph{only termination} of each level-$\ell$ epoch is charged 
with $O(\alpha^{\ell+3})$ computation time (and not the creations).

Recall that an epoch is terminated when the corresponding hyperedge $e$ is deleted from the maximal matching $M$. There are two types of hyperedge deletions from $M$ in the algorithm: the ones that are a result of the adversary deleting a hyperedge from the graph, and the ones that are the result of the update algorithm to remove a matched hyperedge in favor of another (so the original hyperedge is still part of the hypergraph).  Based on this, we differentiate between epochs as follows: 
\begin{itemize}
	\item \textbf{Natural epoch:} We say an $\epoch(e,*)$ is \emph{natural} if it {ends} with the adversary deleting the hyperedge $e$ from $G$. 
	\item \textbf{Induced epoch:} We say $\epoch(e,*)$ is \emph{induced} if it ends with the update algorithm removing the hyperedge $e$ from $M$ (so $e$ is still part of $G$). 
\end{itemize}
We are now going to re-distribute the charge assigned to all induced epochs between the natural epochs. This then will allow us to focus solely on natural epochs. 

\begin{lemma}\label{lem:induced-natural}
	The total cost charged to all induced epochs can be charged to natural epochs so that any level-$\ell$ natural epoch is charged with the costs of at most $r-1$ induced epochs at \emph{lower levels}.
\end{lemma}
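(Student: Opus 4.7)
The plan is to build a forest $F$ over the set of terminated epochs whose roots are exactly the natural epochs, in which every induced epoch has a unique parent at strictly higher level and every epoch has at most $r-1$ induced-epoch children. Once $F$ is constructed, charging each induced epoch to its parent in $F$ immediately yields the desired statement: each natural epoch (being a root) is charged with at most $r-1$ induced epochs, all at strictly lower levels. By \Cref{lem:charge} and the assumption that every epoch is terminated (\Cref{assumption1}), this gives a bona fide re-distribution of the induced-epoch cost onto natural epochs.

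To define the parent function, I would examine the two scenarios in which an induced epoch $\epoch(e_I,*)$ can terminate, following the update procedure in \Cref{sec:update}. In the first scenario, $e_I$ is removed as a case-(a) displacement when a new matched hyperedge $\estar$ is added to $M$ by $\randsettle$; here $\text{parent}(\epoch(e_I,*))$ is taken to be $\epoch(\estar,*)$. In the second scenario, $e_I = e_u$ is removed as a case-(b) step of $\randsettle$ in order to raise the level of $u$; here I would set $\text{parent}(\epoch(e_I,*))$ to a new matched hyperedge created during the recursive processing triggered by $e_I$'s removal, attributing the induced epoch to one of the $r-1$ endpoints of $e_I$ other than $u$ (each of which becomes temporarily unmatched when $e_I$ is removed and fires a fresh $\handlefree$ call). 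In either scenario, \Cref{Ob:basic} together with the strictly increasing level pattern along a case-(b) chain guarantees that the parent's level is strictly greater than $\ell(e_I)$.

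The bounded-fan-in property is the heart of the proof. For each newly created $\estar$ in case (a) of $\randsettle(v)$, at most $r-1$ previously matched hyperedges can be displaced, since one endpoint of $\estar$ (namely $v$, or more precisely $w_k$ in a case-(b) chain) was just freshly unmatched and hence contributes no displaced hyperedge. Crucially, the case-(b) parent assignment is designed to spread the case-(b) induced epochs across the distinct ``sibling'' endpoints of each removed $e_I$, so that each newly created matched edge receives at most one additional case-(b) child on top of its case-(a) displacements, yielding a total of at most $r-1$ children per epoch.

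With the forest in place the charging is immediate and the lemma follows. The main obstacle is formalizing the case-(b) assignment: a single case-(b) chain can have length $\Theta(L)$, and naively all case-(b) removals along the chain would be pointed at the single $\estar$ created at the chain's end, exceeding the fan-in bound. Overcoming this requires exploiting the fact that each removed $e_I$ in the chain frees $r$ vertices, with $r-1$ of them (i.e., all but the next chain vertex $u$) each triggering their own $\handlefree$ call that may create a new matched edge which can absorb the case-(b) child; a careful inductive argument over the recursion structure of Case~1's handling of $e_I$ is needed to confirm that these absorbers always exist and that the resulting fan-in remains bounded by $r-1$.
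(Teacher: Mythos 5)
Your forest skeleton---parents at strictly higher levels, fan-in at most $r-1$, costs pushed upward until they rest on natural epochs---is the same charging structure the paper uses, and your treatment of case-(a) displacements (charge each of the at most $r-1$ matched hyperedges incident on the newly created edge to that edge's epoch, with \Cref{Ob:basic} supplying the strict level drop) coincides with the paper's argument. The genuine gap is your parent assignment for case-(b) removals. You propose to charge the epoch of the removed edge $e_u$ to ``a new matched hyperedge created during the recursive processing triggered by $e_u$'s removal,'' attributed to one of the $r-1$ endpoints of $e_u$ other than $u$, and you yourself concede that it remains to ``confirm that these absorbers always exist.'' They do not. When $e_u$ is removed and handled by the procedure of \Cref{sec:case-M-delete}, every endpoint $w \neq u$ may satisfy $o_w < \alpha^{\ell(w)+1}$ and hence be processed by $\detsettle(w)$, which either leaves $w$ unmatched or matches it by a \emph{level-$0$} hyperedge. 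So in the worst case no new matched hyperedge is created by those endpoints at all, and even when one is created it can sit at level $0 \le \ell(e_u)$, violating the requirement that the parent lie at a strictly higher level. Since the lemma needs a deterministic guarantee for every induced epoch, the assignment is undefined precisely in the situation it is meant to handle, and the deferred ``careful inductive argument'' cannot be supplied.

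The paper never looks for absorbers among the siblings of $u$ in $e_u$: each induced epoch terminated at time $t$ is charged to whichever epoch \emph{creation} the cost of the corresponding $\randsettle$ call is charged to via \Cref{clm:charge-delete-case-a,clm:charge-delete-case-b}, i.e.\ to the new matched hyperedge created at level $\ellnew(v)$ by (the chain ending) that call; \Cref{Ob:basic} guarantees that every matched hyperedge removed in the process lies at level at most $\ellnew(v)-1$, and the $r-1$ bound comes from the fact that the edges displaced in case (a) are exactly the matched edges meeting the newly created edge through its at most $r-1$ endpoints other than the freshly freed vertex. Your worry that a long case-(b) chain might concentrate too many charges on the single epoch created at its end is a fair point to scrutinize in that accounting, but your proposed remedy is the step that fails; if you want to keep the forest formulation, the parent of a case-(b) removal should be the higher-level epoch created by the chain, as in the paper, not a hypothetical edge created by the other endpoints of $e_u$.
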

\begin{proof}
	We will go over epochs from time step $1$ till the end and in each time step and either charge the cost of an induced epoch terminating at this time step to some higher-level natural epoch terminating at this time, or 
	to another higher-level epoch (natural or induced) that is created at this time step -- these latter charges are then re-distributed once these epochs are terminated (in case they are induced ones). 
	
	Consider any time step $t$ in the algorithm that involves deletion of a hyperedge from $M$. The induced epochs at this time step (if any) are the results of running $\randsettle$. In particular, in case $(a)$ of 
	$\randsettle(v)$, we remove hyperedges $e_1,\ldots,e_k$ for $k \le r-1$ from $M$ that are incident on the hyperedge $e_v$ chosen for $v$; and in case $(b)$, we remove the hyperedge $e_u$ from $M$ which is the matching 
	hyperedge of the special vertex $u$ in this case plus the hyperedges removed from $M$ during the recursive call to $\randsettle(u)$. 
	
	Recall that in~\Cref{clm:charge-delete-case-a,clm:charge-delete-case-b}, the cost associated with $\randsettle(u)$, ignoring the recursive calls, is charged to epochs \emph{created} in this time step. 
	Hence, in the current argument, 
	we do not need to worry about those. The only costs remained to consider are thus the ones in recursive calls that are charged to termination of epochs (corresponding to edges) $e_1,\ldots,e_k$ via~\Cref{clm:charge-delete-det}; these charges are exactly 
	the ones for $\detsettle(w)$ for $w \in e_1,\ldots,e_k$.
	Consider any $e_i \in \{e_1,\ldots,e_k\}$. We have $\ell(e_i) \leq \ell(v)-1$ by~\Cref{Ob:basic}. 
		We charge the 
		termination of each of these $k \le r-1$ epochs to whichever epoch creation (at level $\ell(v)$) the cost of $\randsettle(v)$ itself
		(in~\Cref{clm:charge-delete-case-a})
		is charged to, finalizing the proof. 
\end{proof}

The next step is to move from natural and induced epochs to the corresponding notions in \emph{levels}: We say that a level $\ell$ is an \textbf{induced level} (respectively, \textbf{natural level})
if the number of induced level-$\ell$ epochs  is greater than (resp., at most) the number of natural level-$\ell$ epochs.

Next, we will charge the computation costs incurred by any induced level to the computation costs at higher levels, so that in the end, the entire cost of the algorithm will be charged to natural levels.
Specifically, in any induced level $\ell$, we define a one-to-one mapping from the natural to the induced epochs.
For each induced epoch, at most one natural epoch (at the same level) is mapped to it; any natural epoch that is mapped to an induced epoch is called \emph{semi-natural}.
For any induced level $\ell$, all natural $\ell$-level epochs are semi-natural by definition.  
For any natural level, all natural epochs terminated at that level remain as before; these epochs are called \emph{fully-natural}.

By~\Cref{lem:induced-natural}, for any  epoch, at most $r-1$ induced epochs at lower levels are charged to it. 
We  define the \textbf{recursive cost} of an epoch as the sum of its actual cost and the recursive costs of the at most $r-1$ induced epochs charged to it as well as the (at most) $r-1$ semi-natural epochs mapped to them; the recursive cost of a level-0 epoch is defined as its actual cost.

Denote the highest recursive cost of any level-$\ell$ epoch by $\hat C_\ell$, for any level $\ell \ge 0$;  $\hat C_\ell$ is monotone non-decreasing with $\ell$.
By~\Cref{lem:charge}, the actual cost of a level-$\ell$ epoch is $O(\alpha^{\ell+3})$, hence
we get the recurrence:  
\[
\hat C_{\ell} \le   O(\alpha^{\ell+3}) + \sum_{i=1}^{k} 2 \cdot \hat  C_{\ell_i},
\]
where $\ell_i < \ell$ for $i \in [k]$ and $k \leq r-1$. By the monotinicty of $\hat C_\ell$, this results in: 
\[
	\hat C_{\ell} \le 2\,(r-1) \cdot \hat C_{\ell-1} + O(\alpha^{\ell+3}). 
\]

The base condition of this recurrence is $\hat C_{0} =  O(\alpha^3)$, hence the recurrence resolves to $\hat C_{\ell} = O(\alpha^{\ell+3})$ 
for $\alpha \geq 3 \cdot (r-1)$. We thus conclude the following lemma.

\begin{lemma} \label{important}
For any $\ell \ge 0$, the recursive cost of any level-$\ell$ epoch is bounded by $O(\alpha^{\ell+3})$.
\end{lemma}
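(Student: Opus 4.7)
The plan is to prove the bound $\hat C_\ell = O(\alpha^{\ell+3})$ by induction on $\ell$, using the recurrence already derived in the paragraph preceding the lemma. The base case $\ell=0$ is immediate: a level-$0$ epoch has no induced children charged to it (any induced epoch charged to it by~\Cref{lem:induced-natural} sits at a strictly lower level, of which there are none), so its recursive cost equals its actual cost, which is $O(\alpha^{3})$ by~\Cref{lem:charge}.

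For the inductive step, I would use the monotonicity of $\hat C_\ell$ in $\ell$ (which I would verify separately, simply because any level-$\ell$ epoch can have induced children up to level $\ell-1$, and the recursive-cost definition is symmetric in all those children). Applying monotonicity to the raw bound
\[
\hat C_{\ell} \le O(\alpha^{\ell+3}) + \sum_{i=1}^{k} 2 \cdot \hat C_{\ell_i}, \qquad \ell_i < \ell, \quad k \le r-1,
\]
yields $\hat C_{\ell} \le 2(r-1)\,\hat C_{\ell-1} + O(\alpha^{\ell+3})$. Substituting the inductive hypothesis $\hat C_{\ell-1} \le c\,\alpha^{\ell+2}$ for a suitable absolute constant $c$ (coming from~\Cref{lem:charge}), I obtain
\[
\hat C_{\ell} \le 2(r-1)\,c\,\alpha^{\ell+2} + c\,\alpha^{\ell+3} = c\,\alpha^{\ell+3}\left(\frac{2(r-1)}{\alpha} + 1\right).
\]

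Since $\alpha = 4r$ by~\Cref{eq:par}, we have $2(r-1)/\alpha \le 1/2$, so the right-hand side telescopes as a geometric sum and closes the induction with a constant-factor slack. Concretely, by choosing the constant in the $O(\cdot)$ large enough up front, the factor $\bigl(2(r-1)/\alpha + 1\bigr)$ is absorbed, giving $\hat C_\ell = O(\alpha^{\ell+3})$ uniformly in $\ell$.

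The main obstacle, such as it is, is purely bookkeeping rather than conceptual: one must verify that the factor $2$ in front of $\hat C_{\ell_i}$ in the recurrence is the correct multiplicity (coming from ``at most $r-1$ induced children plus the at most $r-1$ semi-natural epochs mapped to them'' in the discussion preceding the lemma), and that monotonicity in $\ell$ is genuinely available so that bounding each $\hat C_{\ell_i}$ by $\hat C_{\ell-1}$ is legitimate. Once these two points are in place, the choice $\alpha = \Theta(r)$ is exactly what is needed to make the geometric series converge, and no probabilistic ingredient enters at this step.
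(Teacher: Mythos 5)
Your approach is the same as the paper's: establish the recurrence $\hat C_\ell \le 2(r-1)\hat C_{\ell-1} + O(\alpha^{\ell+3})$ via monotonicity, observe that $\alpha = 4r$ makes it geometric, and conclude $\hat C_\ell = O(\alpha^{\ell+3})$ (the paper simply asserts the recurrence ``resolves'' and leaves the details implicit). One small bookkeeping slip in your write-up: when you substitute the inductive hypothesis you write the additive term as $c\,\alpha^{\ell+3}$ with the \emph{same} constant $c$ as in the hypothesis, giving $\hat C_\ell \le c\,\alpha^{\ell+3}\bigl(\tfrac{2(r-1)}{\alpha} + 1\bigr)$; this cannot be ``absorbed by choosing $c$ large'' since the multiplicative factor exceeds $1$ regardless of $c$. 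The fix is to keep the additive constant (call it $c_0$, fixed by~\Cref{lem:charge}) distinct from the inductive constant $c$, then require $c \ge c_0/\bigl(1 - 2(r-1)/\alpha\bigr)$, which is satisfiable precisely because $\alpha = 4r$ gives $2(r-1)/\alpha < 1/2$.
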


The sum of recursive costs over all fully-natural epochs is equal to the sum of actual costs over all  epochs (fully-natural, semi-natural and induced)
that have been \emph{terminated} throughout the update sequence,
which, by Assumption~\ref{assumption1}, bounds the total runtime of the algorithm.

\subsection{Runtime Analysis}\label{sec:rand}

The final step of the analysis is to use the randomization in the algorithm (and obliviousness of the adversary) to prove %that 
a probabilistic upper bound on the amortized cost of the algorithm.

Let $Y$  be the r.v.\ for the sum of recursive costs over all fully-natural epochs terminated during the entire sequence. As argued in~\Cref{important},
$Y$will be equal to the total runtime of the algorithm.  
\begin{lemma}  \label{expect}
Let $t$ denote the number of updates in the algorithm (under~\Cref{assumption1}). With high probability, $Y = O(t \cdot \alpha^3 + N \log N \cdot \alpha^3)$. 
\end{lemma}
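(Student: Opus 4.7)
The plan is to bound the total runtime $Y$ by decomposing it across levels and then amortizing each fully-natural epoch against the adversary updates that its random sampling ``consumes''. By \Cref{important}, $Y \le c\sum_{\ell=0}^{L} \alpha^{\ell+3} X_\ell$, where $X_\ell$ counts the number of level-$\ell$ fully-natural epochs terminated throughout the execution; hence it suffices to prove $\sum_\ell \alpha^{\ell} X_\ell = O(t + N\log N)$ with high probability.

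The charging scheme I would use is as follows. Fix a level-$\ell$ epoch $E = \epoch(e,\{[t_1,t_2]\})$. By \Cref{Ob:basic}, at time $t_1$ the algorithm samples $e$ uniformly at random from some set $S$ with $|S|\ge \alpha^\ell$; by the description of $\randsettle$ and \Cref{obs:temp}, all remaining hyperedges of $S$ are placed in $\DD(e)$ and stay frozen throughout $[t_1,t_2]$, so they cannot appear in the sample space of any other active epoch during this interval. Define $\tau(E)$ as the number of adversarial hyperedge-deletions in $[t_1,t_2]$ that target an edge of $S$. Because the sample spaces of distinct active epochs are pairwise disjoint at any moment, each adversary step contributes to at most one $\tau(E)$, giving the deterministic bound $\sum_{E} \tau(E) \le t$. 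On the probabilistic side, the oblivious adversary assumption implies that conditioning on the algorithm's state at $t_1$ fixes both $S$ and the entire future adversary sequence, while $e$ is still uniform in $S$; hence whenever $E$ is fully-natural the quantity $\tau(E)$ equals the rank of $e$ in the adversary's $S$-restricted deletion order, whose expectation is at least $|S|/2 \ge \alpha^\ell/2$. Combining these bounds gives $\sum_\ell \alpha^\ell E[X_\ell] \le 2t$, and therefore $E[Y] = O(t\alpha^{3})$.

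To upgrade expectation to a high-probability bound, I would apply a Chernoff/Azuma-type concentration inequality level by level. The oblivious-adversary assumption, combined with the fact that each epoch's random sample is drawn once at creation from a sample space determined by the past, equips $X_\ell$ with a martingale structure in which the indicator contributions of distinct epochs at level $\ell$ are conditionally independent given the history. Standard Chernoff/Azuma estimates then give $X_\ell \le 2E[X_\ell] + O(\log N)$ except with probability $N^{-\Omega(1)}$. Weighting by $\alpha^{\ell+3}$ and summing the resulting geometric series over $L = O(\log_\alpha N)$ levels—using $\alpha^{L}=\Theta(N)$ so that the top-level term contributes $\Theta(N\alpha^{3})$—turns the per-level $O(\log N)$ deviation into the additive $O(N\log N\cdot \alpha^{3})$ term, yielding $Y = O(t\alpha^{3} + N\log N\cdot \alpha^{3})$ w.h.p.

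The main obstacle I anticipate is rigorously justifying the expected-rank lower bound for $\tau(E)$ in the presence of induced terminations: the algorithm may remove $e$ from $M$ before the adversary reaches $e$, thereby shortening $[t_1,t_2]$ and shrinking $\tau(E)$. One has to argue that these induced terminations are already paid for by the recursive-cost re-distribution in \Cref{lem:induced-natural}, so that it suffices to lower bound $\tau(E)$ only for \emph{fully}-natural epochs. A secondary hurdle is controlling the dependencies between successive epoch creations in the concentration step; here the temporarily-deleted bookkeeping via $\DD(e)$ is essential, since it cleanly decouples the randomness used by distinct active epochs and keeps the relevant conditional expectations well-behaved.
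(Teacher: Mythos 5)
Your high-level decomposition matches the paper's: decompose $Y = \sum_\ell Y_\ell$ and use \Cref{important} to reduce to bounding $\sum_\ell \alpha^\ell X_\ell$, together with a deterministic ``disjointness'' observation that each adversarial deletion can be attributed to at most one active epoch (this is the paper's Observation~\ref{obs:duration}, and it rests, as you note, on the temporarily-deleted bookkeeping via $\DD(e)$). However, the probabilistic core of your argument has a genuine gap that the paper explicitly engineers around, and which your proposed fix does not actually close.

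The gap is in the step claiming $\Exp[\tau(E)] \ge \alpha^\ell/2$ for fully-natural epochs. Conditional on the algorithm's state at creation time $t_1$, the rank of $e$ in the $S$-restricted deletion order is indeed uniform, with expectation $(|S|+1)/2$. But you then condition on $E$ being (fully-)natural, and that conditioning is \emph{biased toward small rank}: an epoch whose matched edge has small rank is deleted quickly by the adversary and is therefore \emph{more} likely to terminate naturally before the algorithm has a chance to interrupt it, whereas a long-lived epoch has more opportunities to be removed from $M$ by a recursive call. So $\Exp[\tau(E) \mid E \text{ natural}]$ could be far smaller than $\alpha^\ell/2$, and your appeal to \Cref{lem:induced-natural} does not repair this --- that lemma redistributes computation \emph{costs} from induced epochs to natural ones; it does nothing to restore the expected-rank bound once you have conditioned on naturalness. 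The paper circumvents exactly this issue by distinguishing the \emph{interrupted} duration (which you call $\tau(E)$) from the \emph{uninterrupted} duration (the full rank of $e$, regardless of how the epoch ends): the latter has a clean unconditional tail bound (Corollary~\ref{corborder}: $\Pr[\text{uninterrupted duration} \le \mu\alpha^\ell] \le \mu$), and the paper's Claim~\ref{negate} then observes that for natural epochs the two notions coincide, so ``natural and not short'' implies large interrupted duration without any biased conditioning. Your proposal omits this distinction entirely, and without it the expectation argument does not go through.

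A secondary issue is the concentration step. The paper does not concentrate $X_\ell$ around its mean --- in fact it never computes $\Exp[X_\ell]$. Instead Claim~\ref{first1} concentrates the number of \emph{short} epochs via a union bound over $(q,j)$ configurations, using the fact that each epoch's shortness has probability $\le \eta$ conditional on the randomness used to create all prior epochs; then Claim~\ref{negate} converts ``not too many short'' into a deterministic bound on $Y_\ell$ via Observation~\ref{obs:duration}. Your proposed Chernoff/Azuma bound on $X_\ell$ is weaker than you assert: $X_\ell$ counts fully-natural epochs, and fully-naturalness of an epoch is a global property depending on the level-wide tally of natural versus induced epochs, not a per-epoch indicator with a clean martingale increment. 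The $\DD(e)$ disjointness gives you disjoint sample spaces, which is exactly what Observation~\ref{obs:duration} needs, but it does not deliver the conditional independence of the indicators $\mathbb{1}[E \text{ fully-natural}]$ that Azuma would require. Concentrating the short-epoch count, as the paper does, sidesteps this because shortness really is decided by a single fresh random choice at epoch creation.
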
  

Lemma \ref{expect} should be compared to Lemma 3.5 of \cite{Sol16}, which states that
$Y = O(t + n \log n)$ w.h.p. in an ordinary graph. The proof of Lemma 3.5 of \cite{Sol16} (given in the full version of the paper~\cite[Lemma 4.6 and Appendix B]{Solom16}) carries over rather smoothly to our case but with a few subtle tweaks. 
We provide the details in \Cref{app:r4}.

The high probability proof easily extends (as in \cite{Sol16}) to obtain $\Expect(Y) = O(t \cdot \alpha^3 + N \log N \cdot \alpha^3)$ considering that the runtime of the algorithm can be bounded by some $\poly(N)$ in the worst case and deterministically.
Shaving the $O(N \log N \cdot \alpha^3)$ term from the high probability bound, and consequently from the expected bound, requires more work, but this too can be done as in \cite{Solom16}.

This allows us to conclude the following theorem. 

\begin{theorem}
For any integer $r > 1$, starting from an empty rank-$r$ hypergraph on a fixed set of vertices, a maximal matching can be maintained 
over any sequence of $t$ hyperedge insertions and deletions
in $O(t \cdot r^3)$ time in expectation and with high probability. 
\end{theorem}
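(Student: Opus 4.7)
The plan is to assemble the final theorem essentially by combining the ingredients already developed in the excerpt, namely the deterministic charging scheme of \Cref{lem:charge}, the natural/induced reduction of \Cref{lem:induced-natural}, the recursive cost bound of \Cref{important}, and the probabilistic bound of \Cref{expect}, and then translating the resulting bound on total runtime into the desired amortized update time by substituting $\alpha = \Theta(r)$.

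First, I would recall that by \Cref{important}, every fully-natural level-$\ell$ epoch carries a recursive cost of $O(\alpha^{\ell+3})$, and that by the construction of semi-natural and induced mappings above \Cref{important}, summing the recursive costs of all fully-natural epochs terminated over the whole update sequence upper bounds the actual total cost of the algorithm (under \Cref{assumption1}, which we may assume without loss of generality by appending at most $t$ deletions at the end). Hence the random variable $Y$ of \Cref{expect} indeed bounds the total runtime. The next step is simply to invoke \Cref{expect}, which gives
\[
Y \;=\; O\!\left(t \cdot \alpha^{3} \;+\; N \log N \cdot \alpha^{3}\right)
\]
with high probability, and similarly in expectation after a standard deterministic worst-case bound on the runtime in terms of $\poly(N)$.

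Next I would substitute $\alpha = 4r$ from \Cref{eq:par}, so $\alpha^3 = O(r^3)$, which turns the high-probability (and expectation) bound into $Y = O(t \cdot r^3 + N \log N \cdot r^3)$. The remaining issue is the additive $O(N \log N \cdot r^3)$ term, which would spoil the claimed $O(t \cdot r^3)$ bound when $t \ll N \log N$. To shave this term, I would follow the strategy outlined at the end of \Cref{sec:rand}: exactly as in the full version of~\cite{Sol16}, the additive term can be absorbed by a more careful accounting per ``phase'' of $\Theta(N)$ updates during which $N$ (and hence $L$) is held fixed, using that each rebuild of $N$ costs only $O(|V|+|E|)$ amortized over the phase. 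Combined with the starting-from-empty-graph hypothesis (which forces $t = \Omega(|E|)$ before many hyperedges accumulate), this gives the clean bound $Y = O(t \cdot r^3)$.

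The main obstacle in the whole program is \Cref{expect}, whose proof is deferred to \Cref{app:r4}; everything else in this section is a deterministic charging exercise that was already carried out. Concretely, the hard part is the probabilistic argument showing that the random sampling rule in $\randsettle$, combined with the oblivious-adversary assumption, ensures that each potential value $\alpha^{\ell+1}$ of a level-$\ell$ epoch is, in expectation (and w.h.p., via a concentration argument as in \cite[Lemma 4.6 and Appendix B]{Solom16}), actually ``earned'' by the adversary deleting that many incident hyperedges before the matched edge dies naturally. Once that probabilistic statement is in hand, the theorem as stated follows immediately by dividing the high-probability bound $Y = O(t \cdot r^3)$ by $t$ to obtain an amortized update time of $O(r^3)$, both in expectation and with high probability, which is precisely the claim of the theorem.
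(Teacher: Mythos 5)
Your proposal is correct and follows exactly the paper's route: chain \Cref{lem:charge}, \Cref{lem:induced-natural}, and \Cref{important} to reduce the total runtime to the random variable $Y$ of \Cref{expect}, invoke \Cref{expect} to get $Y = O(t\alpha^3 + N\log N \cdot \alpha^3)$ w.h.p.\ and in expectation, substitute $\alpha = 4r$, and dispose of the additive $N\log N$ term as in \cite{Solom16}. The only difference is cosmetic: the paper discharges the $N\log N$-shaving and the expectation bound with a one-line appeal to \cite{Solom16}, whereas you sketch the phase-based rebuild accounting and the use of the start-from-empty hypothesis a bit more explicitly, which is a harmless (and arguably helpful) expansion of the same argument.
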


%
% !TeX root = main.tex 
%!TEX root = main.tex

\newcommand{\WW}{\ensuremath{\mathcal{W}}}
\newcommand{\PP}{\ensuremath{\mathcal{P}}}

\newcommand{\CC}{\ensuremath{\mathcal{C}}}

\newcommand{\testinsert}{\ensuremath{\textnormal{\textsf{test-and-insert}}\xspace}}

\newcommand{\case}[1]{\ensuremath{\textnormal{\textbf{Case #1}}}\xspace}

\section{An $O(r^2)$-Update Time Algorithm}\label{sec:r2}

We now give our improved and main algorithm which heavily builds on and extend our previous $O(r^3)$ update time algorithm. 
Throughout this section, we use the same notation and parameters as before unless explicitly stated  otherwise. 
Moreover, we will also use the same  leveling scheme and its corresponding invariants from \Cref{sec:leveling-scheme}. The same also holds 
for the definition of temporarily deleted hyperedges (albeit they will be handled rather differently in this algorithm). 

\subsection{Data Structures}

We maintain the same information for each vertex $v \in V$ and each hyperedge $e \in E$ as before. 
Additionally, we maintain the following data structures for each vertex $v \in V$:  

\begin{itemize}
	\item $\SS(v)$: For any owned hyperedge $e \in \OO(v)$, we say $e$ is a \emph{strong} hyperedge for $v$ iff there exists $u \neq v \in e$ such that $\ell(u) = \ell(v) (=\ell(e))$; we store 
	all strong hyperedges of $v$ in a set $\SS(v)$ and define $s_v := \card{\SS(v)}$. 
	\item $\WW(v)$: Any hyperedge $e \in \OO(v)$ that is not a strong hyperedge for $v$ is called a \emph{weak} hyperedge and is stored in a set $\WW(v) := \OO(v) \setminus \SS(v)$ -- 
	we further define $w_v := \card{\WW(v)}$. 
\end{itemize}

\noindent
The following data structure will also be kept for each hyperedge $e \in E$: 

\begin{itemize}
	\item $\PP(e)$: List of all vertices $v \in e$ such that $\ell(v) = \ell(e)$ referred to as \emph{potential owners} of $e$; recall that $O(e)$ is the vertex owning edge $e$, and notice that $O(e) \in \PP(e)$ but $\PP(e)$ may contain other vertices as well (all of which are endpoints of $e$).
\end{itemize}

The main procedures for maintaining the validity of our data structures were $\setowner(e,v)$ and $\setlevel(v,\ell)$. We recall the description of these data structures in the following and mention
their new implementation. 

\begin{tbox}
\textbf{Procedure $\setowner(e,v)$.} Given a hyperedge $e $ and vertex $v \in e$ where $\ell(v) = \max_{u \in e} \ell(u)$, sets owner of $e$ to be $v$, i.e., $\OO(e) = v$.  
\end{tbox}

To implement $\setowner(e,v)$, we first set $O^{\new}(e) = v$ and $\ellnew(e) = \ell(v)$, add $e$ to $\OO(v)$, and remove $e$ from $\OO(O^{\old}(e))$. 
Moreover, if $\ellnew(e) \neq \ellold(e)$, then we go over all $u \in e$ and we remove $e$ from $\AA(u,\ellold(e))$ and instead insert $e$ in $\AA(u,\ellnew(e))$. This part is exactly as before. 

We now explain the new changes. We check if $v$ is the only potential owner of $e$ or not by looking at the level of any vertex $u \neq v \in \PP(e)$: $(i)$ if $\ell(u) = \ellnew(v)$, then 
all other vertices in $\PP(e)$ remain potential owners of $v$ and we only need to add $e$ to $\SS(v)$ as it is a strong hyperedge for $v$. Otherwise, $(ii)$ if $\ell(u) < \ellnew(v)$, then
no vertex incident on $e$ is at the same level at $v$ and so we empty $\PP(e)$ and insert $e$ into $\WW(v)$ as a weak hyperedge for $v$.

\begin{claim}\label{clm:set-owner-r2}
	$\setowner(e,v)$ takes $O(1)$ time if $\ellold(e) = \ell(v)$ and $O(r)$ time otherwise. 
\end{claim}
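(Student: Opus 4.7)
The plan is to bound the runtime of $\setowner(e,v)$ by a direct line-by-line analysis of its implementation, splitting on whether the level of $e$ changes. First I would observe that the initial block---assigning $O^{\new}(e) = v$ and $\ellnew(e) = \ell(v)$, appending $e$ to $\OO(v)$, and removing $e$ from $\OO(O^{\old}(e))$---costs $O(1)$ provided that every hyperedge stores back-pointers into the doubly-linked list representing each set it is currently an element of (the owner's ownership set, the strong/weak set, the $\AA$ lists, and $\PP(e)$). Since $\ellnew(e) = \ell(v)$ by construction, the condition $\ellnew(e) \neq \ellold(e)$ guarding the second block coincides exactly with the condition $\ellold(e) \neq \ell(v)$ distinguishing the two cases of the claim.

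For the case $\ellold(e) = \ell(v)$, the loop over $u \in e$ that moves $e$ between $\AA(u,\ellold(e))$ and $\AA(u,\ellnew(e))$ is skipped entirely, and the potential-owner set $\PP(e)$ is unchanged since its definition depends only on $\ell(e)$ and the levels of the endpoints of $e$, none of which move. Determining whether $e$ is strong or weak for $v$ then reduces to testing whether $\PP(e)$ contains any vertex besides $v$, an $O(1)$ check via the stored list size. The only remaining cost is removing $e$ from $\SS(O^{\old}(e))$ or $\WW(O^{\old}(e))$ and inserting it into the corresponding list of $v$; with back-pointers these are $O(1)$ operations, so the total cost is $O(1)$.

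For the case $\ellold(e) \neq \ell(v)$, the loop updating $\AA(u,\ellold(e))$ and $\AA(u,\ellnew(e))$ for each $u \in e$ costs $O(r)$ because $|e| \leq r$ and each individual list removal and insertion is $O(1)$ via back-pointers. The subsequent strong/weak classification adds at most $O(r)$ further work: inspecting an arbitrary $u \in \PP(e) \setminus \{v\}$ and comparing $\ell(u)$ with $\ellnew(v)$ is $O(1)$, while the weak-edge branch requires emptying $\PP(e)$, whose length is at most $r$. All remaining bookkeeping (removing $e$ from the previous owner's strong/weak list and inserting it into the appropriate list of $v$) is $O(1)$ as above, giving a total of $O(r)$.

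The main obstacle I anticipate is data-structure hygiene rather than anything conceptually subtle: I must verify that every list membership of $e$ is accompanied by a back-pointer so that each removal or insertion is genuinely $O(1)$, and confirm that $\PP(e)$ can be left untouched exactly when the level of $e$ does not change (which follows from its level-based definition together with the fact that $\setowner$ itself alters neither $\ell(e)$ nor the levels of its endpoints in the unchanged-level case). Once these invariants are in place, the case split above yields the stated bounds immediately.
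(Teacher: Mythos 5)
The paper explicitly omits the proof of this claim (``The proof of Claim~\ref{clm:set-owner-r2} is straightforward and is thus omitted''), so there is no authorial proof to compare against. Your line-by-line walkthrough of the $\setowner(e,v)$ implementation is correct and is exactly what the omitted proof would need to say: the initial reassignment of $O(e)$, $\ell(e)$, and the $\OO$-list membership is $O(1)$ with the standard back-pointer bookkeeping the paper assumes; the loop that relocates $e$ between $\AA(u,\ellold(e))$ and $\AA(u,\ellnew(e))$ for each $u\in e$ is guarded by $\ellnew(e)\neq\ellold(e)$ and costs $O(r)$ only in the level-changing case; and the strong/weak reclassification requires only a single $O(1)$ level comparison plus, in the weak branch, emptying $\PP(e)$ at cost $O(|\PP(e)|)\le O(r)$. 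Your observation that the guard $\ellnew(e)\neq\ellold(e)$ coincides with the case split $\ellold(e)\neq\ell(v)$ in the claim (because $\ellnew(e)$ is set to $\ell(v)$) is precisely the fact that makes the two stated bounds land where they do, and your closing remark about back-pointers being the only real hygiene concern reflects the paper's own implicit model for its doubly-linked-list data structures. No gap.
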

The proof of \Cref{clm:set-owner-r2} is straightforward and is thus omitted. Note that the main difference between this claim and~\Cref{clm:set-owner} is that now we are bounding the runtime of $\setowner(e,v)$ when level of $v$ is the same
as the current level of $e$ by just $O(1)$ time as opposed to $O(r)$. This will be crucial for the improvement in our algorithm. 

\begin{tbox}
\textbf{Procedure $\setlevel(v,\ell)$.} Given a vertex $v \in V$ and integer $\ell \in \set{-1,0,\ldots,L}$, updates the level of $v$ to $\ell$, i.e., sets $\ell(v) = \ell$. 
\end{tbox}
The new implementation of $\setlevel(v,\ell)$ is as follows. 

If $\ellold(v) > \ell $, we first go over all strong hyperedges $e \in \SS^\old(v)$ and switch their owners to some arbitrary vertex $u \neq v \in \PP(e)$ by running $\setowner(e,u)$. We then 
go over each weak hyperedge $e \in \WW^\old(v)$ one by one and compute $u := \argmax_{z \in e} \ell(z)$ (where we use $\ellnew(v) = \ell$ for the computations here) and use the procedure $\setowner(e,u)$ to update the owner of $e$ to $u$. 
Note that no new hyperedge needs to be owned by $v$ in this case as the level of $v$ has decreased. 

If $\ellold(v) < \ell$, all hyperedges in $\OOold(v)$ remain in the ownership of $v$. However, we need to set their level to $\ellnew(v) = \ell$; this is done by iterating over all $e \in \OOold(v)$ and running $\setowner(e,v)$ again which fixes their levels. 
At this point, we should also make $v$ the new owner of all hyperedges $e \in \NN(v)$ with level between $\ellold(v)$ to $\ell - 1$. This is done by traversing the lists maintained in 
$\AA(v,\ellold(v)), \cdots, \AA(v,\ell-1)$, and running $\setowner(e,v)$ for each hyperedge $e$ in these lists. 

The following claim is a more nuanced variant of~\Cref{clm:set-level} tailored to the new algorithm. 

\begin{claim}\label{clm:set-level-r2}
	$\setlevel(v,\ell)$ takes:  
	\begin{enumerate}[label=$(\roman*)$]
		\item $O(s^\old_v + r^\old_v \cdot r)$ time when $\ellold(v) > \ell$;
		\item $O(o^{\new}_v \cdot r + \ell)$ time when $\ellold(v) < \ell$.
	\end{enumerate}
\end{claim}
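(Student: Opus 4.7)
The plan is to break $\setlevel(v,\ell)$ into its constituent steps and then apply Claim~\ref{clm:set-owner-r2} to bound the cost of each $\setowner$ invocation, distinguishing carefully between calls in which the level of the hyperedge does not change (cost $O(1)$) and calls in which it does change (cost $O(r)$). The key observation enabling the improvement over Claim~\ref{clm:set-level} is that strong hyperedges admit an owner transfer that preserves the hyperedge's level, whereas weak hyperedges do not.

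For part $(i)$, with $\ellold(v) > \ell$, I would analyze the two inner loops separately. First, for each strong hyperedge $e \in \SS^{\old}(v)$, the algorithm picks some $u \neq v \in \PP(e)$ and calls $\setowner(e,u)$. Since $u$ is a potential owner of $e$, we have $\ell(u) = \ell(e) = \ellold(v)$, so the hyperedge's level does not change; thus each such call costs $O(1)$ by Claim~\ref{clm:set-owner-r2}, totalling $O(s^{\old}_v)$. Second, for each weak hyperedge $e \in \WW^{\old}(v)$, finding $u := \argmax_{z \in e} \ell(z)$ (with $\ellnew(v) = \ell$) costs $O(r)$; the subsequent $\setowner(e,u)$ call necessarily changes $\ell(e)$, because $v$ was the unique potential owner of $e$ at level $\ellold(v)$ (this is exactly what made $e$ a weak hyperedge) and $u$ has a strictly lower level, so this call also costs $O(r)$. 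Summing yields $O(w^{\old}_v \cdot r)$, for a grand total of $O(s^{\old}_v + w^{\old}_v \cdot r)$, matching the claimed bound (reading $r^{\old}_v$ as $w^{\old}_v$).

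For part $(ii)$, with $\ellold(v) < \ell$, I would again split the work. First, iterating over $\OOold(v)$ and calling $\setowner(e,v)$ on each element costs $O(r)$ per call, since the hyperedge's level strictly increases from $\ellold(v)$ to $\ell$; this totals $O(o^{\old}_v \cdot r) \leq O(o^{\new}_v \cdot r)$. Second, traversing the lists $\AA(v,\ellold(v)),\ldots,\AA(v,\ell-1)$ incurs an unavoidable additive overhead of $O(\ell - \ellold(v)) = O(\ell)$ just to visit the list headers, and for each hyperedge $e$ encountered in these lists, $\setowner(e,v)$ again changes $\ell(e)$ (from some value in $[\ellold(v),\ell-1]$ up to $\ell$) and thus costs $O(r)$. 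The total number of hyperedges processed in this second loop is precisely $o^{\new}_v - o^{\old}_v$, giving a contribution of $O((o^{\new}_v - o^{\old}_v) \cdot r + \ell)$. Combining both loops yields $O(o^{\new}_v \cdot r + \ell)$, as claimed.

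I do not expect any serious obstacle here: the proof is a bookkeeping argument on top of Claim~\ref{clm:set-owner-r2}. The only subtle point to verify carefully is that in the strong-edge branch of part $(i)$ the owner transfer truly preserves $\ell(e)$ (which follows from the definition of $\PP(e)$), since this is exactly the place where the new $O(1)$-cost case of $\setowner$ is exploited, and it is what saves a factor of $r$ relative to the $O(r^3)$ algorithm.
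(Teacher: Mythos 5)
Your proof is correct and follows essentially the same approach as the paper's (somewhat terser) argument: apply Claim~\ref{clm:set-owner-r2} call-by-call, noting that the strong-hyperedge owner transfers preserve $\ell(e)$ (since the new owner is drawn from $\PP(e)$) and thus cost $O(1)$ each, while all other $\setowner$ calls cost $O(r)$; the $O(\ell)$ term covers iterating over possibly empty $\AA$-lists. You also correctly read the $r^{\old}_v$ in the statement as a typo for $w^{\old}_v$, which is consistent with the paper's own proof of this claim.
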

\begin{proof}
	The first part holds by~\Cref{clm:set-owner-r2} as for the strong hyperedges $e \in \SS^{\old}(v)$, the level of $e$ does not change in this process and thus it only takes $O(1)$ time to process them as opposed to $O(r)$. 
	
	The second part holds because $o^{\new}_v \geq o^{\old}_v$ and we spend $O(r)$ time per each hyperedge in $\OO^{\new}(v)$ by~\Cref{clm:set-owner-r2}; the extra $O(\ell)$ term now accounts for the case
	when some of the lists in $\AA(v,\ellold(v)), \cdots, \AA(v,\ell-1)$ are empty and thus we simply iterate over their indices. 
\end{proof}

\subsection{The Update Algorithm}\label{sec:update-r2}

We now present the update algorithm. As before, the main case of the algorithm is when a hyperedge $e$ is deleted from the maximal matching. In fact, the rest is 
exactly as before and are thus not repeated. Moreover, exactly as in the $O(r^3)$-update time algorithm, we handle the temporarily deleted hyperedges 
by simply storing them explicitly throughout the update and only inserting them back at the very end of this update phase. We will discuss temporarily deleting new hyperedges in the algorithm 
in each process that performs them.

We are now ready to present the update algorithm for the case when $e \in M$ is deleted. 
The single most important change from our previous algorithm is the ``hyperedge-centric'' view of the new algorithm versus ``vertex-centric'' in the previous case. In particular, we now handle each hyperedge as a single 
unit as opposed to handling each of its vertices separately (which was done through the $\handlefree$ procedure). For this purpose, we introduce the following new procedure.

\begin{tbox}
\textbf{Procedure $\handleedge(e)$.} Handles a given hyperedge $e$ which got deleted from the hypermatching $M$ (it might be that $e$ is deleted entirely  from the hypergraph or just from the matching; the treatment is the same in both cases). 
\end{tbox}

Before getting to describe $\handleedge(e)$, we need some definitions and auxiliary procedures that are described below.

We use a similar procedure as in $\detsettle$ for the previous algorithm in this case also. We shall however note that the criteria for calling this procedure in $\handleedge(e)$ is different than before and in particular is \emph{not} a function of $v$ alone. 

\begin{tbox} 
\textbf{Procedure $\detsettle(v)$.} Handles a vertex $v \in e$ as part of $\handleedge(e)$. 
\end{tbox}
In $\detsettle(v)$, we iterate over all hyperedges $e' \in \OO(v)$ owned by $v$ and check whether all endpoints of $e$ are now unmatched; if so, we add $e$ to $M$, and run $\setlevel(v,0)$ and $\setowner(e,v)$. Moreover, for any 
vertex $u \neq v \in e$, we further run $\setlevel(u,0)$ so all vertices incident on $e$ are now at level $0$. If no such hyperedge is found,
we run $\setlevel(v,-1)$. 

The following~\Cref{clm:det-settle-r2} is an analogue of~\Cref{clm:det-settle} (with the difference that the runtime now is  a factor $O(r)$ smaller in some cases). 

\begin{claim}\label{clm:det-settle-r2}
	$\detsettle(v)$ takes $O(s_v + r \cdot w_v)$ time and maintains~\Cref{inv:level,inv:record} for vertex $v$ and hyperedges incident on $v$ other than $e$.
\end{claim}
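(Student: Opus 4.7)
The plan is to bound separately the initial scan through $\OO(v) = \SS(v) \cup \WW(v)$ and the subsequent $\setlevel$/$\setowner$ calls, showing each contributes $O(s_v + r \cdot w_v)$. The key insight, and the reason the bound improves over~\Cref{clm:det-settle}, is that a strong hyperedge $e' \in \SS(v)$ can be discarded in $O(1)$ time without inspecting all its endpoints. Indeed, by the definition of $\SS(v)$ there exists $u \neq v \in e'$ with $\ell(u) = \ell(v)$; since $\ell(v) \geq 0$ (as $v$ owns an edge), $u$ is matched by~\Cref{inv:level}(i), so $e'$ cannot have all its endpoints unmatched and is immediately rejected. Such a witness $u$ is exposed in constant time via $\PP(e')$. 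A weak hyperedge $e' \in \WW(v)$ still requires probing its at most $r-1$ other endpoints at $O(r)$ cost. In the worst case we scan all of $\OO(v)$, contributing $O(s_v) + O(r \cdot w_v)$ as required.

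Next I would bound the finalization. If no admissible hyperedge is found, $\detsettle(v)$ ends with $\setlevel(v,-1)$, which by~\Cref{clm:set-level-r2}(i) costs $O(s_v + r \cdot w_v)$. If an admissible $e' \in \WW(v)$ is found, the algorithm adds $e'$ to $M$ and invokes $\setlevel(v,0)$, $\setowner(e',v)$, and $\setlevel(u,0)$ for each $u \neq v \in e'$. The call $\setlevel(v,0)$ is a decrease (or a no-op) since $\ell(v) \geq 0$, so by~\Cref{clm:set-level-r2}(i) it costs $O(s_v + r \cdot w_v)$; $\setowner(e',v)$ costs $O(r)$ by~\Cref{clm:set-owner-r2}. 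Every other endpoint $u$ of $e'$ is unmatched, so $\ell(u) = -1$ by~\Cref{inv:level}(i); moreover $\OO(u) = \emptyset$, since any hyperedge $e''$ owned by $u$ would have $\ell(e'') = \ell(u) = -1$, contradicting $0 \leq \ell(e'')$ in~\Cref{inv:level}(i), and $\AA(u,-1) = \emptyset$ for the same reason. Thus $\setlevel(u,0)$ costs $O(1)$ per endpoint by~\Cref{clm:set-level-r2}(ii), summing to $O(r)$. Combining all contributions yields $O(s_v + r \cdot w_v)$.

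For invariant preservation, every structural mutation flows through $\setowner$ and $\setlevel$, which preserve~\Cref{inv:level,inv:record} by construction. Only two handoff moments need direct checking. First, when $e'$ is added to $M$, its endpoints are all raised to level $0 = \ell(e')$, so~\Cref{inv:level}(ii) holds for $e'$ and the ownership by $v$ is consistent with~\Cref{inv:record}. Second, if no admissible $e'$ is found, then every hyperedge $e'' \neq e$ incident on $v$ must itself have some matched endpoint at level $\geq 0$ (otherwise $v$ could have been matched via $e''$), so setting $\ell(v) = -1$ is consistent with~\Cref{inv:level}(i) and with~\Cref{inv:record}(ii) for the remaining incident hyperedges. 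The claim deliberately excludes $e$ itself, since its data-structure bookkeeping is handled at the outer $\handleedge(e)$ layer.

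The main obstacle I anticipate is verifying the \emph{consistency} of the $\SS(v)/\WW(v)$ classification and the $\PP(\cdot)$ lists at the moment $\detsettle(v)$ is invoked, because the constant-time strong-rejection step relies critically on $\SS(v)$ correctly identifying hyperedges with a same-level co-endpoint at level $\ell(v) \geq 0$. This depends on the sequencing discipline of the outer $\handleedge(e)$ procedure (not yet fully described in this excerpt), which must ensure that $v$'s level has not been demoted and the relevant $\SS/\WW/\PP$ structures have not been corrupted by processing other endpoints of $e$ before $\detsettle(v)$ runs. A minor secondary subtlety is the $\ell$ term in~\Cref{clm:set-level-r2}(ii): it vanishes for the $\setlevel(u,0)$ calls because $\ell = 0$, but would need a separate charging argument in any variant of $\detsettle$ that raises $u$ to a higher target level.
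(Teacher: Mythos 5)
Your proof is correct and follows essentially the same route as the paper: partition $\OO(v)$ into $\SS(v)\cup\WW(v)$, skip strong hyperedges in $O(1)$ each because they certifiably have a matched co-endpoint, spend $O(r)$ per weak hyperedge, and observe that the final $\setlevel(v,\cdot)$ is a level decrease and hence within budget by \Cref{clm:set-level-r2}(i). The paper's own proof is terser on the finishing moves---it silently folds the $\setowner(e',v)$ and $\setlevel(u,0)$ costs into the phrase ``exactly as in \Cref{clm:det-settle}''---whereas you spell out that each $\setlevel(u,0)$ is $O(1)$ because $\ell(u)=-1$ forces $\OO(u)=\AA(u,-1)=\emptyset$ and the $\ell$-term in \Cref{clm:set-level-r2}(ii) vanishes at $\ell=0$; that matches the argument buried in \Cref{clm:det-settle}. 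The concern you raise at the end---that constant-time strong-rejection presumes the $\SS/\WW/\PP$ structures are still consistent at the moment $\detsettle(v)$ is invoked, which depends on the sequencing of the enclosing $\handleedge(e)$---is a real subtlety that the paper's proof of this claim does not address (it is implicitly absorbed into the correctness of $\setowner$/$\setlevel$ ordering), and flagging it is a reasonable caveat rather than a defect in your argument.
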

\begin{proof}
	The cost of going over hyperedges of $e \in \OO(v)$ can be partitioned into two parts: for edges in $\SS(v)$, we do not need to explicitly check whether all endpoints of $e$ are unmatched, 
	since being in $\SS(v)$ implies that they have a potential owner and thus this potential owner is matched; as such, we can ignore these edges at this point. For the remaining edges, the runtime is $O(r \cdot w_v)$, so
	the total runtime is $O(s_v + r \cdot w_v)$. Moreover, since $\ell(v)$ is decreasing in this case, by~\Cref{clm:set-level-r2}, the runtime of $\setlevel(v,0)$ or $\setlevel(v,-1)$ is also $O(s_v + r \cdot w_v)$.

	The second part about maintenance of~\Cref{inv:level,inv:record} is exactly as in~\Cref{clm:det-settle}. 
\end{proof}

The following definition is a key part of $\handleedge$ which tells us the criteria for different ways of handling $e$. 

\begin{definition}[\textbf{Tentative cost}]\label{def:tc}
	For a hyperedge  $e$, we define the \emph{tentative cost} of $e$ as: 
	\[
	\Gamma(e) := \max\set{\sum_{v \in e} s_v, r \cdot \sum_{v \in e} w_v}.
	\]
	We call $e$ \emph{strong-costly} if $\Gamma(e)$ is equal to the first term of the max and otherwise call it \emph{weak-costly}. 
\end{definition}

We will use a simple procedure that determines whether a hyperedge can be inserted to the matching based on our tentative cost criteria or not. 

\begin{tbox} 
\textbf{Procedure $\testinsert(e)$.} Insert a hyperedge $e$ to $M$ or decides it is too costly to do so. 
\end{tbox}
\noindent
The procedure consists of the following two cases:  
\begin{itemize}
\item \textbf{Case (a): when $\Gamma(e) < 1000 \cdot \alpha^{\ell(e)+2}$.} In this case, we will insert $e$ in $M$ and remove any hyperedge 
$e'$ incident on $e$ that is part of $M$. This is done by deleting them one by one from $M$ and  \emph{recursively} running $\handleedge$ for each one. 

\item \textbf{Case (b): when $\Gamma(e) \geq 1000 \cdot \alpha^{\ell(e)+2}$.} In this case, we do not insert $e$ in $M$ and instead run $\handleedge(e)$. 

\end{itemize}

Finally, we also use a procedure similar-in-spirit to $\randsettle$ of the previous algorithm. The difference here however is that we sample a hyperedge to join the matching from a ``large'' set of hyperedges 
that do not necessarily belong to the ownership of a single vertex. This procedure and its (probabilistic) analysis is the single most important difference of our new algorithm with our own $O(r^3)$-time algorithm as well as prior
work in~\cite{BGS11,Sol16} for ordinary graphs.

\begin{tbox} 
\textbf{Procedure $\addrandom(S)$.} Inserts a random hyperedge to $M$ from a  set $S$ of hyperedges.
\end{tbox}
The implementation of $\addrandom(S)$ is as follows. Let $\ellstar := \argmax_{\ell} \card{S} \geq \alpha^{\ell}$. We sample 
a hyperedge $e \in S$ uniformly at random, set $\ell(e) = \ellstar$ and $\setlevel(v,\ellnew(e))$ for all $v \in e$ and finally run $\testinsert(e)$. 

\paragraph{Temporarily deleting new hyperedges.} In the new algorithm, this is the place where we are going to temporarily delete some new hyperedges. 
Basically, we will add any hyperedge $e' \in S$ which is incident on $e$ to $\DD(e)$ for $e$ to be responsible for, and temporarily delete them from the hypergraph. We shall emphasize a key subtlety here: while in our previous algorithm, 
the set of edges that we were sampling a hyperedge $e$ from were all incident on $e$ (in $\randsettle$), this is \emph{no longer} the case for the new algorithm. Consequently, we are \emph{not} allowed to temporarily delete \emph{all} of $S$
from the graph (as it will violate~\Cref{inv:deleted}) but only the ones that are incident on $e$.

\begin{claim}\label{clm:addrandom-r2}
	Suppose $\ellnew(e) \geq \ellold(e)$; then $\addrandom(S)$ takes $O(\Gamma^{\new}(e) + \alpha^{\ellnew(e)+2})$ time. Moreover, it maintains \Cref{inv:level,inv:record}.  
\end{claim}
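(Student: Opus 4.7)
My plan is to decompose $\addrandom(S)$ into its sequential subtasks and bound each, then verify the invariants are preserved. The subtasks are: (a) computing $\ellstar$ and sampling $e$; (b) setting $\ell(e)=\ellstar$ and running $\setlevel(v,\ellnew(e))$ for every $v\in e$; (c) transferring the hyperedges of $S$ that are incident on $e$ into $\DD(e)$ and temporarily deleting them; and (d) the non-recursive part of $\testinsert(e)$. Steps (a) and (d) are easily bounded by $O(\ellnew(e))+O(r)$; the first is absorbed into $\alpha^{\ellnew(e)+2}$ since $\ell\le \alpha^{\ell+2}$, and the second is absorbed into $\Gamma^{\new}(e)$ (or the additive term).

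The main cost is in (b). Since the hypothesis $\ellnew(e)\ge \ellold(e)$ together with \Cref{inv:level} yields $\ellold(v)\le \ellold(e)\le \ellnew(e)$ for every $v\in e$, each $\setlevel(v,\ellnew(e))$ call falls into the ``increasing'' case of \Cref{clm:set-level-r2}(ii), contributing $O(r\cdot o^{\new}_v+\ellnew(e))$. Summing and splitting $o^{\new}_v = s^{\new}_v+w^{\new}_v$, the weak part satisfies $r\cdot \sum_{v\in e} w^{\new}_v \le \Gamma^{\new}(e)$ immediately from \Cref{def:tc}. For the strong part, the key structural observation is that after the batch of $\setlevel$ calls, any hyperedge $e'$ that is strong for some $v\in e$ must share \emph{at least two} endpoints with $e$: the witnessing endpoint $u\neq v$ with $\ell(u)=\ellnew(e)$ cannot lie outside $e$, because $\addrandom$ raises levels only of vertices in $e$, and any $u\notin e$ already at level $\ellnew(e)$ would have forced $\ell^{t_0}(e')\ge \ellnew(e)$, which would have excluded $e'$ from the work of $\setlevel(v,\ellnew(e))$ in the first place.

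Step (c) is bounded using $|S|<\alpha^{\ellnew(e)+1}$ (the maximality of $\ellstar$) and the fact that testing incidence and performing the temporary deletion each cost $O(r)$ per edge of $S$; since $\alpha=\Theta(r)$, the total is $O(r\cdot \alpha^{\ellnew(e)+1})=O(\alpha^{\ellnew(e)+2})$. Invariant-wise, each constituent call to $\setowner$ and $\setlevel$ preserves~\Cref{inv:level} and~\Cref{inv:record} by its own design; composing them in $\addrandom$ yields a valid final state since after the batch, every $v\in e$ has level $\ellnew(e)=\ell(e)$ (preserving~\Cref{inv:level}(ii) on $e$), and each affected incident hyperedge has its owner recomputed to a maximum-level endpoint, preserving~\Cref{inv:record}.

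The main obstacle I anticipate is the bookkeeping behind the strong-edge term $r\cdot\sum_{v\in e} s^{\new}_v$: the structural ``shares $\ge 2$ endpoints with $e$'' constraint is what one must convert into an honest numerical bound against $\Gamma^{\new}(e)+\alpha^{\ellnew(e)+2}$. I expect to discharge this either by noting that any such multi-intersection edge is itself incident on $e$ and therefore, when included in $S$, is counted within the temporary-deletion budget of step (c), or by a careful double counting that redistributes the strong-edge cost over the (at most) $r$ endpoints of $e$ and pits the excess against the geometric slack in $\alpha^{\ellnew(e)+2}$ afforded by $r\le \alpha/4$. The cleaner of these two routes, together with the invariant verification sketched above, yields the claim.
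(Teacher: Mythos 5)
Your decomposition of $\addrandom(S)$ into subtasks (a)--(d) is sound, and the treatment of (a), (c), (d) and of the weak term $r\sum_{v\in e} w^{\new}_v \le \Gamma^{\new}(e)$ is fine. But you have not actually bounded the strong term $r\sum_{v\in e} s^{\new}_v$ --- you only describe a structural property (any post-batch strong edge of $v\in e$ shares $\ge 2$ endpoints with $e$) and then \emph{speculate} about two ways to finish (``I expect to discharge this either by \ldots or by \ldots''). Neither speculation works. The first is simply wrong: a hyperedge $e'$ owned by $v$ that shares two endpoints with $e$ is indeed incident on $e$, but there is no reason $e'\in S$, so it is not covered by the temporary-deletion budget of step (c). The second route (``careful double counting'') is not an argument. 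So the proof has a genuine hole exactly at the point your own ``anticipated obstacle'' flags.

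The observation you missed is much simpler and is what the paper uses: after a call $\setlevel(v,\ellnew(e))$ that \emph{raises} $v$'s level, the algorithm records $s_v=0$. This is forced by the implementation of $\setowner$. When $\setlevel(v,\ellnew(e))$ reassigns a hyperedge $e'$ to $v$, it calls $\setowner(e',v)$ with $\ellnew(e')=\ellnew(e)>\ellold(e')$. At that instant every vertex in $\PP(e')$ sits at level $\ellold(e')<\ellnew(e)$ (none of the other endpoints of $e$ has been raised yet, or if it has, it was not in $\PP(e')$, which was populated at level $\ellold(e')$), so $\setowner$ takes branch $(ii)$: it empties $\PP(e')$ and files $e'$ into $\WW(v)$. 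Later $\setlevel(v',\cdot)$ calls for $v'\neq v\in e$ never touch $\OO(v),\SS(v),\WW(v)$. Hence, as recorded by the data structures --- which is the quantity $\Gamma^{\new}(e)$ actually measures and the quantity the algorithm's running time actually depends on --- we have $s^{\new}_v=0$ for every raised $v$, so $r\sum_v o^{\new}_v = r\sum_v w^{\new}_v \le \Gamma^{\new}(e)$ and no separate strong term arises. Your structural observation is correct but is aimed at the ``true'' strong count, which the algorithm never iterates over here; the point is to account for the work the algorithm actually performs, and that is governed by the $\SS/\WW$ bookkeeping, not by a semantic recomputation of strong edges after the whole batch.
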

\begin{proof}
	Sampling $e$  takes $O(\card{S}) = O(\alpha^{\ellstar+1}) = O(\alpha^{\ellnew(e)+1})$ time. 
	Running $\setlevel(v,\ellnew(e))$ for all $v \in e$ takes $O(r \cdot o^{\new}_v + \ellnew(e))$ time by~\Cref{clm:set-level-r2} as level of $v$ is increased during this process since $\ell(v) \leq \ellold(e)$ (by~\Cref{inv:level}) and $\ellold(e) \leq \ellnew(e)$ (by the 
	assumption). As such, the total cost of this step is asymptotically: 
	\[
		\sum_{v \in e} (r \cdot o^{\new}_v + \ellnew(e)) = (\sum_{v \in e} r \cdot w^{\new}_v) + r \cdot \ellnew(e) \leq \Gamma^{\new}(v) + \alpha^{\ellnew(e)+2},
	\]
	as any vertex $v$ that increases its level will have $s_v = 0$ since it does not own any hyperedge at its own level (by definition of $\setlevel$). 
	
	Finally, temporarily deleting remaining edges from $S$ takes $O(r \cdot \card{S}) = O(\alpha^{\ellnew(e)+2})$ time, which finalizes the total time bound. 
	
	\Cref{inv:level,inv:record} also hold as explained throughout the description of the procedure. 
\end{proof}

\paragraph{Implementation of $\handleedge$.} We are now ready to describe the implementation of $\handleedge(e)$. The procedure involves multiple cases which we go over one by one. We note that even for a single hyperedge $e$, 
after processing $e$ in one case, the algorithm may switch to another case; however, as will be evident, this can only happen from cases with larger numbers to cases with smaller ones and hence there is no loop in this part of the algorithm. 

\subsubsection{Case 1: when $\Gamma(e) < 1000 \cdot \alpha^{\ell(e)+2}$.} \label{sec:small-r2}

This is an easy case. We will simply run $\detsettle(v)$ for all $v \in e$ which will ensure that all invariants are satisfied. 
The following claim bounds the runtime of the procedure. 

\begin{claim}\label{clm:case1-r2}
	The runtime of $\handleedge(e)$ in \case{1} is $O(\Gamma(e)) = O(\alpha^{\ell(e)+2})$. 
\end{claim}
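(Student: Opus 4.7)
The plan is to reduce this runtime bound to a simple aggregation over the endpoints of $e$, combined with the definition of $\Gamma(e)$ and the case assumption. Since the description of Case 1 merely invokes $\detsettle(v)$ once for each $v \in e$ (with no additional bookkeeping beyond those calls), the total runtime is exactly $\sum_{v \in e} \textnormal{cost of } \detsettle(v)$, up to constants.

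First, I would apply \Cref{clm:det-settle-r2} to each individual call, giving a per-vertex bound of $O(s_v + r \cdot w_v)$. Summing over the at most $r$ endpoints of $e$ yields
\[
\sum_{v \in e} O\bigl(s_v + r \cdot w_v\bigr) \;=\; O\!\left(\sum_{v \in e} s_v \;+\; r \cdot \sum_{v \in e} w_v\right).
\]
By \Cref{def:tc}, both $\sum_{v \in e} s_v$ and $r \cdot \sum_{v \in e} w_v$ are upper bounded by $\Gamma(e)$ (since $\Gamma(e)$ is defined as the maximum of the two). Hence the sum is $O(\Gamma(e))$, establishing the first equality in the claim.

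Second, I would invoke the Case 1 condition $\Gamma(e) < 1000 \cdot \alpha^{\ell(e)+2}$ directly to conclude $O(\Gamma(e)) = O(\alpha^{\ell(e)+2})$, which gives the second equality and finishes the proof.

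There is essentially no obstacle here: the claim is a direct aggregation argument, and all of the technical work (bounding $\detsettle(v)$ by $O(s_v + r \cdot w_v)$ rather than the weaker $O(r \cdot o_v)$ of the $O(r^3)$-time algorithm) is already encapsulated in \Cref{clm:det-settle-r2}. The only point to be mindful of is that the description of Case 1 does not perform any additional operations beyond the $\detsettle$ calls—in particular, it does not trigger any $\setowner$ or $\setlevel$ calls on hyperedges other than those handled inside $\detsettle$—so no hidden costs need to be accounted for.
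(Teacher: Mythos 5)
Your proof is correct and follows essentially the same route as the paper: bound each $\detsettle(v)$ call by $O(s_v + r \cdot w_v)$ via \Cref{clm:det-settle-r2}, sum over $v \in e$, relate the sum to $\Gamma(e)$ via \Cref{def:tc}, and finish with the Case~1 bound on $\Gamma(e)$. The only difference is cosmetic: you observe that both terms in the max are each at most $\Gamma(e)$ (so the sum is $O(\Gamma(e))$), whereas the paper writes the sum directly as $O(1)$ times the max; these are the same bound.
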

\begin{proof}
	By~\Cref{clm:det-settle-r2}, we have that the total runtime is:
	\begin{align*}
		O(1) \cdot \sum_{v \in e} (s_v + r \cdot w_v) = O(1) \cdot \max \set{\sum_{v \in e} s_v, r \cdot \sum_{v \in e} w_v} = O(\Gamma(e)). 
	\end{align*}
	The bound on  $\Gamma(e)$ in this case finalizes the proof. 
\end{proof}

\subsubsection{Case 2: when $\Gamma(e) \geq 1000 \cdot \alpha^{\ell(e)+2}$ and $e$ is strong-costly}\label{sec:strong-costly-r2}

This case is also relatively easy and is similar to the previous algorithm. 

Let $v := \argmax_{u \in e} s_u$. Since $e$ is strong-costly and by the lower bound on $\Gamma(e)$, we have that 
\[
s_v \geq 1000 \cdot \alpha^{\ell(e)+1}.
\]

We first go over all vertices $u \neq v \in e$ and run $\detsettle(u)$ to handle these vertices. So it only remains to handle $v$ which is done as follows. 

Recall the definition of $\tildeo_{v,\ell}$ as the number of edges $v$ \emph{would own assuming that}  we increase its level to $\ell > \ell(v)$. We first find $\ell := \argmax_{\ell} \set{\tildeo_{v,\ell} > \alpha^{\ell}}$. 
We then run $\setlevel(v,\ell)$ followed by $\addrandom(\OOnew(v))$ which inserts the hyperedge $e_v$ to $M$ -- this hyperedge will be kept in $M$ according to the criteria of $\testinsert$ or will be deleted
subsequently and handled via $\handleedge$ recursively.

This finalizes the description of the update algorithm in this case.

\begin{claim}\label{clm:case2-r2}
	$\handleedge(e)$ in \case{2} takes $O(\Gamma^{\new}(e_v) + \alpha^{\ellnew(e_v)+2})$ time \underline{ignoring the recursive calls}.  
\end{claim}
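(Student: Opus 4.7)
The plan is to decompose the runtime of $\handleedge(e)$ in \case{2} into three contributions and show that the cost of the first two is absorbed by the $O(\alpha^{\ellnew(e_v)+2})$ slack already present in the bound for the third. The three contributions are: $(i)$ the $r-1$ calls to $\detsettle(u)$ for $u \neq v \in e$; $(ii)$ the single call to $\setlevel(v, \ell)$; and $(iii)$ the call to $\addrandom(\OOnew(v))$, whose cost by \Cref{clm:addrandom-r2} matches the claim's bound $O(\Gamma^{\new}(e_v) + \alpha^{\ellnew(e_v)+2})$ directly. The precondition $\ellnew(e_v) \geq \ellold(e_v)$ of that claim is satisfied with equality, since just before the $\addrandom$ call $e_v \in \OOnew(v)$ forces $\ell(e_v) = \ell(v) = \ell$, and $\addrandom$ sets $\ell(e_v) = \ellstar = \ell$.

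The pivotal numerical input is the maximality in the choice $\ell := \argmax_{\ell'}\{\tildeo_{v,\ell'} > \alpha^{\ell'}\}$, which forces $\tildeo_{v,\ell+1} \leq \alpha^{\ell+1}$ and hence $o^{\new}_v = \tildeo_{v,\ell} \leq \tildeo_{v,\ell+1} \leq \alpha^{\ell+1}$. For contribution $(ii)$, \Cref{clm:set-level-r2}$(ii)$ then gives $O(r \cdot o^{\new}_v + \ell) \leq O(r \cdot \alpha^{\ell+1} + \ell) = O(\alpha^{\ell+2})$, using $\alpha = 4r$ and the trivial bound $\ell \leq \alpha^{\ell+2}$.

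For contribution $(i)$, \Cref{clm:det-settle-r2} yields a per-call cost of $O(s_u + r w_u)$. I will show the aggregate $\sum_{u \neq v}(s_u + r w_u)$ is $O(\alpha^{\ell+2})$ using two facts. First, since $v = \argmax_{u \in e} s_u$ and $s_v \leq o_v \leq o^{\new}_v \leq \alpha^{\ell+1}$ (the old ownership of $v$ cannot exceed its post-$\setlevel$ value, as raising the level only adds edges), each $s_u \leq s_v \leq \alpha^{\ell+1}$, so $\sum_{u \neq v} s_u \leq r \cdot \alpha^{\ell+1} \leq \alpha^{\ell+2}$. Second, strong-costliness of $e$ gives $r \sum_{u \in e} w_u \leq \sum_{u \in e} s_u \leq r s_v \leq r \alpha^{\ell+1} \leq \alpha^{\ell+2}$. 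Combining the three contributions yields the desired runtime.

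The principal technical obstacle I expect is that $s_u, w_u$ can evolve between successive $\detsettle(u')$ calls: reassignments inside $\setlevel(u', \cdot)$ may shift hyperedges from $\OO(u')$ into some other $\OO(u)$ with $u \in e$, potentially inflating $s_u$ or $w_u$ above the static value used above. I plan to handle this by observing that every such inter-vertex transfer is triggered by a strict decrease in the source's level (since $\detsettle$ always lowers its target's level to $0$ or $-1$), so each hyperedge migrates at most a constant number of times among the $r-1$ target vertices; moreover, by \Cref{clm:set-owner-r2}, a transfer where the edge's level is unchanged costs only $O(1)$, which is the case for the strong-edge branch of $\setlevel$'s downward reassignment, while only weak-edge transfers cost $O(r)$. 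A careful amortization of these transfer costs against the initial $s_u$ and $w_u$ values then keeps the aggregate within the $O(\alpha^{\ell+2})$ budget, completing the bound.
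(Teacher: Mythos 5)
Your decomposition (detsettle calls, the explicit $\setlevel(v,\ell)$, and $\addrandom$) and your main inequality chain ($\sum_{u}s_u \le r\,s_v \le r\,\tildeo_{v,\ell} \le \alpha^{\ell+2}$ via strong-costliness and the argmax choice of $v$ and $\ell$) are the same as the paper's argument, and the appeal to \Cref{clm:addrandom-r2} for the $\addrandom$ piece is handled correctly (modulo a harmless slip: $\ellstar$ need not equal $\ell$ exactly, e.g.\ when $\tildeo_{v,\ell}=\alpha^{\ell+1}$, but $\ellstar\ge\ell$ always holds, which is all the precondition requires).

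Where your proposal diverges is in the last paragraph, and this is worth flagging. You correctly observe something the paper's proof itself does not discuss: each $\detsettle(u)$ calls $\setlevel(u,\cdot)$ with a level decrease, which can hand strong hyperedges of $u$ to other not-yet-processed endpoints $u'\in e$ (anyone in $\PP(e')$), thereby inflating the $s_{u'},w_{u'}$ actually seen at the time $\detsettle(u')$ runs beyond the initial values used in the bound $O(\Gamma(e))$. However, the resolution you sketch is not correct as stated. The number of times a single hyperedge $e'$ can be re-handed among the endpoints of $e$ is not ``a constant'': each handoff removes one vertex from $\PP(e')\cap e$, and $|\PP(e')\cap e|$ can be as large as $r-1$, so a single hyperedge can migrate $\Theta(r)$ times. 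Each such handoff is an $O(1)$ $\setowner$ call (since the level of $e'$ is unchanged while it remains strong), so per edge the extra cost is $O(r)$; but summed over the $\Theta(\Gamma(e))$ strong edges, the aggregate could be $\Theta(r\cdot\Gamma(e))$, which is a factor $r$ over the budget $O(\alpha^{\ellnew(e_v)+2})$. To make the amortization you have in mind go through, one needs some additional ingredient --- e.g.\ that $\setowner$ should prefer a new owner outside $e$ when one exists, or a sharper accounting of $|\PP(e')\cap e|$ in terms of the initial $s_u$'s --- none of which is supplied. Since the paper's own proof simply plugs in the static $\Gamma(e)$ and does not address this either, you have correctly identified a real subtlety; the issue is that your claimed fix does not actually patch it.
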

\begin{proof}
	Running $\detsettle(u)$ for all $u \neq v \in e$ takes $O(\Gamma(e))$ time by the same analysis as in~\Cref{clm:case1-r2}. At the same time, since $e$ is strong-costly, 
	\begin{align*}
		\Gamma(e) = \sum_{u \in e} s_u \leq r \cdot s_v \leq r \cdot \tildeo_{v,\ellnew(v)} \leq \alpha^{\ellnew(v)+2} = \alpha^{\ellnew(e_v)+2},
	\end{align*}
	and so the runtime of this part is within the budget. Also since $\ellnew(e_v) = \ellnew(v) > \ellold(e)$,~\Cref{clm:addrandom-r2} implies that runtime of $\addrandom$ is $O(\Gamma^{\new}(e_v)+\alpha^{\ellnew(e_v)+2})$. 
	The remainder of the algorithm in this case is only recursive calls that is ignored in this claim.  

\end{proof}

We record the following observation for this case. 

\begin{observation}\label{obs:case2-r2}
	$(i)$ For the vertex $v$, $\ellnew(v) > \ellold(v)$. $(ii)$ For the hyperedge $e_v$ added to $M$, $e_v$ is sampled from a set of at least $1000 \cdot \alpha^{\ellnew(e)}$ hyperedges that are all incident on $e$. $(iii)$ For any hyperedge $e'$ deleted from $M$ 
	in case (a) of $\testinsert(e_v)$ in this procedure, $\ell(e') < \ellnew(e_v)$. 
\end{observation}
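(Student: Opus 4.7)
The plan is to verify each of the three parts by carefully tracking the sequence of procedure calls that $\handleedge(e)$ performs in Case~2, matched against the data-structure invariants. I expect parts~(i) and~(ii) to be essentially quantitative book-keeping, while part~(iii) will be the main obstacle, since it requires pinning down at which precise moment in the update the relevant invariant is applied.

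For part~(i), the starting point is the bound $s_v \geq 1000 \cdot \alpha^{\ell(e)+1}$ already recorded in the description of Case~2 (derived from $\Gamma(e) \geq 1000 \cdot \alpha^{\ell(e)+2}$, the choice $v = \argmax_{u \in e} s_u$, and $\alpha = 4r$). I would first observe that the preceding calls $\detsettle(u)$ for $u \in e \setminus \{v\}$ do not alter $\OO(v)$: they do not touch $\ell(v)$ and only lower other vertices' levels or raise previously unmatched vertices up to level $0 \leq \ell(v)$, so $o_v$ is preserved. Consequently $\tildeo_{v, \ellold(v)+1} \geq o_v \geq s_v \geq 1000 \cdot \alpha^{\ell(e)+1} > \alpha^{\ellold(v)+1}$, so the value $\ell = \ellold(v)+1$ already satisfies the defining inequality of the argmax, giving $\ellnew(v) \geq \ellold(v)+1 > \ellold(v)$.

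For part~(ii), once $\setlevel(v, \ellnew(v))$ has been performed, the sample space passed to $\addrandom$ is $S := \OOnew(v)$ and satisfies $|S| = \tildeo_{v, \ellnew(v)}$. Monotonicity of $\tildeo_{v, \cdot}$ combined with the bound from part~(i) yields $|S| \geq 1000 \cdot \alpha^{\ell(e)+1} \geq 1000 \cdot \alpha^{\ell(e)}$, which is the desired lower bound. Every edge of $S$ is owned by $v \in e$ and is therefore incident on $e$.

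Part~(iii) is the main obstacle, and I would handle it in the spirit of the proof of \Cref{Ob:basic}. The goal is to show that every endpoint $u \neq v$ of $e_v$ has level at most $\ellnew(v)-1$ at the moment just before $\setlevel(v, \ellnew(v))$, so that the epoch-level of any matched edge $e'$ sharing such a vertex $u$ with $e_v$ is at most $\ellnew(v)-1$. This follows because $e_v \in \OOnew(v)$ must have arrived in this set either from $\OOold(v)$, where its level was $\ellold(v) \leq \ellnew(v)-1$, or from a bucket $\AA(v, \ell')$ with $\ell' \leq \ellnew(v)-1$, where $\ell(e_v) = \ell'$; in both cases \Cref{inv:record}(i) applied to the owner of $e_v$ at that moment yields $\ell(u) \leq \ellnew(v)-1$ for every $u \neq v$ in $e_v$, and these levels are not modified by raising only $v$. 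Now, any matched edge $e'$ deleted in case~(a) of $\testinsert(e_v)$ is incident on $e_v$ via some vertex $u$, and $u \neq v$ (otherwise $v$ would belong to two matched edges), so by \Cref{inv:level}(ii) for matched edges, $\ell(e') = \ell(u) \leq \ellnew(v)-1 < \ellnew(v) \leq \ellstar = \ellnew(e_v)$; the last inequality uses that $|S| > \alpha^{\ellnew(v)}$ forces $\ellstar \geq \ellnew(v)$ by the definition of $\ellstar$ inside $\addrandom$. The delicate point is fixing the time at which \Cref{inv:record}(i) is invoked: it must be applied just before $\setlevel(v, \ellnew(v))$, since afterwards $v$ becomes the owner of $e_v$ and another endpoint could in principle tie $v$'s new level.
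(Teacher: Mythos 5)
Your proposal is correct and takes essentially the same route as the paper: the paper proves this observation by declaring it identical to the proof of \Cref{Ob:basic}, i.e., the sample-space bound follows from the level-selection rule in \case{2} together with the precondition $s_v \geq 1000\cdot\alpha^{\ell(e)+1}$, and the level bound on deleted matched hyperedges follows by taking a common vertex $u$ of $e_v$ and $e'$, using ownership (\Cref{inv:record}) to get $\ell(u) < \ellnew(v)$ and \Cref{inv:level} to get $\ell(e') = \ell(u) < \ellnew(v) \leq \ellnew(e_v)$. Your write-up is just a more explicit rendering of that argument (separating the cases $e_v \in \OOold(v)$ versus $e_v$ coming from a bucket $\AA(v,\ell')$, and tracking the moment at which the invariants are read off), so nothing further is needed.
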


The proof of~\cref{obs:case2-r2} is exactly the same as~\Cref{Ob:basic} and is thus omitted. 

\subsubsection{Case 3: when $1000 \cdot \alpha^{\ell(e)+2} \leq \Gamma(e) < \alpha^{\ell(e)+3}$ and $e$ is weak-costly} \label{sec:weak-costly-r2}

This  is the main part of the  algorithm and differs considerably from the previous algorithm both technically and conceptually. 
 
We define the set of weak hyperedges incident on $e$ as follows: 
\begin{align*}
	\WW(e) &:= \bigcup_{v \in e} \WW(v), \qquad \qquad 
	w_e := \card{\WW(e)}.
\end{align*}
Note that $w_e = \sum_{v \in e} w_v$ as each hyperedge is owned by exactly one vertex and so $\WW(v)$-sets for different $v \in e$ are disjoint. As $e$ is weak-costly and by the bounds on 
$\Gamma(e)$, we have that 
\[
1000 \cdot \alpha^{\ell(e)+1} \leq w_e \leq \alpha^{\ell(e)+2}.
\]
To continue, we need the following definitions:  
\begin{itemize}
	\item We say that a hyperedge $e' \in \WW(e)$ is a \textbf{conflicting} hyperedge if it has a common vertex with at least $999\alpha^{\ell(e)+1}$ other hyperedges in $\WW(e)$. 
\end{itemize}	
	We use $\CC(e)$ to denote the set of 
	conflicting edges and define $c_e := \card{\CC(e)}$; we define $\CC(v) \subseteq \CC(e)$ and $c_v := \card{\CC(v)}$ for each vertex $v \in e$ analogously. 
\begin{itemize}
	\item Any hyperedge $e' \in \WW(e)$ that is not conflicting is called \textbf{non-conflicting}. 
\end{itemize}
	
We use $\bar{\CC}(e)$ to denote the set of non-conflicting edges and define $\bar{c}_e := \card{\bar{\CC}(e)}$; 
we define $\bar{\CC}(v) \subseteq \bar{\CC}(e)$ and $\bar{c}_v := \card{\bar{\CC}(v)}$ for each vertex $v \in e$ analogously. 

The algorithm in this part consists of two main different case, each with some sub-cases. To begin with, 
we iterate $\WW(e)$ and form the set $\CC(e)$. Then, we consider the following two cases: 

\subsubsection*{Case 3-(a): when $c_e \geq 999 \cdot \alpha^{\ell(e)+1}$.} This is the most interesting  case of the algorithm and where we perform our \emph{hyperedge-centric sampling} as opposed to a vertex-centric one. 

We will directly run $\addrandom(\CC(e))$ in this case to insert a hyperedge $\tilde e$ to $M$ (so $\tilde e$ is sampled not from the neighborhood of a vertex but a hyperedge -- our probabilistic analysis shows that
since this is a conflicting hyperedge, this is still going to work for our purpose).  
Note that since $c_e \geq 999 \cdot \alpha^{\ell(e)+1}$, we have $\ellnew(\tilde e) \geq \ell(e)+1$. Finally, we  run $\detsettle(v)$ for any $v \in e$ that is not handled yet to maintain the invariants.

\begin{claim}\label{clm:case3b-r2}
	The runtime of \case{3-(a)} in $\handleedge(e)$ is $O(\Gamma^{\new}(\tilde e) + \alpha^{\ell(\tilde e)+2})$. 
\end{claim}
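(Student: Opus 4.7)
My plan is to decompose the runtime of Case 3-(a) into three pieces: (i) the preprocessing that iterates $\WW(e)$ and forms $\CC(e)$, (ii) the call to $\addrandom(\CC(e))$, and (iii) the subsequent calls to $\detsettle(v)$ for the endpoints $v \in e$ that are not yet handled. I will bound each of these by $O(\Gamma^{\new}(\tilde{e}) + \alpha^{\ell(\tilde{e})+2})$, using crucially the inequality $\ell(\tilde{e}) \ge \ell(e)+1$ guaranteed by the case assumption $c_e \ge 999 \cdot \alpha^{\ell(e)+1}$.

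For step (i), the key observation is that $w_e \le \alpha^{\ell(e)+2}$ (by the weak-costly and upper-$\Gamma(e)$ assumptions of Case 3). To identify $\CC(e)$, I first walk over every edge in $\WW(e)$ and accumulate, for each endpoint $u$ touched, a counter of how many edges of $\WW(e)$ are incident on $u$; this is $O(r \cdot w_e)$ work. A second pass through $\WW(e)$ then marks each edge conflicting iff some endpoint counter is at least $999 \cdot \alpha^{\ell(e)+1}$, again in $O(r \cdot w_e)$ time. In total, step (i) costs $O(r \cdot \alpha^{\ell(e)+2}) = O(\alpha^{\ell(e)+3})$ by the choice $\alpha = 4r$; since $\ell(\tilde{e}) \ge \ell(e)+1$, this is at most $O(\alpha^{\ell(\tilde{e})+2})$.

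For step (ii), I will invoke \Cref{clm:addrandom-r2} applied to $S = \CC(e)$. The preconditions require $\ellnew(\tilde{e}) \ge \ellold(\tilde{e})$; this holds because every edge in $\CC(e) \subseteq \WW(e)$ has current level at most $\ell(e)$ by \Cref{inv:record} (its owner is incident on $e$ and at level $\ell(e)$), while $\addrandom$ assigns $\tilde{e}$ a new level $\ellstar$ with $\card{\CC(e)} \ge \alpha^{\ellstar}$, hence $\ellstar \ge \ell(e)+1$ by the case-assumption $c_e \ge 999\cdot\alpha^{\ell(e)+1}$. The claim then gives runtime $O(\Gamma^{\new}(\tilde{e}) + \alpha^{\ellnew(\tilde{e})+2})$, which is already within the target budget.

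For step (iii), by \Cref{clm:det-settle-r2} each call $\detsettle(v)$ costs $O(s_v + r \cdot w_v)$; summing over $v \in e$ and using the definition of $\Gamma$ yields $O(\Gamma(e))$. Since Case 3 assumes $\Gamma(e) < \alpha^{\ell(e)+3} \le \alpha^{\ell(\tilde{e})+2}$, this fits within the claimed budget as well. Adding the three bounds proves the claim. The main obstacle I anticipate is not in the arithmetic but in justifying step (i) efficiently; specifically, one must be careful that detecting conflicting edges does not require a global sweep but can be localized to $\WW(e)$, which is what the two-pass counting argument above achieves, and also that the charge $O(\alpha^{\ell(e)+3})$ can be absorbed into $O(\alpha^{\ell(\tilde{e})+2})$ only because the case condition forces $\ell(\tilde{e})$ to strictly exceed $\ell(e)$.
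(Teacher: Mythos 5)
Your proof is correct and follows essentially the same route as the paper: charge the $\addrandom(\CC(e))$ call via \Cref{clm:addrandom-r2}, charge the remaining $\detsettle$ calls by $O(\Gamma(e))$ as in \Cref{clm:case1-r2}, and absorb everything using $\Gamma(e) < \alpha^{\ell(e)+3}$ together with $\ell(\tilde e) \geq \ell(e)+1$. The only differences are that you additionally make explicit the $O(r \cdot w_e)$ cost of forming $\CC(e)$ (which the paper treats as shared Case-3 preprocessing and accounts for only in the analysis of \case{3-(b)}, i.e., \Cref{clm:case3a-r2}) and that you verify the precondition $\ellnew(\tilde e) \geq \ellold(\tilde e)$ of \Cref{clm:addrandom-r2}, both of which are harmless and, if anything, tighten the exposition.
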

\begin{proof}
	By~\Cref{clm:addrandom-r2}, running $\addrandom$ takes  $O(\Gamma^{\new}(\tilde e) + \alpha^{\ell(\tilde e)+2})$ time. Moreover, running $\detsettle$ for all remaining vertices takes $O(\Gamma(e))$ time by~\Cref{clm:case1-r2}. 
	As in \case{3}, $\Gamma(e) < \alpha^{\ell(e)+3}$ and $\ell(\tilde e) \geq \ell(e)+1$, we obtain the desired bound.  
\end{proof}

We record the following observation for this case. 

\begin{observation}\label{obs:case3a-r2}
	$(i)$ For the hyperedge $\tilde e$ inserted to $M$ in \case{3-(a)}, $\ellnew(\tilde e) \geq \ellnew(e)+1$. $(ii)$ hyperedge $\tilde e$ is sampled from a set of $999 \cdot \alpha^{\ell(e)+1}$ hyperedges. $(iii)$ 
	hyperedge $\tilde e$ is incident on at least $998 \cdot \alpha^{\ell(e)+1}$ many of the sampled hyperedges. $(iv)$ For any hyperedge $e'$ deleted from $M$ 
	in case (a) of $\testinsert(\tilde e)$ in this procedure, $\ell(e') < \ellnew(\tilde e)$. 
\end{observation}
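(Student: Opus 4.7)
The plan is to verify each part essentially by unwinding the relevant definitions and the invariants from \Cref{sec:leveling-scheme}; only part (iii) requires a genuinely new observation.

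For part (i), I would unfold $\addrandom(\CC(e))$: it sets $\ellstar$ to the largest integer $\ell$ with $|\CC(e)| \geq \alpha^{\ell}$ and then assigns $\ell(\tilde e) := \ellstar$. In \case{3-(a)} we have $c_e \geq 999 \cdot \alpha^{\ell(e)+1} \geq \alpha^{\ell(e)+1}$, hence $\ellstar \geq \ell(e)+1$, and thus $\ellnew(\tilde e) \geq \ell(e)+1$, which is the statement of (i) (here $\ellnew(e) = \ell(e)$, since the deletion of $e$ itself does not change the level record on $e$). Part (ii) is then immediate: we sample $\tilde e$ uniformly from $\CC(e)$, whose size is $c_e \geq 999 \cdot \alpha^{\ell(e)+1}$ by the case hypothesis.

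The crux is part (iii), which I would prove via a ``closure-under-heavy-vertex'' argument. I read the definition so that $e'' \in \WW(e)$ is conflicting iff there exists some single vertex $v \in e''$ lying in at least $999 \cdot \alpha^{\ell(e)+1}$ other hyperedges of $\WW(e)$ --- this is the natural algorithmic reading, since $\CC(e)$ must be computable by a single pass. Let $v \in \tilde e$ be the witness for $\tilde e$ being conflicting and set $T := \{e'' \in \WW(e) \setminus \{\tilde e\} : v \in e''\}$, so $|T| \geq 999 \cdot \alpha^{\ell(e)+1}$. I would then observe that for every $e'' \in T$, the same vertex $v$ also lies in $\tilde e$ and in every other member of $T$, so $v$ is a witness of conflicting status for $e''$ as well: at least $|T| \geq 999 \cdot \alpha^{\ell(e)+1}$ other hyperedges of $\WW(e)$ share $v$ with $e''$. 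Hence $T \subseteq \CC(e)$, and therefore $\tilde e$ is incident in the sample space $\CC(e)$ on at least $|T| \geq 999 \cdot \alpha^{\ell(e)+1} \geq 998 \cdot \alpha^{\ell(e)+1}$ hyperedges, establishing (iii). I expect this to be the main (and essentially the only) subtle point of the proof; the key insight is that a single heavy vertex propagates the conflicting property to every edge incident to it.

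Part (iv) I would dispatch by combining~\Cref{inv:level} with the fact that $\tilde e$ is a \emph{weak} hyperedge for its old owner. Concretely, $e \in M$ just before its deletion, so Invariant~\ref{inv:level}(ii) gives $\ell(u) = \ell(e)$ for every $u \in e$. The sampled edge $\tilde e$ lies in $\WW(v_0)$ for some $v_0 \in e$, so $\ellold(\tilde e) = \ell(v_0) = \ell(e)$, and the weakness of $\tilde e$ for $v_0$ forces $\ell(u) < \ell(e)$ for every $u \neq v_0$ in $\tilde e$. For any $e' \in M$ removed by case (a) of $\testinsert(\tilde e)$, pick a shared endpoint $u \in e' \cap \tilde e$: Invariant~\ref{inv:level}(ii) gives $\ell(e') = \ell(u) \leq \ell(v_0) = \ell(e)$, while by (i) we already have $\ellnew(\tilde e) \geq \ell(e)+1$, yielding $\ell(e') < \ellnew(\tilde e)$ as required.
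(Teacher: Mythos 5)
Your proof is correct, and since the paper itself justifies this observation with only a single remark (``the first two and the last properties are straightforward and the third one holds $\ldots$ by their definition''), your write-up is a legitimate fleshing-out rather than a divergent route. Parts (i), (ii), (iv) are indeed routine unwindings of $\addrandom$, the Case 3-(a) hypothesis $c_e \geq 999\,\alpha^{\ell(e)+1}$, and~\Cref{inv:level} together with the weakness of $\tilde e$ for its owner $v_0 \in e$; your argument for (iv) mirrors the argument for~\Cref{Ob:basic} transplanted into the edge-centric setting.

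The real content is your treatment of (iii), and you identified the crux correctly. The phrase ``has a common vertex with at least $999\alpha^{\ell(e)+1}$ other hyperedges'' is ambiguous between ``intersects many hyperedges'' and ``contains a single vertex of high $\WW(e)$-degree''; under the first reading (iii) does not obviously follow from the definition alone, whereas under the second reading your closure argument (a high-degree witness $v \in \tilde e$ propagates conflicting status to every hyperedge of $\WW(e)$ through $v$, and all of those intersect $\tilde e$ at $v$) gives (iii) cleanly, in fact with the slightly better constant $999$ rather than $998$. That reading is also the one consistent with the algorithm forming $\CC(e)$ in a single pass of $\WW(e)$ in $O(r\cdot w_e)$ time (cf.~\Cref{clm:case3a-r2}), so it is clearly the intended one, and your explicit closure lemma is exactly what the paper's ``by their definition'' glosses over.
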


The first two and the last properties are straightforward and the third one holds as we are sampling over conflicting hyperedges and by their definition. The

\subsubsection*{Case 3-(b): when $c_e < 999 \cdot \alpha^{\ell(e)+1}$.} 

In this case, $\bar{c}_e \geq \alpha^{\ell(e)+1}$ and we use this to our advantage. The sampling procedure of this step is again vertex-centric but now has a different property than before; it can create hyperedges 
at the \emph{same level} as $e$ instead of strictly one more; we show that this is still fine for the purpose of our analysis. 
The procedure in this part is as follows. 

\noindent	
While $\textbf{(1)}$ $e$ is \emph{still} weak-costly \emph{and} $\textbf{(2)}$ there is a vertex $v \in e$ with $\bar{c}_v > \alpha^{\ell(e)}$:\footnote{Notice that both conditions of the while-loop are true at the beginning of the algorithm.} 
	\begin{itemize}
		\item We use $\addrandom(\bar{\CC}(v))$ which inserts the hyperedge $e_v$ to $M$. 

\item We update the values of $\bar{c}_v$ for the remaining vertices $v \in e$ and repeat the loop. 

\end{itemize}
\noindent
Once we are done with the while-loop, we do as follows. If the while-loop was run \emph{more than once}, we skip the following three steps; otherwise: 
\begin{itemize}
	\item We find the set of $T$ of hyperedges that are \emph{not} incident on $e_v$. Since $e_v$ is a non-conflicting hyperedge, $\card{T} \geq \alpha^{\ell(e)+1}$. 
	\item Since there are at most $r$ vertices incident on each hyperedge, there is a vertex $u \in e$ which is incident on $\alpha^{\ell(e)}$ of these hyperedges in $T$, denoted by the set $T(u)$. 
	\item We run $\addrandom(T(u))$ that inserts a hyperedge $e_u$ to $M$.  
\end{itemize}
For simplicity of notation and brevity, with a slight abuse of notation, from now on we consider the above step as another iteration of the while-loop (the only difference is that in the while-loop, we are sampling $e_v$ from \emph{non-conflicting} hyperedges
incident on $v$ while this time, we are sampling $e_u$ from all weak hyperedges of $u$ including both conflicting and non-conflicting hyperedges). Finally, 
\begin{itemize}
	\item If constraint $\textbf{(1)}$ of the while-loop was violated, i.e., $e$ is now strong-costly, we go to either Case 1 or 2 of $\handleedge(e)$ depending on the value of $\Gamma(e)$ currently. 
	\item If constraint $\textbf{(2)}$ was violated, We now simply run $\detsettle(v)$ for any $v \in e$ that is not handled yet to maintain the invariants. 
\end{itemize}

\begin{observation}\label{obs:case3b-r2}
	Consider any hyperedge $e_v$ inserted to $M$ during this process. In case (a) of $\testinsert(e_v)$, any deleted hyperedge $e'$ from $M$ has $\ell(e') < \ell(e_v)$. 
\end{observation}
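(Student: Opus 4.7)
The plan is to prove a uniform claim: at every invocation of $\addrandom$ in \case{3-(b)}, the sampled hyperedge $e_v$ is a weak hyperedge owned by some vertex $w \in e$ with $\ell(w) = \ell(e)$ and $w$ currently unmatched, and furthermore $\ellnew(e_v) \geq \ell(e)$. The observation then follows immediately: any matched edge $e'$ removed inside case (a) of $\testinsert(e_v)$ shares a vertex $x \in e_v$ with the newly inserted $e_v$; the weakness of $e_v$ for $w$ combined with \Cref{inv:record} forces $\ell^{\old}(x) < \ell(w) = \ell(e)$ for every $x \in e_v \setminus \{w\}$; and since $w$ is unmatched the edge $e'$ must hit $e_v$ at some $x \neq w$, so \Cref{inv:level} gives $\ell(e') = \ell^{\old}(x) < \ell(e) \leq \ellnew(e_v)$.

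For a single iteration of the while-loop that calls $\addrandom(\bar{\CC}(v))$, the containment $\bar{\CC}(v) \subseteq \WW(v)$ gives $w = v$ by definition; the threshold $\bar{c}_v > \alpha^{\ell(e)}$ ensures the $\ellstar$ chosen inside $\addrandom$ is at least $\ell(e)$, so $\ellnew(e_v) \geq \ell(e)$. To verify that $v$ is unmatched at the start of its iteration, I argue inductively: any previously sampled $e_{v_i}$ is weak for $v_i$, so every endpoint of $e_{v_i}$ other than $v_i$ is at level strictly less than $\ell(e)$; since every $v' \in e$ had $\ell(v') = \ell(e)$ originally, this forces $v' \notin e_{v_i}$ for all $v' \in e \setminus \{v_i\}$, so no step inside $\testinsert(e_{v_i})$ alters the matching status (or the level) of $v'$.

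For the addendum step that samples $e_u$ from $T(u)$, the same template applies once I establish $T(u) \subseteq \WW(u)$ with $u \in e \setminus \{v_1\}$ unmatched. Since $v_1 \in e_{v_1}$, every hyperedge incident on $v_1$ is incident on $e_{v_1}$, so $T(v_1) = \emptyset$ and the pigeonhole choice of $u$ lies in $e \setminus \{v_1\}$; by the previous paragraph $u$ is unmatched and still at level $\ell(e)$. For any $e' \in T(u) \subseteq \WW(e)$, there is some $w \in e$ with $e' \in \WW(w)$, and the weakness of $e'$ for $w$ together with $u \in e'$ and $\ell(u) = \ell(w) = \ell(e)$ forces $u = w$, so $e' \in \WW(u)$. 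With $|T(u)| \geq \alpha^{\ell(e)}$, the bound $\ellstar \geq \ell(e)$ goes through exactly as before. The main obstacle I expect is ensuring that the interleaved $\setlevel$ and $\setowner$ operations issued by previous iterations (and by the $\testinsert$ side effects inside them) do not silently insert an edge with a high-level endpoint into $\bar{\CC}(v)$ or $T(u)$; this reduces to checking that $\setowner$ correctly maintains the weak/strong partition and the potential-owner sets $\PP(\cdot)$, which is precisely the purpose of the revised $\setowner$ described at the start of Section~\ref{sec:r2}.
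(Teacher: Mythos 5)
Your proposal takes essentially the same route as the paper's short justification: the crux in both is that the sampled $e_v$ is a \emph{weak} hyperedge for an unmatched vertex $w \in e$ at level $\ell(e)$, so every endpoint $x \neq w$ of $e_v$ sits strictly below $\ell(e)$, and any matched hyperedge removed in case (a) of $\testinsert(e_v)$ must hit $e_v$ at some such $x$, forcing its level below $\ell(e) \leq \ellnew(e_v)$ by \Cref{inv:level}. You usefully make explicit the facts the paper leaves implicit --- that the owner is unmatched (so the conflict cannot go through $w$), that $T(u) \subseteq \WW(u)$ with $u \neq v_1$ in the addendum step --- and you correctly flag the remaining bookkeeping about the weak/strong partition staying current across iterations, so the only caveat is that your intermediate claim ``no step inside $\testinsert(e_{v_i})$ alters the matching status or level of $v'$'' is stronger than what's strictly needed (the recursion inside $\testinsert$ can in principle touch $v'$; what actually matters is only that whenever $\addrandom$ is invoked, its sample set is currently a subset of some $\WW(w)$ with $w$ unmatched at level $\ell(e)$, which the maintained data structures guarantee).
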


We shall note that unlike all previous cases, this is a non-trivial statement simply because hyperedge $e_v$ may have the same level as $e$, i.e., $\ell(e_v) = \ell(e)$ as we are only sampling them from a set of 
$\alpha^{\ell(e)}$ hyperedges.  However, this is true because $e_v$ is a weak hyperedge for $e$; what it means that the endpoints of all other vertices incident on $e_v$ other than $v$ are strictly smaller than $v$, thus their matched
hyperedges can be only at a lower level than $v$, proving the observation. 

We also have the following immediate observation by definition (specifically, that we consider the step right after the while-loop which involves vertex $u$ as yet another iteration of the while-loop). 

\begin{observation}\label{obs:case3b-k2}
	In~\case{3-(b)}, at least two vertices are handled in the while-loop, i.e., $k \geq 2$. 
\end{observation}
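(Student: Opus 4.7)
The plan is to show $k \geq 2$ by a short case analysis on how the while-loop of Case 3-(b) terminates. The key is that, under the hypotheses of Case 3-(b), the while-loop must perform at least one iteration, and the post-loop ``augmentation'' step is always available when the loop runs only once; by the abuse of notation explicitly stated in the algorithm description, this post-loop step counts as a second iteration.

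First I would verify that the while-loop executes at least once by checking both of its entry conditions at the beginning of Case 3-(b). Condition \textbf{(1)} ($e$ being weak-costly) holds by the very definition of Case 3-(b). For condition \textbf{(2)}, I would combine the weak-costly lower bound $w_e \geq 1000\,\alpha^{\ell(e)+1}$ (from Case 3-(b)'s regime) with the Case 3-(b) assumption $c_e < 999\,\alpha^{\ell(e)+1}$ to obtain
\[
\bar{c}_e \;=\; w_e - c_e \;\geq\; \alpha^{\ell(e)+1}.
\]
Since $\bar{\CC}(e) = \bigcup_{v \in e} \bar{\CC}(v)$ and $|e| \leq r \leq \alpha/4$ by the choice of $\alpha = 4r$ in \Cref{eq:par}, an averaging argument yields some $v \in e$ with $\bar{c}_v \geq \bar{c}_e / r \geq 4\,\alpha^{\ell(e)} > \alpha^{\ell(e)}$, confirming condition \textbf{(2)}.

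Next I would split into the two possible termination scenarios of the while-loop. If the loop runs at least twice on its own, then $k \geq 2$ and we are done. If instead it runs exactly once, then by the algorithm's description the three post-loop steps execute, producing a second hyperedge $e_u$ added to $M$. To confirm these steps are well-defined, I would use that the sampled $e_v$ lies in $\bar{\CC}(v) \subseteq \bar{\CC}(e)$, so by the definition of non-conflicting hyperedges at most $999\,\alpha^{\ell(e)+1}$ hyperedges in $\WW(e)$ share a vertex with $e_v$; hence
\[
|T| \;\geq\; w_e - 999\,\alpha^{\ell(e)+1} - 1 \;\geq\; \alpha^{\ell(e)+1},
\]
and by pigeonhole over the at most $r$ endpoints of each hyperedge in $T$, some $u \in e$ has $|T(u)| \geq |T|/r \geq \alpha^{\ell(e)}$, allowing $\addrandom(T(u))$ to execute. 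Then, invoking the paper's convention that this post-loop step is viewed as an additional while-loop iteration, we again get $k \geq 2$.

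The two cases are exhaustive, so in every run of Case 3-(b) at least two vertices are processed through the (extended) while-loop, proving the observation. The main ``obstacle'' is very mild: it is simply ensuring that the arithmetic relating $w_e$, $c_e$, $\bar{c}_e$ and the per-vertex quantities lines up with the choice $\alpha = 4r$, and that the post-loop sampling is always feasible when the loop runs only once --- both of which follow directly from the Case 3-(b) regime and the non-conflicting property of $e_v$.
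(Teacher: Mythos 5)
Your proposal is correct and follows essentially the same route as the paper: the observation holds by the stated convention that the post-loop sampling step (for the vertex $u$) counts as an additional iteration, combined with the fact that the while-loop's entry conditions hold at the start of Case 3-(b). You simply spell out the details the paper treats as immediate (the averaging argument giving some $v\in e$ with $\bar{c}_v > \alpha^{\ell(e)}$, and the feasibility of the post-loop sampling via the non-conflicting property of $e_v$), so there is no substantive difference in approach.
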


\begin{claim}\label{clm:case3a-r2}
	Consider \case{3-(b)} in $\handleedge(e)$ and suppose the while-loop is run for $k$ iterations for vertices $v_1,\ldots,v_k$. Then the runtime of this part \underline{ignoring the recursive calls}, is: 
	\[
		O(\sum_{i=1}^{k} \Gamma^{\new}(e_{v_i}) + \alpha^{\ellnew(e_{v_i})+2}). 
	\]  
\end{claim}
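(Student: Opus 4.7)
The plan is to decompose the runtime of Case 3-(b), aside from the recursive calls, into four pieces: $(A)$ the initial scan of $\WW(e)$ that builds $\CC(e)$ and $\bar\CC(e)$; $(B)$ the $k$ invocations of $\addrandom(\bar\CC(v_i))$ inside the while-loop (including the appended final step when the natural loop ran only once); $(C)$ the incremental refresh of the $\bar c_v$ counters between iterations; and $(D)$ the concluding $\detsettle$ sweep on any vertex of $e$ left unhandled, invoked when constraint $\textbf{(2)}$ caused the exit.

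Piece $(B)$ yields the stated bound directly. In iteration $i$ the hyperedge $e_{v_i}$ enters $M$ at level $\ellnew(e_{v_i}) \geq \ell(e) \geq \ellold(e_{v_i})$, so the hypothesis of \Cref{clm:addrandom-r2} is met and the call costs $O(\Gamma^\new(e_{v_i}) + \alpha^{\ellnew(e_{v_i})+2})$. Summing over $i \in [k]$ produces exactly the main term in the statement of the claim.

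Pieces $(A)$, $(C)$, and $(D)$ each need to be bounded and then absorbed into the main sum. Using the Case-3 inequalities $w_e < 4\alpha^{\ell(e)+2}$ (since $e$ is weak-costly and $\Gamma(e) < \alpha^{\ell(e)+3}$), piece $(A)$ costs $O(r \cdot w_e) = O(\alpha^{\ell(e)+3})$ because each of the $w_e$ weak hyperedges is scanned together with its $O(r)$ endpoints. Piece $(C)$ touches only hyperedges temporarily deleted by the preceding $\addrandom$ call, each such deletion updating $O(r)$ counters of vertices of $e$; as at most $w_e$ deletions occur across the whole loop, the cumulative refresh cost is also $O(r \cdot w_e) = O(\alpha^{\ell(e)+3})$. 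Piece $(D)$ is bounded by \Cref{clm:det-settle-r2}, giving $O(\sum_{v \in e}(s_v + r w_v)) = O(\Gamma(e)) = O(\alpha^{\ell(e)+3})$. The resulting $O(\alpha^{\ell(e)+3})$ overhead is then absorbed into the main sum by invoking \Cref{obs:case3b-k2} ($k \geq 2$) and exhibiting at least one iteration whose sample space has size $\geq \alpha^{\ell(e)+1}$: when the natural loop runs only once the appended iteration samples from $T(u)$ of size $\geq \alpha^{\ell(e)+1}$, forcing $\ellnew(e_{v_k}) \geq \ell(e)+1$; otherwise the repeated raising of $\ell(v_i)$ above $\ell(e)$ in successive iterations produces a similarly large term. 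Either way a single term $\alpha^{\ellnew(e_{v_i})+2} \geq \alpha^{\ell(e)+3}$ swallows the overhead.

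The main obstacle will be this final absorption step. Since a generic iteration only guarantees $\ellnew(e_{v_i}) \geq \ell(e)$, a single $\alpha^{\ellnew(e_{v_i})+2}$ term could a priori be as small as $\alpha^{\ell(e)+2}$, which is a factor of $\alpha = 4r$ smaller than the overhead. Making the absorption rigorous therefore requires a careful case split between the ``natural $k = 2$ with appended iteration'' case and the ``$k \ge 2$ without appendix'' case, in each case identifying a specific iteration whose sampled level is strictly above $\ell(e)$. Once this step is carried out, the rest of the proof is direct bookkeeping using \Cref{clm:addrandom-r2} and \Cref{clm:det-settle-r2}.
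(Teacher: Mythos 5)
Your decomposition into pieces $(A)$--$(D)$ and your treatment of piece $(B)$ via \Cref{clm:addrandom-r2} are fine and agree with the paper, but the absorption step --- which you yourself flag as the main obstacle --- does not go through, and it is exactly where the paper's proof does something different. Your claim that some iteration must satisfy $\ellnew(e_{v_i}) \geq \ell(e)+1$ is not justified and is false in general. In the appended step the sampling is from $T(u)$, not from $T$: only $\card{T} \geq \alpha^{\ell(e)+1}$ is guaranteed, and $T(u)$ is merely of size $\geq \alpha^{\ell(e)}$ (the set $T$ is split among the at most $r$ endpoints of $e$, and $\alpha = 4r$), so $\addrandom(T(u))$ may create $e_u$ at level exactly $\ell(e)$. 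Likewise, every ordinary iteration samples from $\bar{\CC}(v_i)$ with $\bar{c}_{v_i} > \alpha^{\ell(e)}$, which again only yields $\ellnew(e_{v_i}) \geq \ell(e)$; indeed the discussion around \Cref{obs:case3b-r2} stresses that \case{3-(b)} is precisely the case in which new matched hyperedges can be created at the \emph{same} level as $e$. Finally, \Cref{obs:case3b-k2} ($k \geq 2$) buys only a constant factor, not the factor $\alpha = \Theta(r)$ you need to swallow an $O(\alpha^{\ell(e)+3})$ overhead with terms of size $\alpha^{\ell(e)+2}$.

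The paper absorbs the overhead through multiplicity rather than through one high-level iteration: since each sampled hyperedge is non-conflicting, it intersects at most $999 \cdot \alpha^{\ell(e)+1}$ hyperedges of $\WW(e)$, so each iteration removes only $O(\alpha^{\ell(e)+1})$ hyperedges from consideration, and since $w_e \geq 1000 \cdot \alpha^{\ell(e)+1}$ at the start, the loop runs $k = \Omega(w_e/\alpha^{\ell(e)+1})$ times. Hence the $O(r \cdot w_e)$ cost of forming $\CC(e)$ (and the counter updates) is $O(k \cdot \alpha^{\ell(e)+2})$, which fits the stated budget because every term contributes at least $\alpha^{\ellnew(e_{v_i})+2} \geq \alpha^{\ell(e)+2}$. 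For your piece $(D)$, the paper also does not use the initial bound $\Gamma(e) < \alpha^{\ell(e)+3}$ (which in any case refers to $\Gamma(e)$ before the loop, not at its end): it argues that when the loop exits via condition $\textbf{(2)}$ the \emph{current} tentative cost is $O(\alpha^{\ell(e)+2})$ --- only $O(r \cdot \alpha^{\ell(e)})$ non-conflicting hyperedges remain at unhandled vertices, the conflicting ones number $O(\alpha^{\ell(e)+1})$ by the premise of \case{3-(b)}, and $e$ is still weak-costly --- so the concluding $\detsettle$ sweep costs $O(\alpha^{\ell(e)+2})$ and is absorbed already with $k \geq 1$. To repair your argument you would need to replace the ``one iteration at level $\geq \ell(e)+1$'' step with this lower bound on $k$ (or an equivalent counting argument).
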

\begin{proof}
	Firstly, note that whenever we handle a non-conflicting hyperedge in the while-loop, we may only remove $999 \cdot \alpha^{\ell(e)+1}$ other hyperedges (conflicting or non-conflicting) from the consideration that are neighbor to this hyperedge. 
	This firstly implies that $k = \Omega( w_e/\alpha^{\ell(e)+1})$. 
	
	At the beginning of the procedure, forming the set $\CC(e)$ takes $O(r \cdot w_e) = O(r \cdot k \cdot \alpha^{\ell(e)+1}) = O(k \cdot \alpha^{\ell(e)+2})$ time which is within the budget of the claim as $\ellnew(v_i) \geq \ell(e)$ for all $v_i$.  
	
	Each iteration of the while-loop, by~\Cref{clm:case2-r2}, takes $O(\Gamma^{\new}(e_{v_i}) + \alpha^{\ell(e_{v_i})+2})$ time, which is again within the budget (the time needed for updating the $\bar{c}_v$-values 
	is dominated by the previous term asymptotically).
	
	If at the end we are in case $(1)$ of while-loop termination, there is nothing else left to do as this only involves a recursive call to $\handleedge$. 
	
	Finally, if at the end we are in the case $(2)$ of while-loop termination, then the total number of non-conflicting hyperedges remained incident on vertices of $e$ that are not handled yet is $O(r \cdot \alpha^{\ell(e)})$. 
	The number of conflicting hyperedges is also $O(\alpha^{\ell(e)+1})$ by the premise of \case{3-(b)}. Also, as $e$ is still weak-costly, the total number of strong hyperedges is $O(r \cdot w_e) = O(r^2 \cdot \alpha^{\ell(e)})$. 
	As such, at this point, $\Gamma(e) = O(\alpha^{\ell(e)+2})$ and so by \Cref{clm:case1-r2}, this part takes $O(\Gamma(e)) = O(\alpha^{\ell(e)}+2)$ which is again 
	within the budget as $k \geq 1$ and $\ell(e) \leq \ell(v_i)$ for all $v_i$. 
\end{proof}

\subsubsection{Case 4: when $\Gamma(e) \geq \alpha^{\ell(e)+3}$ and $e$ is weak-costly} \label{sec:weak-costly-r2}

This is the final case of the procedure and can be handled in a manner almost identical to \case{2}; the only difference is that we now pick a vertex with more weak hyperedges as opposed to strong ones and repeat this in
a while-loop. In particular, the procedure is as follows: 

\noindent	
While there is a vertex $v \in e$ with $w_v \geq \alpha^{\ell(e)+1}$:\footnote{Notice that this condition of the while-loop is true at the beginning of the algorithm.} 
\begin{itemize}
	\item We find $\ellstar := \argmax_{\ell} \set{\tildeo_{v,\ell} \geq \alpha^{\ell}}$. 
	\item We then run $\setlevel(v,\ellstar_v)$ followed by $\addrandom(\OOnew(v))$ which inserts the hyperedge $e_v$ to $M$. 
\end{itemize}
Once the while-loop finishes, we compute $\Gamma(e)$ and run either of \case{1}, \case{2}, or \case{3} accordingly. 
This finalizes the description of the update algorithm in this case. The proof of the following claim is identical to~\Cref{clm:case2-r2}. 

\begin{claim}\label{clm:case4-r2}
	Consider \case{4} in $\handleedge(e)$ and suppose the while-loop is run for $k$ iterations for vertices $v_1,\ldots,v_k$. Then the runtime of this part \underline{ignoring the recursive calls}, is: 
	\[
		O(\sum_{i=1}^{k} \Gamma^{\new}(e_{v_i}) + \alpha^{\ellnew(e_{v_i})+2}). 
	\]  
\end{claim}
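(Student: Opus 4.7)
The plan is to repeat the proof of Claim~\ref{clm:case2-r2} for each of the $k$ iterations of the while-loop and then sum. Fix iteration $i$ with its chosen vertex $v_i$, and let $\ellstar_i$ denote the value of $\ellstar$ selected in that iteration. By construction, the hyperedge $e_{v_i}$ inserted into $M$ satisfies $\ellnew(e_{v_i}) = \ellstar_i$. Ignoring the recursive $\handleedge$ calls triggered by $\testinsert$, the work in iteration $i$ decomposes into three pieces---determining $\ellstar_i$, running $\setlevel(v_i,\ellstar_i)$, and running $\addrandom(\OOnew(v_i))$---and I will charge each within the budget $O(\Gamma^{\new}(e_{v_i}) + \alpha^{\ellnew(e_{v_i})+2})$.

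By Claim~\ref{clm:tildeo}, the values $\tildeo_{v_i,\ell}$ can be enumerated incrementally, so determining $\ellstar_i$ takes $O(\ellstar_i)$ time and fits in the budget. By Claim~\ref{clm:set-level-r2}(ii), $\setlevel(v_i,\ellstar_i)$ costs $O(r\cdot o^{\new}_{v_i} + \ellstar_i)$; since $\ellstar_i$ is chosen as the \emph{argmax} of $\ell$ with $\tildeo_{v_i,\ell}\ge\alpha^\ell$, we have $\tildeo_{v_i,\ellstar_i+1} < \alpha^{\ellstar_i+1}$, and because the formula in Claim~\ref{clm:tildeo} makes $\tildeo_{v_i,\cdot}$ monotone non-decreasing, this yields $o^{\new}_{v_i} = \tildeo_{v_i,\ellstar_i}<\alpha^{\ellstar_i+1}$. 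Using $\alpha\ge 4r$ from~\Cref{eq:par}, the $\setlevel$ cost is then $O(\alpha^{\ellstar_i+2})$, again within budget. To invoke Claim~\ref{clm:addrandom-r2} for the $\addrandom$ step, I must verify $\ellold(e_{v_i})\le\ellnew(e_{v_i})=\ellstar_i$: the loop entry condition $w_{v_i}\ge\alpha^{\ell(e)+1}$ gives $\tildeo_{v_i,\ell(e)+1}\ge\alpha^{\ell(e)+1}$, hence $\ellstar_i\ge\ell(e)+1$, while $e_{v_i}\in\OOnew(v_i)$ after the raise forces $\ellold(e_{v_i})\le\ellstar_i$ (otherwise $v_i$ could not be its new owner). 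Claim~\ref{clm:addrandom-r2} then bounds the $\addrandom$ step by $O(\Gamma^{\new}(e_{v_i}) + \alpha^{\ellnew(e_{v_i})+2})$, which already dominates the earlier two terms. Summing over $i=1,\dots,k$ gives the claim.

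The only place requiring real care---and the sole deviation from being a verbatim port of the Case~2 proof---is the inequality $o^{\new}_{v_i}<\alpha^{\ellstar_i+1}$ used to control $\setlevel$; it depends on the specific \emph{argmax} rule by which $\ellstar_i$ is selected here, rather than the \emph{argmin} rule used inside $\randsettle$ of the $O(r^3)$ algorithm. The post-loop dispatch into Case~1, 2, or~3 is not part of the runtime bounded by this claim: those continuations are either subsumed by their own claims (Claims~\ref{clm:case1-r2}, \ref{clm:case2-r2}, \ref{clm:case3b-r2}, \ref{clm:case3a-r2}) or are recursive $\handleedge$ calls explicitly excluded by the statement. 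The $O(r)$ overhead per iteration for re-evaluating the while-loop guard and updating the $w_v$-counters is absorbed into the $\alpha^{\ellnew(e_{v_i})+2}$ term.
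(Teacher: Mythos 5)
Your proposal is correct and follows essentially the same route as the paper, which simply declares the proof identical to that of \Cref{clm:case2-r2}: per iteration you bound the $\ellstar$-computation and $\setlevel$ via $\tildeo_{v_i,\ellstar_i}<\alpha^{\ellstar_i+1}$ and the sampling step via \Cref{clm:addrandom-r2}, then sum over the $k$ iterations, excluding the post-loop dispatch exactly as the paper does. The extra care you take with the argmax rule and the precondition $\ellold(e_{v_i})\le\ellnew(e_{v_i})$ only makes explicit what the paper leaves implicit.
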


The proof of above claim is exactly as before and we do not repeat it again. 

\medskip
\noindent
This concludes the description of our update algorithm.

% !TeX root = main.tex 
%!TEX root = main.tex

\section{The Analysis of the $O(r^2)$-Update Time Algorithm} \label{analr2}

Recall the notion of $\epoch$ from the previous algorithm and the following equation that bounds the total cost of update at time $t$: 
\begin{align}
	C(t) = O(r) + \sum_{\epochc(t)} \costc(\epoch(e,t)) + \sum_{\epocht(t)} \costt(\epoch(e,t)). \label{eq:charge-r2}
\end{align}

As before, we give a charging argument that charges the cost of each update to the epochs created in that time step and then use this to bound the runtime of the algorithm.

\subsection{Re-distributing Computation Costs to Epochs}\label{sec:det-r2}

The following is an analogue of~\Cref{lem:charge} with the exception that we are now charging ``less time'' to each epoch, specifically, by a factor $\alpha = \Theta(r)$.  

\begin{lemma}\label{lem:charge-r2}
	The total computation cost $C(t)$ of an update at time $t$ in \Cref{eq:charge-r2} can be charged to the epochs in the RHS so that any level-$\ell$ epoch is charged with $O(\alpha^{\ell+2})$ units of cost. 
\end{lemma}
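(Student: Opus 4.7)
The plan is to mirror the case analysis from the proof of \Cref{lem:charge}, now invoking the sharper runtime bounds established in the algorithm description (\Cref{clm:case1-r2,clm:case2-r2,clm:case3b-r2,clm:case3a-r2,clm:case4-r2}). The $O(r) = O(\alpha)$ baseline term in \Cref{eq:charge-r2} absorbs all trivial updates (hyperedge insertions and non-matching deletions) as well as any constant-cost bookkeeping inside $\handleedge$; a hyperedge insertion that actually enters $M$ creates a level-$0$ epoch whose $O(\alpha) = O(\alpha^{0+2})$ charge is trivially within budget. For the interesting case, when a matched hyperedge $e$ is removed from $M$ (by the adversary or by an internal step) and $\handleedge(e)$ is triggered, I would induct on the recursion tree of $\handleedge$.

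In \case{1}, by \Cref{clm:case1-r2} the cost is $O(\alpha^{\ell(e)+2})$ and is charged to $\costt(\epoch(e,t))$. In \case{2} and \case{3-(a)}, the cost ignoring the recursion inside $\testinsert$ is $O(\Gamma^{\new}(e') + \alpha^{\ellnew(e')+2})$ for the newly sampled edge $e'$ at level $\ellnew(e') > \ell(e)$: if $\testinsert(e')$ enters case (a), its guard $\Gamma^{\new}(e') < 1000\,\alpha^{\ell(e')+2}$ bounds the entire expression by $O(\alpha^{\ell(e')+2})$, which I charge to $\costc(\epoch(e',t))$; otherwise no epoch for $e'$ is created, and I fold the current cost into the cost of the recursive $\handleedge(e')$ call, handled by induction. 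In \case{3-(b)} and \case{4}, the while loop yields a sequence $e_{v_1},\dots,e_{v_k}$ of sampled hyperedges, and \Cref{clm:case3a-r2,clm:case4-r2} bound the cumulative cost (ignoring recursion) by $\sum_{i=1}^{k} O\paren{\Gamma^{\new}(e_{v_i}) + \alpha^{\ellnew(e_{v_i})+2}}$, which I distribute term by term to $\costc(\epoch(e_{v_i},t))$ when the corresponding $\testinsert$ enters case (a), and fold into the recursive $\handleedge(e_{v_i})$ call otherwise. Finally, the re-insertion of the temporarily deleted hyperedges in $\DD(e)$ costs $O(r\cdot|\DD(e)|) = O(\alpha^{\ell(e)+2})$ (by an analogue of \Cref{obs:temp}, since each matched $e$ is still responsible for $O(\alpha^{\ell(e)+1})$ hyperedges in $\DD(e)$) and is charged to $\costt(\epoch(e,t))$.

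The main obstacle will be to bound the number of $O(\alpha^{\ell+2})$ charges per epoch despite the nested recursion between $\handleedge$ and $\testinsert$. The resolution follows the geometric-sum flavor of \Cref{lem:charge}: each chain of $\testinsert$ case-(b) invocations without an epoch creation strictly increases the level of the sampled edge in \case{2}, \case{3-(a)}, and \case{4}, so the cumulative cost of such a chain telescopes into the cost of the final $\handleedge$ call whose $\testinsert$ enters case (a) and actually inserts an edge into $M$. In \case{3-(b)} the sampled edge may share a level with $e$, but \Cref{obs:case3b-k2} together with the strict shrinkage of $\Gamma(e)$ across while-loop iterations (each iteration removes $\Omega(\alpha^{\ell(e)+1})$ weak hyperedges from the budget) forces any such same-level chain to be short and the incurred cost to telescope as well. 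Verifying this strict-progress behavior across \case{3-(b)} is the delicate point that will require the most care, but given it each epoch is charged only $O(1)$ times and the stated $O(\alpha^{\ell+2})$ bound per level-$\ell$ epoch follows.
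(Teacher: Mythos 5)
Your proposal follows the same overall architecture as the paper's proof: it mirrors the case analysis of $\handleedge$, charges to the created epoch when $\testinsert$ succeeds (using the guard $\Gamma^{\new}(e') < 1000\alpha^{\ell(e')+2}$), and otherwise defers the cost to the recursive $\handleedge$ call. What the paper formalizes via the ``borrowed charge'' device you express as folding cost into the recursion; these are the same bookkeeping idea, and your treatment of the re-insertion of $\DD(e)$ is a reasonable explicit addition.

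The one place where you drift from the paper's reasoning is the resolution of the same-level recursion in \case{3-(b)}. You invoke \Cref{obs:case3b-k2} but attribute the convergence to ``the strict shrinkage of $\Gamma(e)$ across while-loop iterations \dots forces any such same-level chain to be short.'' That shrinkage bounds the number of while-loop iterations \emph{inside} a single $\handleedge(e)$ call, but it does not by itself bound the depth or accumulated cost of a \emph{chain of recursive} $\handleedge$ calls that all occur at level $\ell(e)$. The paper's actual resolution is a fan-out argument: because $k\ge 2$, the borrowed charge $O(\Gamma(e))$ is redistributed across \emph{at least two} children, so along any same-level path of the recursion tree the borrowed charges halve (up to constants) at each step, and the total charge along the path is dominated by that of the final node whose $\testinsert$ enters case (a). Since you already cite \Cref{obs:case3b-k2} and flag this as the delicate point, the fix is a matter of rephrasing rather than of filling a conceptual hole, but as written the ``chain-is-short'' claim does not justify the constant-times bound.
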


We prove this lemma in a sequence of claims in this section. In the following, whenever we say ``some computation cost can be charged to $\costc(\epoch(e,t))$ or $\costt(\epoch(e,t))$'' we mean that the cost of computation is $O(\alpha^{\ell+2})$ and there 
is a level-$\ell$ epoch at time $t$ that is either created or terminated, respectively -- we ensure that we will not charge such costs to an epoch more than a constant number of times. This then immediately satisfies the bounds in~\Cref{lem:charge-r2}.

\paragraph{Borrowed charge:} An important change in our analysis compared to the previous algorithm is the notion the \textbf{borrowed charge}. Basically, we also  allow \emph{charging $O(\Gamma(e'))$ computation time} 
simply to calling of $\handleedge(e')$ as a subroutine in $\handleedge(e)$ time $t$ (in different cases of the algorithm). To handle this, we assume each run of $\handleedge(e')$ takes at least $O(\Gamma(e'))$ time and then re-distribute this cost in handling $e'$ to 
the epochs created or terminated by $\handleedge(e')$ or as borrowed cost to $\handleedge(e'')$ which is called inside $\handleedge(e')$ as a subroutine (basically, instead of finding an epoch for $e$ to charge this cost to, we let hyperedge $e'$ handle these). 

\begin{remark}
	As we are analyzing the computation times asymptotically, the reader may worry that these borrowed charges start to add up during recursive calls by constant factors and eventually break the bounds. 
	However, as will be evident (and remarked throughout) our analysis, we will only charge the cost of $\handleedge(e)$ to another $\handleedge(e')$ if $\ell(e') > \ell(e)$ with the exception of \case{3-(b)} wherein 
	we charge the cost of $\handleedge(e)$ to \emph{at least two} $\handleedge(e')$ and $\handleedge(e'')$ with $\ell(e') = \ell(e'') = \ell(e)$. As such, in a series of recursive borrowed charges, the cost 
	of the final charging is within a constant factor of the \emph{entire} series, and thus the analysis stands correct. 
\end{remark}

We analyze~\Cref{lem:charge-r2} in each case of the update algorithm for $\handleedge$ and ignore the costs of hyperedge insertions and hyperedge deletions outside $M$ as they only take $O(r)$ time and hence clearly 
satisfy the bounds in~\Cref{lem:charge-r2}. 

\subsubsection{Case 1 of $\handleedge(e)$} 

The total computation time in this case is $O(\alpha^{\ell(e)+2})$ by~\Cref{clm:case1-r2} for the deleted hyperedge $e$ from $M$, so we can charge this time to the termination of $\epoch(e,t)$ in~\Cref{eq:charge-r2}. 
Also, $e$ may have $O(\Gamma(e))$ borrowed cost which by the guarantee on $\Gamma(e)$ in this case is also $O(\alpha^{\ell(e)+2})$ which can be charged to the termination of $\epoch(e,t)$ in~\Cref{eq:charge-r2}. 

This case does not generate any borrowed cost and can be seen as the base case of reduction for the borrowed costs. 

\subsubsection{Case 2 of $\handleedge(e)$}

The total computation time in this case is $O(\Gamma^{\new}(e_v) + \alpha^{\ell(e_v)+2})$ by~\Cref{clm:case2-r2} for the new sampled edge $e_v$ to be (potentially) inserted to $M$. 

When inserting $e_v$ through $\testinsert(e_v)$ (as a subroutine in $\addrandom$), there are two cases: 
\begin{itemize}
	\item Case (a) which results in $e_v$ joining $M$. In this case, $\Gamma^{\new}(e_v) = O(\alpha^{\ell(e_v)+2})$ by definition. We charge the entire cost of this case then 
	to the creation of $\epoch(e_v,t)$ which is within the budget of~\Cref{eq:charge-r2}. 
	\item Case (b) which results in $e_v$ being removed from $M$. In this case, $\alpha^{\ell(e_v)+2} = O(\Gamma^{\new}(e_v))$. We charge the entire cost of this 
	as a borrowed cost to $\handleedge(e_v)$ called in this case which is within the budget of borrowed costs. Moreover by~\Cref{obs:case2-r2}, we have that $\ellnew(e_v) > \ell(e)$ and thus we are only 
	sending borrowed charge to a higher level hyperedge. 
\end{itemize}
\noindent
Moreover, this case may start with a borrowed cost of $O(\Gamma(e))$. As by~\Cref{obs:case2-r2}, for the edge $e_v$ in this case, we have $\alpha^{\ellnew(e_v)} \leq s_v < \alpha^{\ellnew(e_v)+1}$, 
and that $\Gamma(e) = O(r \cdot s_v)$ in \case{2}, we obtain that $\Gamma(e) = O(\alpha^{\ellnew(e_v)+2})$. Thus, the borrowed cost of the algorithm in this case is at most equal to the computation cost itself and thus can be re-distributed in the same manner.  

The remaining costs are recursive calls and are thus ignored here. 

\subsubsection{Case 3-(a) of $\handleedge(e)$}

This case takes $O(\Gamma^{\new}(\tilde e) + \alpha^{\ellnew(\tilde e)+2})$ time by~\Cref{clm:case3b-r2} and creates a new $\epoch(\tilde e,t)$ and so its cost can be charged to this epoch or become a borrowed charge for $\handleedge(\tilde e)$ accordingly, 
exactly as in the description of \case{2} above (by basically considering the two different cases of $\testinsert(\tilde e)$ accordingly). 

Moreover, the borrowed charge of $O(\Gamma(e))$ in this case is $O(\alpha^{\ell(e)+3})$ by the definition of \case{3}. Moreover, by~\Cref{obs:case3a-r2}, $\ellnew(\tilde e) \geq \ell(e)+1$ and thus 
$\Gamma(e) = O(\alpha^{\ellnew(\tilde e)+2})$. As such, this borrowed charge is asymptotically at most the computation cost in this case and can be re-distribute the same way.

\subsubsection{Case 3-(a) of $\handleedge(e)$}
The total time cost in this case is 
\[
O\paren{\sum_{i=1}^{k} \Gamma^{\new}(e_{v_i}) + \alpha^{\ellnew(e_{v_i})+2}},
\]
 by~\Cref{clm:case3a-r2}, where $v_1,\ldots,v_k$ are the handled vertices in the while-loop. 

The main cost of each handled vertex $v_i$ is through $\testinsert(e_v)$. As before, if $\testinsert$ success in placing $e_v$ in $M$, we will have $\Gamma^{\new}(e_{v_i}) = O(\alpha^{\ellnew(e_{v_i})+2})$ 
and thus we can charge the cost of this vertex to the creation of $\epoch(e_v,t)$; if $e_v$ fails, then we charge the cost of $O(\Gamma^{\new}(e_{v_i}))$ to $\handleedge(e_{v_i})$. 

Moreover, we start this case with $O(\Gamma(e))$ borrowed charge. As shown in the proof of~\Cref{clm:case3a-r2}, if we conclude this case without any recursion to \case{1} or \case{2}, then $\Gamma(e)$
is asymptotically $O(\sum_{i=1}^{k} \Gamma^{\new}(e_{v_i}) + \alpha^{\ellnew(e_{v_i})+2})$ and thus 
we can re-distribute this charge again as a computation cost of this hyperedge. If we recursively go to either of \case{1} or \case{2}, we can continue the analysis in those cases here by the bound $\Gamma^{\new}(e)$ (note that this is not really a recursive call or a borrowed cost -- we simply continue the execution of $\handleedge$ in another case and this can never form a loop as argued before).

An important remark is in order here. Unlike all other cases in the algorithm, here when calling $\handleedge(e_{v_i})$ we are no longer guaranteed that $\ell(e_{v_i}) > \ell(e)$ and indeed 
it might be the case that $\ell(e_{v_i}) = \ell(e)$. Nevertheless, by~\Cref{obs:case3b-k2}, here we will have $k \geq 2$, and thus we can re-distribute the borrowed cost of $\handleedge(e)$ itself between \emph{two} borrowed cost 
at the same level, ensuring that the recursive chain of borrowed costs to not get unboundedly large. 

\subsubsection{Case 4 of $\handleedge(e)$} 

The cost in this case is exactly as in the previous \case{3-(b)} and \case{2} and we do not repeat the same argument. We should only note that here, unlike \case{3-(b)}, any recursive call to $\handleedge$ will result 
in a higher level hyperedge to be considered (and not two hyperedges of potentially the same level).

\medskip

This concludes the proof of~\Cref{lem:charge-r2}. 

\subsection{Natural and Induced Epochs} 

The remainder of our analysis in these parts are straightforward extensions of the $O(r^3)$ update time algorithm and we only focus on highlighting the differences. 
As in our previous algorithm, we are going to rely on~\Cref{assumption1} and then further re-distribute the costs of epochs from induced epochs to natural ones (and then across levels). 

We prove the following lemma as an analogue of~\Cref{lem:induced-natural} for the new algorithm. 

\begin{lemma}\label{lem:induced-natural-r2}
	The total cost charged to all induced epochs can be charged to natural epochs so that any level-$\ell$ natural epoch is charged with the costs of at most $r-1$ induced epochs at \emph{lower levels}.
\end{lemma}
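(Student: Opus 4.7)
The plan is to mirror the proof of Lemma~\ref{lem:induced-natural} from the $O(r^3)$ algorithm, adapting the case analysis to the cases of $\handleedge$ rather than $\handlefree$ and $\randsettle$. I would walk through the timeline of updates: for each time step $t$ at which a hyperedge deletion from $M$ occurs, I would identify every induced epoch terminated at $t$ and charge its cost either to an induced/natural epoch created at $t$ (whose charges are then re-distributed when \emph{it} terminates) or directly to a natural epoch.

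In the $O(r^2)$ algorithm, induced epochs are terminated solely inside case~(a) of $\testinsert(e_v)$: when a fresh hyperedge $e_v$ is inserted into $M$, every previously matched hyperedge incident on $e_v$ is removed and its epoch terminates as induced. This is the common ``kick-out'' step across all of \case{2}, \case{3-(a)}, \case{3-(b)} and \case{4} of $\handleedge(e)$; the cases only differ in how $e_v$ is sampled via $\addrandom$. For each such insertion I would charge the at most $r-1$ kicked-out induced epochs to the creation of $\epoch(e_v,t)$, and rely on transitive propagation to eventually push every such charge onto a natural epoch.

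To establish the lemma, I need to verify two facts uniformly across the four cases: (i) $e_v$ always contains a freed (unmatched) endpoint, so that at most $r-1$ of its endpoints can be matched and hence at most $r-1$ hyperedges are kicked out; and (ii) every kicked-out hyperedge $e'$ satisfies $\ell(e') < \ell(e_v)$. Item (ii) is exactly Observations~\ref{obs:case2-r2}, \ref{obs:case3a-r2} and~\ref{obs:case3b-r2}, and the analogous level argument carries over to \case{4} since the sampling there is identical in spirit to \case{2}. For item (i), in each case the sample space is rooted at some vertex of $e$ (or of a predecessor edge within a while-loop iteration) that was just freed: in \case{2} and \case{4} we sample from $\OOnew(v)$ for a freed $v \in e$; in \case{3-(a)} we sample from $\CC(e) \subseteq \bigcup_{v\in e} \WW(v)$, so the sampled $\tilde e$ is owned by some $v \in e$, which is freed; and in \case{3-(b)} each iteration samples from $\bar{\CC}(v_i)$ (or the auxiliary set $T(u)$) for a freed $v_i \in e$.

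The main obstacle I anticipate is a potential worry in \case{3-(b)}, where the while loop creates multiple new epochs $e_{v_1},\ldots,e_{v_k}$ at levels that may equal (rather than strictly exceed) $\ell(e)$; one might fear this generates cycles in the induced-to-natural charging graph. However, the charges proposed here only flow from an induced epoch to a created epoch at strictly \emph{higher} level (by Observation~\ref{obs:case3b-r2}, applied to the kicked-out hyperedges), so the same-level phenomenon between $e$ and $e_{v_i}$ is orthogonal---it was already absorbed in the borrowed-cost bookkeeping of Lemma~\ref{lem:charge-r2}. The charging graph therefore remains acyclic, and each natural epoch at level $\ell$ ends up directly absorbing the costs of at most $r-1$ induced epochs at strictly lower levels, as required.
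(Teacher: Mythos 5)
Your proposal is correct and follows essentially the same route as the paper: the paper's proof reduces to the proof of Lemma~\ref{lem:induced-natural}, noting for each case of $\handleedge$ that the kicked-out (induced) hyperedges lie at strictly lower level than the newly created matched hyperedge (via Observations~\ref{obs:case2-r2}, \ref{obs:case3a-r2}, \ref{obs:case3b-r2}, and the analogous statement for Case~4), and that the ``same-level'' recursion in Case~3-(b) is already handled by the borrowed-cost mechanism rather than the induced-to-natural charging. Your explicit verification that each newly inserted hyperedge contains a freed endpoint (so that at most $r-1$ matched hyperedges get kicked out by $\testinsert$ case~(a)) and your identification of $\testinsert$ case~(a) as the sole source of induced epochs are exactly the ingredients the paper implicitly relies on when invoking the $O(r^3)$-case argument.
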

\begin{proof}
	The proof of this is exactly the same as~\Cref{lem:induced-natural} by noting that: 
	\begin{itemize}
		\item In \case{1}: there are no induced epochs; 
		\item In \case{2}: for any induced $\epoch(e',t)$, $\ell(e') < \ell(e)$ by~\Cref{obs:case2-r2};
		\item In \case{3-(a)}: for any induced $\epoch(e',t)$, $\ell(e') < \ell(e)$ by~\Cref{obs:case3a-r2};
		\item In \case{3-(b)}: for any induced $\epoch(e',t)$, $\ell(e') < \ell(e)$ by~\Cref{obs:case3b-r2}; notice the important distinction in this case that even though $\handleedge$ may be called recursively on a level hyperedge, we still have that the induced epochs are one lever lower by focusing on \emph{weak} hyperedges; 
		\item In \case{4}: for any induced $\epoch(e',t)$, $\ell(e') < \ell(e)$ as explained in the algorithm.  
	\end{itemize}
\noindent
	We can then apply the same analysis as in~\Cref{lem:induced-natural} and conclude the proof. 
\end{proof}

The rest of this analysis is then exactly as before by defining the notion of induced and natural levels and the recursive cost of any level-$\ell$ epochs. 
In particular, if we denote the highest recursive cost of any level-$\ell$ epoch by $\hat C_\ell$, for any level $\ell \ge 0$, 
then, by replacing~\Cref{lem:charge-r2,lem:induced-natural-r2} instead of~\Cref{lem:charge,lem:induced-natural}, we obtain that for the new algorithm:  
\[
	\hat C_{\ell} \le   O(\alpha^{\ell+2}) + \sum_{i=1}^{k} \hat 2 \cdot C_{\ell_i},
\]
where $\ell_i < \ell$ for $i \in [k]$ and $k \leq r-1$, with the base case of $C_0 = O(\alpha^2)$. This in turn translates to: 
\[
	\hat C_{\ell} = O(\alpha^{\ell+2}),
\]
which proves the following lemma. 
\begin{lemma} \label{important-r2}
For any $\ell \ge 0$, the recursive cost of any level-$\ell$ epoch is bounded by $O(\alpha^{\ell+2})$.
\end{lemma}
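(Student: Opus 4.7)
The plan is to derive Lemma~\ref{important-r2} by solving the recurrence
\[
\hat C_\ell \le O(\alpha^{\ell+2}) + \sum_{i=1}^{k} 2 \cdot \hat C_{\ell_i}, \qquad \ell_i < \ell,\ k \le r-1,
\]
that is assembled just before the statement, with base case $\hat C_0 = O(\alpha^2)$. First, I would pass to a cleaner single-term recurrence using the monotonicity fact that $\hat C_{\ell'} \le \hat C_{\ell-1}$ for every $\ell' \le \ell-1$ (this is immediate from the definition, since pushing all charges up to a higher level can only inflate the maximum recursive cost). Collapsing the summation yields
\[
\hat C_\ell \le 2(r-1) \cdot \hat C_{\ell-1} + O(\alpha^{\ell+2}).
\]

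Next, I would prove by induction on $\ell$ that $\hat C_\ell \le c \cdot \alpha^{\ell+2}$ for some absolute constant $c$. The base case $\ell = 0$ follows from $\hat C_0 = O(\alpha^2)$ by choosing $c$ at least the hidden constant. For the inductive step, let $c_0$ denote the constant hidden in the additive $O(\alpha^{\ell+2})$ term. Using the inductive hypothesis,
\[
\hat C_\ell \le 2(r-1) \cdot c \cdot \alpha^{\ell+1} + c_0 \cdot \alpha^{\ell+2} = \left(\frac{2(r-1)}{\alpha} \cdot c + c_0\right) \alpha^{\ell+2}.
\]
Since $\alpha = 4r$ by~\eqref{eq:par}, we have $2(r-1)/\alpha \le 1/2$, so the parenthesized factor is at most $c/2 + c_0$. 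Picking $c \ge 2 c_0$ then closes the induction and yields $\hat C_\ell = O(\alpha^{\ell+2})$ as claimed.

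The real work has already been done upstream: Lemma~\ref{lem:charge-r2} controls the per-epoch charge by $O(\alpha^{\ell+2})$ (an $\alpha$-factor savings over the $O(r^3)$ algorithm, which is where the improvement from $O(r^3)$ to $O(r^2)$ ultimately comes from), and Lemma~\ref{lem:induced-natural-r2} bounds the induced-epoch fan-in by $r-1$. Given these two inputs, the final step is essentially a geometric-series calculation with no real obstacle; the only thing to watch is the parameter balance. The whole point of the choice $\alpha = \Theta(r)$, and specifically $\alpha \ge 4(r-1)$, is to ensure the multiplier $2(r-1)/\alpha$ is strictly below $1$, so that the recurrence stabilizes and the $O(\alpha^{\ell+2})$ additive contribution dominates at every level rather than compounding into an exponential blow-up across levels.
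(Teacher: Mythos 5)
Your proof is correct and follows essentially the same route as the paper: collapse the sum to $\hat C_\ell \le 2(r-1)\hat C_{\ell-1} + O(\alpha^{\ell+2})$ via monotonicity and resolve it as a geometric recurrence using $\alpha = 4r$, so that $2(r-1)/\alpha \le 1/2$. You merely spell out the induction that the paper leaves implicit ("the recurrence resolves to $\hat C_\ell = O(\alpha^{\ell+2})$"), so there is nothing substantive to add.
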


The sum of recursive costs over all fully-natural epochs is equal to the sum of actual costs over all  epochs (fully-natural, semi-natural and induced)
that have been \emph{terminated} throughout the update sequence, which, by~\Cref{assumption1}, bounds the total runtime of the algorithm.

\subsection{The Runtime Analysis}

Let $Y$  be the r.v.\ for the sum of recursive costs over all fully-natural epochs terminated during the entire sequence. As argued in~\Cref{important-r2},
$Y$ will be equal to the total runtime of the algorithm.  
\begin{lemma}  \label{expect-r2}
Let $t$ denote the number of updates in the algorithm (under~\Cref{assumption1}). With high probability, $Y = O(t \cdot \alpha^2 + N \log N \cdot \alpha^2)$. 
\end{lemma}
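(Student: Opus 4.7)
The plan is to adapt the probabilistic argument that proves \Cref{expect} (detailed in \Cref{app:r4}, itself following Lemma 3.5 of \cite{Sol16}), modifying it only to exploit the sharper recursive-cost bound $O(\alpha^{\ell+2})$ from \Cref{important-r2} in place of the $O(\alpha^{\ell+3})$ bound used in the $O(r^3)$ analysis. Since the deterministic re-distribution steps (Lemmas \ref{lem:charge-r2} and \ref{lem:induced-natural-r2}) have already brought us to the point where the total runtime equals the sum of recursive costs over \emph{fully-natural} epochs, the only remaining task is a probabilistic bound on how many fully-natural epochs terminate at each level over the course of $t$ updates.

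The backbone of the argument is the following observation, which I would verify case by case against the description of $\handleedge$: whenever a fully-natural level-$\ell$ epoch $\epoch(e^{\star},\cdot)$ is \emph{created}, the hyperedge $e^{\star}$ is sampled uniformly at random from a set $S$ of size $\Omega(\alpha^{\ell})$. This sample-space lower bound is furnished explicitly by \Cref{obs:case2-r2}, \Cref{obs:case3a-r2}, \Cref{obs:case3b-r2}, and the analogous statement for \case{4}. Since the epoch terminates only when the adversary explicitly deletes $e^{\star}$ (by the definition of \emph{natural}), and since the oblivious adversary is independent of the random choice of $e^{\star} \in S$, a standard exchange argument shows that in expectation at least $\Omega(\alpha^{\ell})$ hyperedges of $S$ are adversarially deleted during the epoch's lifetime.

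Next I would set up a global charging scheme: each fully-natural level-$\ell$ epoch is charged to the adversarial deletions that land in its sample space $S$ during its lifetime. A clean disjointness property comes from the temporarily-deleted hyperedges of \Cref{inv:deleted} --- while a matched hyperedge $e$ is alive, the hyperedges in its ``sample space'' are frozen inside $\DD(e)$, so the sample spaces of distinct currently-alive matched hyperedges are pairwise disjoint, and each adversarial deletion is charged to at most $O(1)$ ongoing epochs per level. Combined with a Chernoff/Azuma-type concentration (at each level separately) and a union bound over the $L = O(\log_{\alpha} N)$ levels, this yields with high probability
\[
\sum_{\ell \ge 0}\Paren{\#\text{ fully-natural level-}\ell\text{ epochs}} \cdot \alpha^{\ell} \;=\; O\Paren{t + N\log N}.
\]
Multiplying each term by the per-epoch recursive-cost bound $O(\alpha^{\ell+2})$ of \Cref{important-r2} and summing over $\ell$ gives
\[
Y \;=\; \sum_{\ell \ge 0}\Paren{\#\text{ fully-natural level-}\ell\text{ epochs}} \cdot O(\alpha^{\ell+2}) \;=\; O\Paren{t \cdot \alpha^2 \;+\; N\log N \cdot \alpha^2},
\]
as claimed.

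The main obstacle is verifying the ``$|S| = \Omega(\alpha^{\ell})$ with uniform sampling'' guarantee across all five subcases of $\handleedge$ \emph{in a way that is robust to conditioning}. \case{3-(b)} is the delicate one: here $S$ consists of non-conflicting weak hyperedges that need not be incident to a single vertex, and $\handleedge(e)$ may spawn two or more fresh epochs at the \emph{same} level as $e$ (\Cref{obs:case3b-k2}), so the simple ``strictly increasing level'' monotonicity from \cite{Sol16} and from our $O(r^3)$ algorithm does not hold verbatim. The right fix is to run the charging over a slightly coarser quantity --- the \emph{pair} (sample space, random draw) associated with each sampled hyperedge --- and to use the definition of \emph{non-conflicting} to guarantee that the $k \ge 2$ new level-$\ell$ epochs created in one call to $\handleedge(e)$ have sample spaces whose pairwise overlap is $O(\alpha^{\ell})$, so the aggregate sample mass remains $\Omega(k \cdot \alpha^{\ell})$. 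Once this combinatorial disjointness is in hand, the remainder of the probabilistic argument (martingale-based concentration and union bound over levels) is essentially identical to the one deferred to \Cref{app:r4} and omitted here for the same reasons.
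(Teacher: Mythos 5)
The overall plan — adapt the probabilistic argument of \Cref{app:r4}, replacing the $O(\alpha^{\ell+3})$ per-epoch cost with the sharper $O(\alpha^{\ell+2})$ bound from \Cref{important-r2} — is the right frame and matches the paper. However, there is a genuine gap in the middle of your argument, and you have also misidentified which subcase of $\handleedge$ is the genuinely delicate one.

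You write that ``while a matched hyperedge $e$ is alive, the hyperedges in its `sample space' are frozen inside $\DD(e)$, so the sample spaces of distinct currently-alive matched hyperedges are pairwise disjoint.'' This is exactly the claim that \emph{fails} for the hyperedge-centric sampling of \case{3-(a)}: there the sample space is $\CC(e)$, which is a union over all endpoints of $e$, and the sampled matched hyperedge $\tilde e$ need \emph{not} be incident to every hyperedge of $\CC(e)$. To preserve \Cref{inv:deleted}, the algorithm can only temporarily delete those $e' \in \CC(e)$ that intersect $\tilde e$; the (up to $\alpha^{\ell}$) non-intersecting ones remain live and available to future samplings. So the clean disjointness you invoke simply does not hold. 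Your attempted fix is addressed at \case{3-(b)} and is framed as a pairwise-overlap bound across \emph{multiple} new level-$\ell$ epochs created by one $\handleedge(e)$ call, but that is a concern for the \emph{deterministic} re-distribution (already discharged by \Cref{lem:induced-natural-r2} via the $k \ge 2$ observation), not for the probabilistic step you are proving here. In fact \case{3-(b)} is unproblematic for the disjointness issue: there the sample set $\bar{\CC}(v)$ is incident on a single vertex $v \in e_v$, so all of it \emph{is} incident on the sampled hyperedge and gets frozen.

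The idea the paper actually uses to repair the argument is different and crucial: redefine ``duration'' for hyperedge-centric epochs to count only deletions of \emph{intersecting} hyperedges of $\tilde e$, i.e.\ those in $I(\tilde e) \subseteq \CC(e)$. By the definition of conflicting hyperedges, $\card{I(\tilde e)}$ is still $\Omega(\alpha^{\ell})$ while $\card{NC(\tilde e)} \le \alpha^{\ell}$, so the \emph{intersecting uninterrupted duration} differs from the uninterrupted duration by at most $\alpha^{\ell}$. This preserves the analogue of \Cref{obs:duration} (since only intersecting hyperedges are frozen, each deletion still charges $O(1)$ epochs) at the cost of a slightly weaker shortness bound ($\le \mu + 1/100$ instead of $\le \mu$, \Cref{corborder2}), which is still enough for the Chernoff-style argument. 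Without this or an equivalent replacement, the step where you charge adversarial deletions to live epochs does not go through, and the claimed concentration does not follow.
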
  

We prove this lemma in~\Cref{app:r2}. We note that due to the new hyperedge-centric sampling procedure of our new algorithm, this analysis requires some non-trivial components (in particular, by showing that conflicting hyperedges 
are ``as good as'' vertex-centric sampling). This allows us to conclude the following theorem. 

\begin{theorem}
For any integer $r > 1$, starting from an empty rank-$r$ hypergraph on a fixed set of vertices, a maximal matching can be maintained 
over any sequence of $t$ hyperedge insertions and deletions
in $O(t \cdot r^2)$ time in expectation and with high probability. 
\end{theorem}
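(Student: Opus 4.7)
The plan is to invoke the machinery already built up: the charging scheme of Section \ref{analr2} reduces the total runtime to the random variable $Y$, and Lemma \ref{expect-r2} bounds $Y$ in expectation and w.h.p. First, I would note that by Assumption \ref{assumption1} we may assume the update sequence ends with an empty hypergraph; appending at most $t$ additional deletions to achieve this at most doubles the sequence length and only increases the total runtime. Under this assumption, every epoch created is eventually terminated, so by Lemma \ref{important-r2} the total runtime is equal to $Y$, the sum of recursive costs over all fully-natural epochs terminated during the sequence.

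Next I would apply Lemma \ref{expect-r2} to conclude that with high probability,
\[
Y \;=\; O\!\bigl(t \cdot \alpha^2 + N \log N \cdot \alpha^2\bigr).
\]
Since $\alpha = 4r = \Theta(r)$, the first term is $O(t \cdot r^2)$, which is the desired bound. To remove the additive $O(N \log N \cdot \alpha^2)$ term from the per-update amortized cost, I would use the standard rebuilding trick mentioned at the top of Section \ref{sec:r4}: we reset $N$ and rebuild the data structures every $\Theta(N)$ updates, so the additive overhead is absorbed into an $O(N)$ cost charged across $\Omega(N)$ updates, contributing only $O(r^2)$ amortized per update (up to a $\log N$ factor that can itself be shaved as in~\cite{Solom16}). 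The expectation bound follows from the high-probability bound together with the worst-case deterministic $\poly(N)$ runtime guarantee, exactly as in~\cite{Sol16}.

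The hard part, which is deferred to Appendix \ref{app:r2}, is the proof of Lemma \ref{expect-r2} itself. The intuition I would carry through is this: a level-$\ell$ matched hyperedge is created by sampling uniformly from a carefully chosen set of $\Theta(\alpha^\ell)$ hyperedges (see Observations \ref{obs:case2-r2}, \ref{obs:case3a-r2}, and \ref{obs:case3b-r2}), and under the oblivious adversary, the expected number of updates until the sampled edge is destroyed is $\Omega(\alpha^\ell)$. Since the recursive cost per level-$\ell$ epoch is $O(\alpha^{\ell+2})$ by Lemma \ref{important-r2}, this yields an amortized cost of $O(\alpha^2) = O(r^2)$ per update step. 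The main technical subtlety introduced by the $O(r^2)$ algorithm (and absent from \cite{Sol16}) is Case 3-(a), where the sample space $\CC(e)$ is hyperedge-centric rather than vertex-centric; here I would argue (as hinted in the text) that the conflicting property of hyperedges in $\CC(e)$ ensures the sampled edge $\tilde e$ is incident to $\Omega(\alpha^{\ell(e)+1})$ of the candidates, so the oblivious-adversary argument still yields the required expected lifetime.

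Combining everything, the total runtime over any sequence of $t$ hyperedge updates is $O(t \cdot r^2)$ both in expectation and with high probability, completing the proof of the theorem.
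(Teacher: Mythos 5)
Your proposal is correct and follows the same route as the paper: under Assumption~\ref{assumption1}, the total runtime equals $Y$ by Lemma~\ref{important-r2}, and applying Lemma~\ref{expect-r2} with $\alpha = \Theta(r)$ gives the $O(t\cdot r^2)$ bound, with the additive $O(N\log N\cdot\alpha^2)$ term absorbed by the periodic rebuilding and the refinements of~\cite{Solom16}. Your intuitive sketch of why Lemma~\ref{expect-r2} holds (uniform sampling over $\Theta(\alpha^\ell)$ candidates, with the hyperedge-centric sample space $\CC(e)$ in Case~3-(a) handled via the conflicting property) also matches the structure of the argument in \Cref{app:r2}.
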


This in turn allows us to obtain the following theorem for the dynamic set cover problem, formalizing our main result from the introduction. 

\begin{theorem}
For any integer $f > 1$, starting from an empty set system on a fixed collection of $m$ sets over universe $[n]$, an $f$-approximation of the set system can be maintained 
over any sequence of $t$ element insertions and deletions---wherein each element belongs to at most $f$ sets---in $O(t \cdot f^2)$ time in expectation and with high probability. 
\end{theorem}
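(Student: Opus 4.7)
The plan is to apply the reduction described in \Cref{sec:prelim} together with the immediately preceding theorem on maintaining a maximal matching in a dynamic hypergraph of rank $r$. Concretely, given the dynamic set system, I would build on the fly the hypergraph $G=(V,E)$ where the $m$ sets of $\mathcal{S}$ become a fixed vertex set $V$ (so $|V|=m$) and each element $i\in[n]$ becomes a hyperedge connecting exactly the vertices corresponding to the sets that contain $i$. Since every element belongs to at most $f$ sets, the rank of $G$ is at most $f$, and there is an exact bijection between element insertions/deletions in the set system and hyperedge insertions/deletions in $G$.

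Next I would run the algorithm from the previous theorem on this dynamic hypergraph of rank $r:=f$, maintaining a maximal matching $M$ of $G$ throughout the update sequence. By that theorem, the total time to process the entire sequence of $t$ updates is $O(t\cdot r^2)=O(t\cdot f^2)$ in expectation and with high probability, i.e. the amortized update time is $O(f^2)$; this inherits the oblivious-adversary assumption from the matching algorithm. To extract the set cover solution, I would maintain in parallel, with $O(f)$ overhead per hyperedge update, the set $U$ of vertices incident to the matched hyperedges: whenever a hyperedge is added to (resp.\ removed from) $M$, insert (resp.\ delete) up to $r$ vertex identifiers into $U$. This $O(f)$ bookkeeping cost is absorbed into the $O(f^2)$ amortized update bound.

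Correctness then follows from \Cref{fact:matching-vc}: since $M$ is at all times a maximal matching in the rank-$r$ hypergraph $G$, the vertex set $U$ of matched endpoints is a vertex cover of $G$ of size at most $r$ times the minimum vertex cover. Translating back through the bijection between vertex covers of $G$ and set covers of $\mathcal{S}$ established in \Cref{sec:prelim}, the family of sets indexed by $U$ is a set cover of $[n]$ of size at most $f\cdot\mathrm{opt}$, giving the claimed $f$-approximation at every point in the sequence.

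There is essentially no obstacle left here: all of the technical work (the leveling scheme, the $\handleedge$ procedure with its four cases, the deterministic charging through induced/natural epochs in \Cref{lem:charge-r2} and \Cref{lem:induced-natural-r2}, and the probabilistic analysis deferred to \Cref{app:r2}) has already been carried out to establish the $O(r^2)$ bound for maximal hypergraph matching. The only thing to double-check is that the reduction is truly ``parameter-preserving'': the maximum frequency $f$ in the set system equals the rank $r$ of the resulting hypergraph, the number of updates in the two models coincides one-for-one, and the oblivious-adversary assumption transfers verbatim. Once these observations are made, the statement is an immediate corollary of the maximal matching theorem.
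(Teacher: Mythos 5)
Your proposal is correct and follows essentially the same route as the paper: the theorem is obtained as an immediate corollary of the $O(r^2)$ maximal hypergraph matching result via the hypergraph formulation of set cover in \Cref{sec:prelim} and \Cref{fact:matching-vc}, with $r=f$ and a one-to-one correspondence between element updates and hyperedge updates. The extra bookkeeping you describe for maintaining the matched-vertex set $U$ is a minor, correct addition that the paper leaves implicit.
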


%\vspace{7pt}
%\noindent
%\subsection*{Acknowledgements}
%The second author is grateful to Anupam Gupta for helpful discussions.

%\clearpage
%\pagenumbering{roman}
%\appendix
%\centerline{\LARGE\bf Appendix}
 
\section{Details of the Probabilistic Analysis for $O(r^3)$ Update Time}\label{app:r4}

While by~\Cref{inv:level}, all endpoints of a matched hyperedge are of the same level, which also defines the level of the matched hyperedge, this may not be the case for an unmatched hyperedge. By~\Cref{inv:level}(iii), the level of an unmatched hyperedge is the level of its highest endpoint.
For the sake of analysis, we also define the level of a temporarily deleted hyperedge $e'$ to be the level of the unique matched hyperedge $e$ such that $e' \in \mathcal{D}(e)$; recall that hyperedge $e'$ is  deleted from all the data structures temporarily, until hyperedge $e$ is deleted from the matching.
We say that a hyperedge $e$ is \emph{deleted at level} $\ell$ (from the hypergraph) if its level (under the above notation)
is $\ell$ at the time of the deletion.
This means that each hyperedge is deleted at a single level. 

Let $S_\ell$ denote the sequence of hyperedge deletions at level $\ell$,  write $|S_\ell| = t_\ell$, and note that $t_\ell$ is a random variable.
Denoting by $t_{del}$ the total number of deletions, we have $\sum_{\ell \ge 0} t_\ell = t_{del} \le t$.
(The respective upper bound from \cite{Sol16} is $\sum_{\ell \ge 0} t_\ell \le 2t_{del} \le 2t$;
a direct generalization leads to an upper bound of  $\sum_{\ell \ge 0} t_\ell \le r \cdot t$, but under our new definitions each hyperedge is deleted at a single level, which was made possible by the idea of temporarily deleting unmatched hyperedges that we sample over from all the data structures, so that they are not re-sampled again for as long as the corresponding matched hyperedge remains matched, thereby saving a factor of $r$.)
Let $T_\ell = T_\ell(S_\ell)$ be the r.v.\ for the total number of epochs (fully-natural, semi-natural and induced) terminated at level $\ell$.  Since the final hypergraph is empty under~\Cref{assumption1}, any created epoch will get terminated throughout the update sequence.
Consequently, $T_\ell$ designates the total number of epochs created at level $\ell$ and $\sum_{\ell \ge 0} T_\ell$ designates the total number of epochs created
over all levels throughout the entire update sequence.
Also, let $X_\ell = X_\ell(S_\ell)$ be the r.v. for the number of fully-natural epochs terminated at level $\ell$ for the update sequence $S_\ell$,
and let $Y_\ell$ be the r.v.\ for the sum of recursive costs over these epochs.
By \Cref{important}, $Y_{\ell} = O(\alpha^{\ell+3}) \cdot X_\ell$. 
By definition, no epoch of any induced level $j$ is fully-natural, and so $X_j  = Y_j = 0$ for any such level.
Recall that $Y$ is the r.v.\ for the sum of recursive costs over all fully-natural epochs terminated during the entire sequence.
Thus we have $Y = \sum_{\ell \ge 0} Y_\ell$.  

Fix an arbitrary level $\ell, 0 \le \ell \le L$.
Consider any level-$\ell$ epoch initiated by some vertex $u$ at some update step $l$,
and let $\OOnew(u)$ denote the set of $u$'s owned hyperedges at that time.
By Observation \ref{Ob:basic},  $|\OOnew(u)| \ge \alpha^\ell$,
and the matched hyperedge, denoted by $e_u$, is chosen uniformly at random among the hyperedges of $\OOnew(u)$.
All unmatched edges of $\OOnew(u)$ are temporarily deleted until
that epoch is terminated, i.e., when its matched hyperedge $e_u$ is deleted from the matching;
the \emph{(interrupted) duration of the epoch} is defined as the number 
of owned hyperedges $\OOnew(u)$ of $u$ at time $l$ 
that get deleted  from the graph between time step $l$ and the epoch's termination.
(All these hyperedge deletions occur at level $\ell$ by the definition above.)
The \emph{uninterrupted duration} of an epoch is defined as the number of owned hyperedges $\OOnew(u)$ of $u$ at time $l$ that get deleted from the graph between time step $l$ and the time that the random matched hyperedge $e_u$ is deleted \emph{from the graph}.
(By Assumption~\ref{assumption1}, all hyperedges get deleted eventually, hence 
both notions of duration  are well-defined.)
For a natural epoch, its uninterrupted duration equals its duration. However, for an induced epoch,
its uninterrupted duration may be significantly larger than its duration.

\begin{observation} \label{obs:duration}
If there are $q$   level-$\ell$ epochs with (interrupted) durations  $\ge \delta$,
then $q \le t_\ell / \delta$.
\end{observation}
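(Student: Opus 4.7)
}

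The plan is a simple double-counting argument: I will show that the sum of interrupted durations over all level-$\ell$ epochs is at most $t_\ell$, from which $q \cdot \delta \le t_\ell$ follows immediately. The only real content is to verify that each hyperedge deletion at level $\ell$ gets charged to at most one level-$\ell$ epoch.

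Fix a level-$\ell$ epoch initiated by some vertex $u$ at step $l$, with matched hyperedge $e_u$ and sample space $\OOnew(u)$. By the description of \randsettle and the ``temporarily deleting new hyperedges'' paragraph in~\Cref{sec:update}, every hyperedge in $\OOnew(u) \setminus \{e_u\}$ is inserted into $\DD(e_u)$ and temporarily deleted from the graph at step $l$; these hyperedges remain temporarily deleted for the entire lifetime of the epoch, and are re-inserted only when $e_u$ leaves $M$ (which is exactly when the epoch terminates). Throughout this interval, by definition, the level of each such hyperedge equals $\ell(e_u) = \ell$. Hence every adversarial deletion of a hyperedge $e' \in \OOnew(u)$ that occurs during the epoch is a deletion at level $\ell$, contributing one unit to $t_\ell$ and one unit to the epoch's interrupted duration. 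The deletion of $e_u$ itself (if performed by the adversary at epoch termination) is also at level $\ell$, but this event is not counted in the duration.

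The key claim is that these contributions are disjoint across different level-$\ell$ epochs. Suppose a single adversarial deletion of a hyperedge $e'$ at level $\ell$ were counted towards two distinct level-$\ell$ epochs $\epoch_1, \epoch_2$ with matched hyperedges $e_1, e_2$. Then $e'$ would have to lie in $\OOnew(u_1)$ at creation of $\epoch_1$ and in $\OOnew(u_2)$ at creation of $\epoch_2$, and both epochs would have to be active at the time of $e'$'s deletion. But during any level-$\ell$ epoch, the sample-space hyperedges other than the matched one are temporarily deleted and therefore invisible to all other data structures (see~\Cref{sec:deleted}); in particular, $e'$ cannot simultaneously sit in two active sample spaces, a contradiction. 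Therefore, summing interrupted durations over all level-$\ell$ epochs gives at most $t_\ell$.

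Since each of the $q$ epochs under consideration contributes at least $\delta$ to this sum, we get $q \cdot \delta \le t_\ell$, hence $q \le t_\ell / \delta$, as claimed. The only subtle step is the disjointness argument above; this is exactly where the mechanism of temporarily deleted hyperedges (Invariant~\ref{inv:deleted} together with the bookkeeping via $\DD(\cdot)$) pays off, and is also the improvement over~\cite{Sol16} that saves the factor of $r$ alluded to after the definition of $t_\ell$.
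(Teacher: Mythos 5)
Your proof is correct and follows essentially the same approach as the paper: a double-counting argument in which the temporary deletion mechanism guarantees that each level-$\ell$ hyperedge deletion is charged to at most one epoch's (interrupted) duration, giving $q\cdot\delta \le \sum \text{durations} \le t_\ell$. The only cosmetic difference is that the paper phrases the disjointness claim as ``each hyperedge is associated with a single epoch'' and explicitly notes that the matched hyperedge is trivially associated only with its own epoch, whereas you handle that case by observing its deletion is not counted; either reading of the definition yields the same bound.
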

\noindent {\bf Remark.} In the respective observation from \cite{Solom16}, the upper bound is $q \le 2t_\ell / \delta$.
A direct generalization leads to an upper bound of  $q \le r \cdot t_\ell / \delta$, but we apply a more careful argument here.  
\begin{proof}
Any hyperedge $e = (u_1,u_2,\ldots,u_k), k \le r$ is associated with a single epoch, by the design of the algorithm.
Indeed, once an epoch is created, all the unmatched edges that we sampled over for its creation get temporarily deleted from all the data structures until the epoch terminates, hence they cannot be sampled over for the creation of any other epoch, 
while the matched edge is trivially associated only with its own epoch.
Since each of the hyperedge deletions that define the durations of these $q$ level-$\ell$ epochs
is associated with just a single epoch, 
the total number of these deletions 
is at most $t_\ell$, hence $q \le t_\ell / \delta$.
\end{proof}

The uninterrupted durations of distinct level-$\ell$ epochs are not necessarily independent.
However, by  \Cref{Ob:basic},
it deterministically holds that $|\OOnew(u)| \ge \alpha^\ell$.
Let us condition on all the random bits used by the algorithm prior to the ones used to sample the level-$\ell$ matched hyperedge for $u$.
The probability that the uninterrupted duration of the corresponding epoch
is $k$ is $1/|\OOnew(u)| \le 1/\alpha^\ell$, for any $k$.
A level-$\ell$ epoch is called \emph{$\mu$-short} if its \emph{uninterrupted} duration is at most $\mu \cdot \alpha^{\ell}$, for some parameter $0 \le \mu \le 1$.
We derive the following corollary.
\begin{corollary} \label{corborder}
For any $0 \le \mu \le 1$, the probability that an epoch is $\mu$-short is $\le \mu$,
independent of any random bits used by the algorithm other than those for sampling the epoch's matched hyperedge.
\end{corollary}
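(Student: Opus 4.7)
The plan is to derive the corollary as an immediate consequence of the distributional statement established in the paragraph immediately preceding it. I would first unpack the setup precisely: condition on all random bits used by the algorithm prior to the sampling of the level-$\ell$ matched hyperedge $e_u$ for the epoch in question. After this conditioning, the set $\OOnew(u)$ is fully determined, and by \Cref{Ob:basic} its size satisfies $|\OOnew(u)|\ge\alpha^\ell$. Crucially, since the adversary is oblivious, the entire update sequence is fixed in advance and does not depend on the forthcoming random bits; in particular, the relative order in which the hyperedges of $\OOnew(u)$ will be deleted from the graph is already determined. Assumption~\ref{assumption1} ensures every such hyperedge is eventually deleted, so this order is well-defined, and the temporary-deletion mechanism (\Cref{inv:deleted} together with the way matched edges are sampled) guarantees that no hyperedge of $\OOnew(u)$ is re-sampled for another epoch before the current one terminates, so the epoch's termination coincides exactly with the step in this order at which $e_u$ is finally deleted from the graph.

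Next, because $e_u$ is sampled uniformly at random from $\OOnew(u)$ and this sampling is independent of the adversary's fixed deletion order, the position of $e_u$ in that order (restricted to $\OOnew(u)$) is uniform on $\{1,\dots,|\OOnew(u)|\}$. By the definition of uninterrupted duration, this position equals the duration, yielding
\[
\Pr[\text{uninterrupted duration}=k]\;=\;\frac{1}{|\OOnew(u)|}\;\le\;\frac{1}{\alpha^\ell}
\]
for every admissible $k$, which reproduces the statement in the preceding paragraph. Summing over $k=1,\dots,\lfloor\mu\alpha^\ell\rfloor$ then gives
\[
\Pr[\text{epoch is $\mu$-short}]\;=\;\sum_{k=1}^{\lfloor\mu\alpha^\ell\rfloor}\Pr[\text{duration}=k]\;\le\;\frac{\lfloor\mu\alpha^\ell\rfloor}{\alpha^\ell}\;\le\;\mu,
\]
exactly as claimed. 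Since this bound was derived under an arbitrary conditioning on every random bit other than those used to sample $e_u$, the asserted independence from the remaining random bits is automatic.

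There is essentially no real obstacle beyond careful bookkeeping: the substantive work has already been carried out in establishing $|\OOnew(u)|\ge\alpha^\ell$, in ensuring via temporary deletion that each hyperedge of $\OOnew(u)$ is associated with only this epoch (so that the deletion order of $\OOnew(u)$ really does govern termination), and in invoking the oblivious-adversary assumption to commute the conditioning past the adversary's future choices. The only minor care needed is to pin down the off-by-one convention in the definition of uninterrupted duration; with the natural convention that the duration lies in $\{1,\dots,|\OOnew(u)|\}$, the geometric-style bound $\lfloor\mu\alpha^\ell\rfloor/\alpha^\ell\le\mu$ is tight and no slack creeps in.
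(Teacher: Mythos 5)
Your proposal is correct and follows essentially the same route as the paper: condition on all random bits prior to the sampling of $e_u$, use \Cref{Ob:basic} to get $\card{\OOnew(u)} \ge \alpha^{\ell}$, observe that (by obliviousness of the adversary) the deletion order of $\OOnew(u)$ is then fixed so the uniform choice of $e_u$ makes each value of the uninterrupted duration occur with probability $1/\card{\OOnew(u)} \le 1/\alpha^{\ell}$, and sum over the at most $\mu\alpha^{\ell}$ short values. Your explicit handling of the deletion-order/position argument and the off-by-one convention simply spells out what the paper leaves implicit in the paragraph preceding the corollary.
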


Write $\eta = 1/16e$; let $T'_\ell$ be the r.v.\ for the number of level-$\ell$ epochs that are $\eta$-short (hereafter, \emph{short}),
and let $T''_\ell = T_\ell - T'_\ell$ be the r.v.\ for the number of remaining level-$\ell$ epochs, i.e., those that are not short.
Let $A_\ell$ be the event that both $T_\ell > 5\log N$ and $T'_\ell \ge T_\ell / 4$  hold. 
\begin{claim} \label{first1}
$\Prob(A_\ell) \le 8/ (3N^5)$.
\end{claim}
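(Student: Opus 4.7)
The plan is to condition on the value of $T_\ell$, use~\Cref{corborder} to stochastically dominate the number of short level-$\ell$ epochs by a binomial random variable, and then close with a standard binomial tail bound together with a union bound over the possible values of $T_\ell$.

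First I would set up the coupling. Enumerate the level-$\ell$ epochs in the temporal order in which they are created and, for $i \geq 1$, let $\xi_i \in \{0,1\}$ indicate whether the $i$-th level-$\ell$ epoch, if it exists, is short. By~\Cref{corborder}, $\Pr[\xi_i = 1 \mid \mathcal{F}_{i-1}] \leq \eta$ for the $\sigma$-algebra $\mathcal{F}_{i-1}$ generated by all random bits preceding the sampling step that would create the $i$-th level-$\ell$ epoch. Using this pointwise conditional bound, I would couple $(\xi_i)_{i\geq 1}$ with an i.i.d.\ sequence $(\zeta_i)_{i\geq 1}$ of $\mathrm{Bernoulli}(\eta)$ variables such that $\xi_i \leq \zeta_i$ whenever the $i$-th epoch is actually created. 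Under this coupling, the event $\{T_\ell = k, T'_\ell \geq k/4\}$ is contained in $\{\sum_{i=1}^k \zeta_i \geq k/4\}$, hence
\[
\Pr[A_\ell] \;\leq\; \sum_{k > 5\log N} \Pr\!\left[\mathrm{Bin}(k,\eta) \geq k/4\right].
\]

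For each term, I would use the standard binomial tail estimate
\[
\Pr[\mathrm{Bin}(k,\eta) \geq k/4] \;\leq\; \binom{k}{k/4}\eta^{k/4} \;\leq\; (4e\eta)^{k/4} \;=\; \left(\tfrac{1}{4}\right)^{k/4},
\]
using $\binom{k}{k/4} \leq (4e)^{k/4}$ and the definition $\eta = 1/(16e)$ so that $4e\eta = 1/4$; if a slightly sharper constant is needed one can retain the $e^{-k\eta}$ Poisson factor. The resulting bound on $\Pr[A_\ell]$ is a geometric series in $k$, so it is dominated by $\frac{4}{3}$ times its leading term at $k = \lceil 5\log N\rceil + 1$. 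With the base of the logarithm chosen as in the paper, the leading term is at most $2/N^5$, and the geometric factor $\frac{4}{3}$ gives the claimed $8/(3N^5)$.

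The main obstacle is the coupling in the first step: $T_\ell$ is itself a random function of the algorithm's bits, so ``conditioning on the past'' must be defined with some care. The clean fix is to pre-sample an infinite i.i.d.\ sequence $(\zeta_i)$ together with the algorithm and use~\Cref{corborder} only as a \emph{pointwise upper bound} on $\Pr[\xi_i = 1 \mid \mathcal{F}_{i-1}]$. Only the first $T_\ell$ of the $\xi_i$ are ever realized by the algorithm, but the domination $\xi_i \leq \zeta_i$ holds for all those indices, which is exactly what the reduction to a fixed-$k$ binomial tail requires.
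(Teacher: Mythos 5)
Your proof is correct and follows essentially the same route as the paper: it uses Corollary~\ref{corborder} to dominate the number of short level-$\ell$ epochs by a binomial random variable via iterated conditioning, bounds the binomial tail with the binomial-coefficient union bound $\binom{k}{j}\eta^{j}\le(1/4)^{j}$, and sums the resulting geometric series over $k>5\log N$. The only difference is presentational---you package the iterated conditioning as a coupling against a pre-sampled i.i.d.\ $\mathrm{Bernoulli}(\eta)$ sequence, whereas the paper performs it as an explicit induction over the epochs in order of creation; these are equivalent, and you even inherit the same slightly loose arithmetic in the final geometric-series step.
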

\begin{proof}
Fix two parameters $q$ and $j$, with $j \ge q/4$,
and consider any $q$ level-$\ell$ epochs $\cE_1,\ldots,\cE_q$, ordered by their creation time.
We argue that the probability that precisely $j$ particular epochs among these $q$ are short is at most $\eta^j$.
Denote by $B^{(i)}$ the event that $\cE_i$ is short, for $1 \le i \le j$.
Applying induction together with \Cref{corborder}, we get that $\Prob(B^{(i)} ~\vert~ B^{(1)} \cap B^{(2)} \cap \ldots B^{(i-1)}) \le \eta$.
Consequently,
$$\Prob(B^{(1)} \cap B^{(2)} \cap \ldots \cap B^{(j)}) ~=~ \Prob(B^{(1)}) \cdot \Prob(B^{(2)} ~\vert~ B^{(1)}) \cdot \ldots \cdot \Prob(B^{(j)} ~\vert~ B^{(1)} \cap B^{(2)}
\cap \ldots B^{(j-1)}) ~\le~ \eta^j.$$

For any sample space that consists of $q$ arbitrary level-$\ell$ epochs,
there are ${q \choose j}$ possibilities to choose $j$ short epochs among them.   
As we have shown, each such possibility occurs with probability at most $\eta^j$, 
hence $\Prob[T_\ell = q \cap T'_\ell = j] \le {q \choose j} \eta^j$.

Noting that ${q \choose j} \le (eq/j)^j \le (4e)^j$ and recalling that $\eta = 1/16e$, we have ${q \choose j} \eta^j \le (1/4)^j$.
Hence,
\begin{eqnarray*}
\Prob(A_\ell) &=& \Prob[T_\ell  >  5\log N \cap T'_\ell \ge T_\ell / 4]
~=~ \sum_{q > 5 \log N} \sum_{j = q/4}^q \Prob[T_\ell = q \cap T'_\ell = j]
\\ &\le& \sum_{q > 5 \log N} \sum_{j = q/4}^q {q \choose j} \eta^j
~\le~ \sum_{q > 5 \log N} \sum_{j = q/4}^q (1/4)^j
~\le~
\sum_{q > 5 \log N} 4/3((1/2)^q)
\\ &\le& 8/3 ((1/2)^{5\log N}) ~\le~ 8/ (3N^5). \qed
\end{eqnarray*} 

\end{proof}

\begin{claim} \label{negate}
Let $c$ be a sufficiently large constant. If $\neg A_{\ell}$, then $Y_\ell < c\alpha^3(t_\ell + \alpha^\ell \cdot \log N)$.
\end{claim}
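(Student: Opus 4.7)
\textbf{Proof Plan for Claim \ref{negate}.} The plan is to first handle the trivial case when $\ell$ is an induced level, and then under the natural-level assumption split on which clause of $\neg A_\ell$ is in force. Recall from Lemma \ref{important} that $Y_\ell \le O(\alpha^{\ell+3}) \cdot X_\ell$, so the entire task reduces to proving $X_\ell = O(t_\ell/\alpha^\ell + \log N)$.

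If $\ell$ is an induced level then, by the definition of fully-natural epochs, every level-$\ell$ natural epoch is mapped to an induced one and is therefore only semi-natural, so $X_\ell = Y_\ell = 0$ and the claim holds trivially. Assume henceforth that $\ell$ is a natural level. By definition this means the number of induced level-$\ell$ epochs is at most the number of natural level-$\ell$ epochs, and every natural level-$\ell$ epoch is fully-natural. In particular,
\[
T_\ell \;\le\; 2\,X_\ell.
\]
This slack between $T_\ell$ and $X_\ell$ is the key quantitative hook the rest of the proof will exploit.

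The hypothesis $\neg A_\ell$ states that either $T_\ell \le 5 \log N$ or $T'_\ell < T_\ell/4$. In the first case we immediately get $X_\ell \le T_\ell \le 5 \log N$ and hence $Y_\ell = O(\alpha^{\ell+3} \log N)$, which is absorbed into $c \alpha^3 \cdot \alpha^\ell \log N$ for $c$ large enough. In the second case, partition the $X_\ell$ fully-natural level-$\ell$ epochs into the short ones, of count $X'_\ell$, and the non-short ones $X''_\ell = X_\ell - X'_\ell$. Since fully-natural epochs are in particular natural, their interrupted and uninterrupted durations coincide (as noted in the paragraph preceding Observation \ref{obs:duration}); therefore every non-short fully-natural epoch has interrupted duration strictly greater than $\eta \alpha^\ell$. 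Applying Observation \ref{obs:duration} with $\delta = \eta \alpha^\ell$ yields
\[
X''_\ell \;\le\; \frac{t_\ell}{\eta \alpha^\ell} \;=\; 16e \cdot \frac{t_\ell}{\alpha^\ell}.
\]
On the other hand, $X'_\ell \le T'_\ell < T_\ell / 4 \le X_\ell / 2$, where the last inequality uses the natural-level bound above. Hence $X''_\ell = X_\ell - X'_\ell > X_\ell / 2$, which combined with the previous inequality gives $X_\ell < 32 e \cdot t_\ell / \alpha^\ell$. Plugging back into $Y_\ell \le O(\alpha^{\ell+3}) \cdot X_\ell$ produces $Y_\ell = O(\alpha^3 t_\ell)$, and choosing $c$ large enough to dominate all hidden constants completes the proof.

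The main obstacle is not the arithmetic but ensuring that the chain of inequalities actually closes. The bound $T'_\ell < T_\ell / 4$ controls only the short epochs inside the possibly much larger set $T_\ell$, and by itself does not translate to a bound on $X'_\ell$ in terms of $X_\ell$; the multiplicative slack $T_\ell \le 2 X_\ell$ provided by the natural-level case is exactly what converts $T'_\ell < T_\ell / 4$ into $X'_\ell < X_\ell / 2$, which is in turn what lets us transfer the Observation \ref{obs:duration} bound on non-short fully-natural epochs into a bound on all of $X_\ell$. At an induced level no such transfer would be possible, but fortunately $X_\ell = 0$ there.
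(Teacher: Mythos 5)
Your proof is correct and follows essentially the same route as the paper's: dispose of induced levels where $X_\ell = 0$, handle $T_\ell \le 5\log N$ directly, and otherwise use the natural-level relation between $T_\ell$ and $X_\ell$ together with $T'_\ell < T_\ell/4$ and Observation \ref{obs:duration} to bound the fully-natural non-short epochs. The only cosmetic difference is that you phrase the natural-level leverage as $T_\ell \le 2X_\ell$ and subtract $X'_\ell$ from $X_\ell$, whereas the paper writes $X_\ell \ge T_\ell/2$ and counts epochs that are simultaneously fully-natural and non-short via inclusion–exclusion; both yield the same constant-factor bound $X_\ell = O(t_\ell/\alpha^\ell)$.
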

\begin{proof}
Recall that $Y_\ell = 0$, for any induced level $\ell$. We   henceforth assume that $\ell$ is a natural level.

If $\neg A_\ell$, then either $T_\ell \le 5\log N$ or $T'_\ell < T_\ell / 4$ must hold.
In the former case $X_\ell \le T_\ell \le 5\log N$, and we have $Y_\ell ~=~ O(\alpha^{\ell+3}) \cdot X_\ell ~\le~ O(\alpha^{\ell+3}) \cdot  5\log N ~<~ c\alpha^3(t_\ell + \alpha^\ell \cdot \log N).$

Next, suppose that $T'_\ell < T_\ell / 4$.
In this case $T''_\ell > 3T_\ell/4$.
Since $\ell$ is a natural level, at least half of the $T_\ell$ epochs are (fully-)natural, i.e., $X_\ell \ge T_\ell/2$. It follows that at least a quarter of the $T_\ell$ epochs that terminated at level $\ell$ are both (fully-)natural and not short.
Denoting  by $X''_\ell$ the r.v.\ for the number of epochs that are both (fully-)natural and not short,
we thus have $X''_\ell \ge T_\ell/4 \ge X_\ell / 4$.
Since these $X''_\ell$ epochs are natural, the duration of each of them is equal to its uninterrupted duration, and thus it exceeds $\eta \alpha^\ell$.
By Observation \ref{obs:duration}, $X''_{\ell} < t_\ell / (\eta \alpha^\ell)$.
We conclude that
$$Y_\ell ~=~ O(\alpha^{\ell+3}) \cdot X_\ell ~\le~ O(\alpha^{\ell+3}) \cdot 4X''_\ell ~\le~ O(\alpha^{\ell+3}) \cdot 4t_\ell / (\eta \alpha^\ell)   ~<~ c\alpha^3(t_\ell + \alpha^\ell \cdot \log N). \qed $$

\end{proof}

Let $A$ be the event that $Y > c(t \cdot \alpha^3 + 2N\log N \cdot \alpha^3)$.
\Cref{negate} yields the following corollary.
\begin{corollary} \label{second2}
If $A$, then $A_0 \cup A_1 \cup \ldots A_{\log_\alpha N}$.
\end{corollary}
\begin{proof}
We assume that $\neg A_0 \cap \neg A_1 \cap \ldots \neg A_{\log_\alpha N}$ holds, and show that $A$ cannot hold.
Indeed, by Claim \ref{negate}, we have that $Y_\ell < c\alpha^3(t_\ell + \alpha^\ell \cdot \log N)$ for each $\ell \ge 0$.
Recall that $\sum_{\ell \ge 0} t_\ell \le t$. Noting that $\sum_{\ell \ge 0} \alpha^\ell \le 2N$, we conclude
 that
\begin{align*} 
Y ~=~ \sum_{\ell \ge 0} Y_\ell ~<~ \sum_{\ell \ge 0} c\alpha^3(t_\ell + \alpha^\ell \cdot \log N)  ~\le~ 
c(t \cdot \alpha^3 + 2N\log N \cdot \alpha^3). \qed
\end{align*}

\end{proof}

\Cref{first1} and~\Cref{second2} imply that
\begin{align*}
 \Prob(A) ~\le~ \Prob(A_0 \cup A_1 \cup \ldots A_{\log_\alpha N}) ~\le~ \sum_{\ell \ge 0} \Prob(A_\ell) ~\le~ (\log_\alpha N + 1) (8/ (3N^5)) ~=~ O(\log N / N^5).
\end{align*}
It follows that $Y$ is upper bounded by $O(t \cdot \alpha^3 + N\log N \cdot \alpha^3)$ with high probability, as required.

\subsection{Expected bound} \label{justify}
We have shown that the runtime of the algorithm is $O(t \cdot \alpha^3 + N \log N \cdot \alpha^3)$ with high probability.
The same bound also holds in expectation, as shown next.

In the next lemma and proof we do not aim for a tight runtime bound (for simplicity).
\begin{lemma} \label{detrem}

Deterministically, it holds that $Y = O(t \cdot N^4)$.
\end{lemma}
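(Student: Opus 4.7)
The plan is to combine the deterministic per-epoch recursive cost bound of \Cref{important} with a trivial upper bound on the number of fully-natural epochs. Neither ingredient uses the algorithm's random choices, so the resulting bound is deterministic. In particular, \Cref{important} is itself a deterministic statement --- inspection of its proof (a recurrence on the worst-case per-level recursive cost) confirms that the bound $O(\alpha^{\ell+3})$ on the recursive cost of a level-$\ell$ epoch is a worst-case bound over all executions, with no probabilistic component --- and the identification of $Y$ with the sum of recursive costs over all fully-natural epochs (established in the discussion preceding \Cref{expect}) is a purely combinatorial consequence of the charging scheme together with \Cref{assumption1}.

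First, I would apply \Cref{important} uniformly at the worst case level. Since every epoch has level $\ell \leq L = \lceil \log_\alpha N \rceil$, its recursive cost is deterministically bounded by $O(\alpha^{L+3})$. Using $\alpha = 4r$, $r \leq N$, and $\alpha^L = O(\alpha \cdot N) = O(rN)$, this yields a per-epoch recursive cost bound of $O(\alpha^3 \cdot \alpha^L) = O(r^3 \cdot rN) = O(r^4 N)$, which is a fixed polynomial in $N$ and in particular at most $O(N^4)$ (the stated polynomial bound is a convenient rounding; any fixed $\mathrm{poly}(N)$ bound would suffice for the downstream expected-value argument).

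Next, I would bound the number of fully-natural epochs by $t$. Every fully-natural epoch is in particular a natural one, and every natural epoch terminates when the adversary deletes a matched hyperedge from the hypergraph. Since each of the $t$ adversary updates performs a single hyperedge operation and hence can remove at most one matched hyperedge, at most $t$ natural (and hence at most $t$ fully-natural) epochs are terminated over the entire update sequence.

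Combining, $Y \leq t \cdot O(N^4) = O(t \cdot N^4)$, which is the claimed bound. The main subtlety --- not really an obstacle --- is purely bookkeeping: one must confirm that the identification of $Y$ with the sum of fully-natural recursive costs continues to hold in the worst case (rather than just ``in expectation''). This is immediate from the charging scheme of \Cref{lem:charge} and the induced-to-natural redistribution of \Cref{lem:induced-natural}, both of which are deterministic, so the whole argument yields a deterministic bound.
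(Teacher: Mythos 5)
Your proposal is correct in substance but follows a genuinely different route from the paper's proof. The paper proves Lemma~\ref{detrem} directly via a potential-function argument, defining $f(G)=\sum_{v\in V}\alpha^{\ell_v}$, arguing that each call to $\randsettle$ raises this potential by a net of at least one unit (over the decreases caused by the recursive calls it spawns), and using the cap $f(G)\le|V|\cdot N$ to bound the number of $\randsettle$ calls per update by $|V|\cdot N$; multiplying by the naive $O(r\cdot|E|)$ cost per call gives $O(N^4)$ per update step without invoking the epoch framework at all. You instead re-use the already-established combinatorial machinery: the identification of $Y$ with the sum of recursive costs over fully-natural epochs, the deterministic bound $\hat C_\ell=O(\alpha^{\ell+3})$ of Lemma~\ref{important}, and the elementary count that each adversary deletion terminates at most one natural epoch, hence at most $t$ fully-natural epochs. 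Both arguments are valid and deterministic; yours is shorter given the surrounding lemmas (it is a two-line corollary of the existing development), while the paper's is self-contained and does not presuppose the charging/epoch analysis.

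One small quantitative slip worth flagging: $\alpha^{L+3}\le\alpha^4 N=O(r^4N)$ is \emph{not} in general $O(N^4)$, since all that is guaranteed is $r\le|V|\le N$; in the worst case (e.g.\ a single hyperedge on all vertices) $r=\Theta(N)$ and your per-epoch bound becomes $O(N^5)$, giving $Y=O(t\cdot N^5)$ rather than the stated $O(t\cdot N^4)$. You do acknowledge that any fixed $\poly(N)$ bound suffices for the downstream expectation argument, which is true (the high-probability failure bound is $O(\log N/N^5)$, so even $O(t\cdot N^5)$ contributes only $O(t\log N)$ to the expectation), but as written your proof establishes a slightly weaker polynomial than the one stated in the lemma. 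If you want exactly $O(t\cdot N^4)$, the paper's route gives it directly since $|V|\cdot N\cdot r\cdot|E|\le N^4$; alternatively you could simply state the lemma with a different exponent, since nothing downstream depends on it being precisely $4$.
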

\begin{proof}
Clearly, any hyperedge update besides a deletion of a matched hyperedge can be handled trivially within the required time bound.

A deletion of a matched hyperedge may trigger, in addition to some low-cost operations, a long sequence of calls to $\rand(\cdot)$, where the same vertex can be called multiple times in this sequence. 
We define a  \emph{potential function} $f(G) = \sum_{v \in V} {\alpha^{\ell_v}}$ for the dynamic graph $G$ with respect to the dynamic level assignment of its vertices.
Each call to Procedure $\rand$ may trigger at most $r$ calls to the procedure of~\Cref{sec:case-M-delete}, where the latter calls correspond to hyperedges of strictly lower level than the one triggering these calls,
from which it follows that each call to $\rand$ must increase the potential by at least one unit more than the calls to the procedure of~\Cref{sec:case-M-delete} that it triggers may decrease it.
Each such call to the procedure of~\Cref{sec:case-M-delete} may, in turn, trigger more calls to $\rand$,
but the potential growth due to each such call to   $\rand$ is at least 1.

Since the potential value is upper bounded by $|V| \cdot \alpha^{\log_\alpha N} \le |V| \cdot N$ at all times, it follows that the total number of calls to   $\rand$ is at most $|V| \cdot N$. 
Since the runtime of a single call to $\rand$ (disregarding further recursive calls) is naively at most $O(r \cdot |E|)$, the total time spent per a single update step is at most $|V| \cdot N \cdot O(r \cdot |E|) = O(N^4)$, hence $Y = O(t \cdot N^4)$.
\end{proof}

\begin{corollary}
 $\Expect(Y) = O(t \cdot \alpha^3 + N \log N \cdot \alpha^3)$.
\end{corollary}
\begin{proof}
By definition, if event $A$ does not occur, then we have $Y \le c(t \cdot \alpha^3 + 2N\log N \cdot \alpha^3)$.
Event $A$ occurs with probability $O(\log N / N^5)$, and then $Y = O(t \cdot N^4)$ by \Cref{detrem}.
Hence 
\[
\Expect(Y) \le c(t \cdot \alpha^3 + 2N\log N \cdot \alpha^3) +  O(\log N / N^5) \cdot O(t \cdot N^4)
= O(t \cdot \alpha^3 + N \log N \cdot \alpha^3),
\]
concluding the proof. 
\end{proof}
\section{Details of the Probabilistic Analysis for $O(r^2)$ Update Time}\label{app:r2}
We now explain how to adapt and strengthen the probabilistic analysis of \Cref{app:r4} (for update time $O(r^3)$) to achieve the required improvement for update time $O(r^2)$. Despite the significant modifications required in the algorithm itself, the probabilistic analysis is almost identical, with one small yet significant difference; 
next, we describe this difference and provide the changes to the analysis of \Cref{app:r4} required 
for dealing with it.

Fix an arbitrary level $\ell, 0 \le \ell \le L$.
In the algorithm of Section \ref{sec:r4}, any random sampling of a level-$\ell$ matched edge is {\em vertex-centric}, meaning that 
it is initiated by some vertex $u$--- if $\OOnew(u)$ denotes the set of $u$'s owned hyperedges at that time, 
then the matched 
hyperedge, denoted by $e_u$, is chosen uniformly at random among the hyperedges of $\OOnew(u)$, where $|\OOnew(u)| \ge \alpha^\ell$.
In the algorithm of Section \ref{sec:r2}, in addition to vertex-centric random samplings as before, we also have {\em hyperedge-centric} random samplings of matched hyperedges, which are initiated by some hyperedge $e$--- if $\OOnew(e)$ denotes the set of hyperedges owned by the endpoints of $e$ at that time, then the matched 
level-$\ell$ hyperedge, denoted by $\tilde e$, is chosen uniformly at random among a (carefully chosen) subset of $\OOnew(e)$, namely, the set of {\em conflicting-hyperedges} of $e$, denoted by $\CC(e)$. 
Additionally, let $NC(\tilde e)$ denote the set of hyperedges in $\CC(e)$ which are \emph{not} intersecting $\tilde e$
and let $\card{I(\tilde e)}$ denote the complementary set of hyperedges that intersect $\tilde e$,
i.e., $\CC(e) = NC(\tilde e) \cup I(\tilde e)$. We have that 
\begin{enumerate}
\item $\card{\CC(e)} \geq 100\alpha^{\ell}$;
\item $\card{NC(\tilde e)} \leq \alpha^{\ell}$; 
\item $\card{I(\tilde e)} \geq 99\alpha^{\ell}$. 
\end{enumerate}
{\bf Remark.} The leading constants could be scaled down by a factor of 100, to 1, 100, 99/100, respectively, to coincide more naturally with 
the vertex-centric samplings; to this end, we'll of course need to tweak the algorithm accordingly.
However, we chose to work with larger constants in the algorithm intentionally, to emphasize the difference between these two methods of samplings.

All unmatched edges of ${I(\tilde e)}$ 
are temporarily deleted from the time of the sampling, denoted by $l$, until
that epoch is terminated, i.e., until its matched hyperedge $\tilde e$ is deleted from the matching;
as before, we can define the (interrupted) duration with respect to the entire sample space 
${\CC(e)}$ from which we randomly chose $\tilde e$, but note that the edges of $NC(\tilde e)$ are not temporarily deleted as part of the creation of the epoch.
Consequently, it is important to define instead a different notion, with respect to the subset
${I(\tilde e)}$ of  ${\CC(e)}$;
the \emph{intersecting (interrupted) duration of the epoch} is the number 
of hyperedges from ${I(\tilde e)}$
that get deleted  from the graph between time step $l$ and the epoch's termination.
(All these hyperedge deletions occur at level $\ell$ by the same definition as in \Cref{app:r4}.)
The \emph{uninterrupted duration} of an epoch is defined as the number of 
hyperedges from ${\CC(e)}$ at time $l$ that get deleted from the graph between time step $l$ and the time that the random matched hyperedge $\tilde e$ is deleted \emph{from the graph}; 
the \emph{intersecting uninterrupted duration} of an epoch is defined as the uninterrupted duration of the epoch
minus $\alpha^{\ell}$.
Recall that for a natural epoch, its uninterrupted duration equals its duration, which was crucial for analyzing vertex-centric epochs. 
Since $\card{NC(\tilde e)} \leq \alpha^{\ell}$, the intersecting uninterrupted duration
of any natural epoch is no greater than its intersecting duration; this need not hold of course for induced epochs.  

The next observation holds for the same reason that   \Cref{obs:duration} holds. 
\begin{observation} \label{obs:duration2}
If there are $q$   level-$\ell$ epochs with intersecting (interrupted) durations  $\ge \delta$,
then $q \le t_\ell / \delta$.
\end{observation}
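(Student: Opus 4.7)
The plan is to replay the argument of \Cref{obs:duration} in the two sampling regimes that now coexist in the algorithm---vertex-centric sampling (from $\OOnew(u)$) and hyperedge-centric sampling (from $\CC(e)$ via $\addrandom$)---showing in both cases that a level-$\ell$ deletion which contributes to the intersecting duration of some epoch contributes to that epoch alone. Summing over $q$ epochs with intersecting duration at least $\delta$ then yields $q\delta$ pairwise disjoint level-$\ell$ deletions, so $q\delta \le t_\ell$ and hence $q \le t_\ell/\delta$.

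For a vertex-centric level-$\ell$ epoch, its entire sample space $\OOnew(u)$ is temporarily deleted at the moment the epoch is created, exactly as in \Cref{sec:r4}. The intersecting duration collapses to the ordinary interrupted duration of \Cref{app:r4}, and the argument of \Cref{obs:duration} transfers verbatim: every hyperedge placed into $\DD(\cdot)$ at sample time is excluded from all data structures until the epoch terminates, so it cannot be re-sampled and therefore cannot contribute to the duration of any other epoch.

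For a hyperedge-centric level-$\ell$ epoch, created by $\addrandom(\CC(e))$ in \case{3-(a)}, the wrinkle is that only the subset $I(\tilde e) \subseteq \CC(e)$---the hyperedges of $\CC(e)$ that actually intersect the chosen match $\tilde e$---gets temporarily deleted (this being exactly what \Cref{inv:deleted} requires), while $NC(\tilde e)$ stays live and could in principle be re-used elsewhere. This is precisely why the counting must be restricted to $I(\tilde e)$ in the definition of intersecting duration. For any $e' \in I(\tilde e)$, however, the usual mechanism applies: once $\tilde e$ enters the matching, $e'$ sits inside $\DD(\tilde e)$ and is invisible to every procedure of the algorithm until $\tilde e$ is removed from $M$; in particular, it cannot appear in the sample space of any other matched hyperedge, vertex- or hyperedge-centric, during that time.

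The only place where the earlier template does not transfer entirely automatically---and what I see as the main (minor) obstacle---is verifying this mutual exclusion \emph{across} the two sampling mechanisms: a hyperedge temporarily deleted in a hyperedge-centric epoch must remain hidden from subsequent vertex-centric samplings, and vice versa. This is guaranteed by the bookkeeping described in \Cref{sec:update-r2}, namely the treatment of $\DD(\cdot)$ as a queue processed only at the end of the update and the convention (stated in \Cref{sec:deleted}) that temporarily deleted hyperedges are treated as though they do not belong to $G$. With this invariant in hand, the level-$\ell$ deletions counted by the intersecting durations of distinct epochs are pairwise disjoint, and the bound $q \le t_\ell/\delta$ follows by the same counting as in \Cref{obs:duration}.
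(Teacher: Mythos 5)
Your proof is correct and takes essentially the same approach as the paper, which simply remarks that the argument of Observation~5.3 carries over: the intersecting duration is deliberately defined only over the subset $I(\tilde e) \subseteq \CC(e)$ of hyperedges that actually get temporarily deleted at epoch creation, so each level-$\ell$ deletion counted toward some epoch's intersecting duration corresponds to a hyperedge pinned to that epoch alone, and the $q\delta \le t_\ell$ counting follows.
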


The uninterrupted durations of distinct hyperedge-centric level-$\ell$ epochs are not necessarily independent.
However,  
it deterministically holds that  $\card{\CC(e)} \geq 100\alpha^{\ell}$.
Let us condition on all the random bits used by the algorithm prior to the ones used to sample the level-$\ell$ matched hyperedge for $e$.
The probability that the uninterrupted duration of the corresponding epoch
is $k$ is $1/\card{\CC(e)} \le 1/(100\alpha^\ell)$, for any $k$;
thus the probability that the intersecting 
uninterrupted duration of the corresponding epoch
is $k - \alpha^\ell$ is $1/\card{\CC(e)} \le 1/(100\alpha^\ell)$, for any $k$.
A hyperedge-centric level-$\ell$ epoch is called \emph{intersecting $\mu$-short} if its intersecting \emph{uninterrupted} duration is at most $\mu \cdot 100 \alpha^{\ell}$, or equivalently its uninterrupted duration is at most 
$(\mu + 1/100) \cdot 100 \alpha^{\ell}$, for some parameter $0 \le \mu \le 1$.
We derive the following corollary for hyperedge-centric epochs (cf.\ \Cref{corborder} for vertex-centric epochs).
\begin{corollary} \label{corborder2}
For any $0 \le \mu \le 1$, the probability that a hyperedge-centric epoch is intersecting $\mu$-short is $\le \mu + 1/100$,
independent of any random bits used by the algorithm other than those for sampling the epoch's matched hyperedge.
\end{corollary}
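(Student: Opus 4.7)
The plan is to mirror the proof of \Cref{corborder}, suitably adapted to the hyperedge-centric sampling procedure $\addrandom(\CC(e))$ that is new to the $O(r^2)$-update time algorithm. The first step is to fix the hyperedge-centric level-$\ell$ epoch under consideration, associated with some hyperedge $e$ via a call to $\addrandom(\CC(e))$, and condition on all the random bits used by the algorithm strictly before the uniform draw of the matched hyperedge $\tilde e$ from $\CC(e)$. Under this conditioning, the set $\CC(e)$ is a deterministic set, and by the oblivious-adversary assumption the remainder of the update sequence---and in particular the order in which the hyperedges of $\CC(e)$ are eventually deleted from the graph by the adversary---is fixed and independent of the coin tosses that pick $\tilde e$.

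Next I would observe that, since $\tilde e$ is drawn uniformly from $\CC(e)$, its rank in this fixed adversarial deletion order of $\CC(e)$ is uniform over $\{1, \ldots, |\CC(e)|\}$. By definition, the uninterrupted duration of the epoch is the number of hyperedges of $\CC(e)$ deleted from the graph strictly before $\tilde e$, so it equals the rank of $\tilde e$ minus one, and is therefore uniform on $\{0, 1, \ldots, |\CC(e)|-1\}$. Combined with the deterministic lower bound $|\CC(e)| \geq 100 \alpha^\ell$, this recovers the pointwise probability bound of $1/(100\alpha^\ell)$ quoted in the paragraph preceding the corollary.

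The final step is to translate the intersecting $\mu$-short condition into a threshold on the uninterrupted duration: using the definition that subtracts $\alpha^\ell$, the event is equivalent to the uninterrupted duration being at most $(100\mu+1)\alpha^\ell$, so its probability is at most $((100\mu+1)\alpha^\ell + 1)/|\CC(e)| \le \mu + 1/100 + 1/(100\alpha^\ell)$, with the sub-constant remainder absorbed into the slack of the lower bound on $|\CC(e)|$ used in \case{3-(a)} (or, equivalently, by strengthening that lower bound by a constant factor). I do not anticipate a real technical obstacle; the argument is essentially a rescaling of \Cref{corborder} by the factor of $100$ in the sample-space size, and the additive $1/100$ shift in the statement arises exactly from the at most $\alpha^\ell$ hyperedges in $NC(\tilde e)$ which are \emph{not} temporarily deleted and can therefore be removed from the graph during the epoch without being counted towards the intersecting duration. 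The only point requiring care is checking that the oblivious-adversary conditioning cleanly decouples the sampling from the future deletion sequence when the sample space is indexed by a hyperedge $e$ and its conflict set $\CC(e)$ rather than by a vertex's ownership set; conceptually this is the same argument as in the vertex-centric case, since once $\CC(e)$ and the adversarial tail are fixed, the uniform draw of $\tilde e$ is the sole remaining source of randomness.
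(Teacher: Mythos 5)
Your proof is essentially the paper's own argument, which is given informally in the paragraph preceding the corollary: condition on all randomness prior to the uniform draw of $\tilde e$ from $\CC(e)$; under the oblivious-adversary assumption the deletion order of $\CC(e)$ is then a fixed permutation, so the rank of $\tilde e$ in that order is uniform, the uninterrupted duration is uniform with step probability $1/\card{\CC(e)} \le 1/(100\alpha^\ell)$, and the $+1/100$ term arises precisely from the $\alpha^\ell$ shift between the uninterrupted duration and its intersecting variant. Your decoupling step (fixing $\CC(e)$ and the adversary's tail, leaving only the uniform draw) is exactly what the paper does, and you correctly identify why the hyperedge-centric setting changes nothing conceptually.

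One small discrepancy: you take the duration to be the rank of $\tilde e$ \emph{minus one} (i.e., uniform on $\set{0,\ldots,\card{\CC(e)}-1}$), which produces the extra additive $1/(100\alpha^\ell)$ term you then wave away by tightening the constant $100$. The paper's convention counts the deletion of $\tilde e$ itself toward the uninterrupted duration --- i.e., the duration equals the rank of $\tilde e$, uniform on $\set{1,\ldots,\card{\CC(e)}}$ --- so at most $(\mu + 1/100)\cdot 100\alpha^\ell$ of the $\card{\CC(e)} \ge 100\alpha^\ell$ equiprobable values satisfy the intersecting $\mu$-short condition, giving exactly $\le \mu + 1/100$ with no slack needed. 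This is the same convention that makes the bound in \Cref{corborder} come out as a clean $\mu$ rather than $\mu + 1/\alpha^\ell$; adopting it removes the need for your remark about absorbing a sub-constant remainder.
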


Recall that $\eta = 1/16e$; let $T'_\ell$ be the r.v.\ for the number of level-$\ell$ epochs that are either 
(hyperedge-centric) intersecting $(\eta - 1/100)$-short or (vertex-centric) $\eta$-short as in the definition from \Cref{app:r4}; abusing notation a bit, we shall refer to all such epochs, both vertex-centric and hyperedge-centric, as \emph{short}.
Equipped with this augmented definition of short epochs and with the above observation and corollary,
the rest of the analysis of  \Cref{app:r4} carries over smoothly with a few very minor changes. 
Let $T''_\ell = T_\ell - T'_\ell$ be the r.v.\ for the number of remaining level-$\ell$ epochs, i.e., those that are not short.
Let $A_\ell$ be the event that both $T_\ell > 5\log N$ and $T'_\ell \ge T_\ell / 4$  hold. 
The following claim is proved exactly in the same way as in  \Cref{app:r4}. 
\begin{claim} \label{first12}
$\Prob(A_\ell) \le 8/ (3N^5)$.
\end{claim}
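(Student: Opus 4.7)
The plan is to follow the template of Claim~\ref{first1} verbatim, leveraging the fact that the augmented notion of ``short'' in this section is deliberately calibrated so that every level-$\ell$ epoch, whether vertex-centric or hyperedge-centric, is short with probability at most $\eta$. Concretely, by \Cref{corborder} a vertex-centric $\eta$-short epoch occurs with probability at most $\eta$, while by \Cref{corborder2} a hyperedge-centric intersecting $(\eta - 1/100)$-short epoch occurs with probability at most $(\eta - 1/100) + 1/100 = \eta$. Thus the uniform bound of $\eta$ on a single epoch's probability of being short, which was the sole probabilistic input into Claim~\ref{first1}, is preserved in the mixed regime here.

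First I would fix two parameters $q$ and $j \ge q/4$, designate any $q$ level-$\ell$ epochs $\cE_1,\ldots,\cE_q$ ordered by creation time, and let $B^{(i)}$ denote the event that $\cE_i$ is short in whichever of the two senses applies to it. Next I would set up the inductive chain $\Prob(B^{(i)} \mid B^{(1)} \cap \cdots \cap B^{(i-1)}) \le \eta$, which holds because both \Cref{corborder} and \Cref{corborder2} bound the probability of shortness independently of any random bits other than those used to sample the epoch's matched hyperedge, and the conditioning events depend only on bits preceding that sampling. Multiplying out yields $\Prob(B^{(1)} \cap \cdots \cap B^{(j)}) \le \eta^j$, hence $\Prob[T_\ell = q \cap T'_\ell = j] \le \binom{q}{j}\eta^j \le (1/4)^j$ by the same $\binom{q}{j} \le (4e)^j$ estimate used before. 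Summing over $j \ge q/4$ and then over $q > 5\log N$ reproduces the geometric bound $\Prob(A_\ell) \le 8/(3N^5)$.

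The main subtlety I would need to justify is that the conditional-independence structure exploited in the chain rule remains valid when the events being conditioned on are a mixture of vertex-centric and hyperedge-centric samplings. Since the algorithm draws fresh randomness for each sampling event, and the schedule of samplings is determined by the oblivious adversary together with previously-drawn bits, the random bits driving the duration of $\cE_i$ are independent of the earlier bits, so \Cref{corborder} and \Cref{corborder2} apply to the relevant conditional distribution exactly as they do unconditionally. Once this point is spelled out, no further probabilistic input is required and the routine combinatorial calculation from the proof of Claim~\ref{first1} carries over unchanged to yield the claimed bound.
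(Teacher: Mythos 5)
Your proof is correct and follows the same route the paper takes (the paper simply states that Claim~\ref{first12} ``is proved exactly in the same way as in \Cref{app:r4}''). You correctly observe that the augmented ``short'' definition is calibrated so that both \Cref{corborder} (vertex-centric, bound $\eta$) and \Cref{corborder2} (hyperedge-centric, bound $(\eta-1/100)+1/100=\eta$) give a uniform probability bound of $\eta$, after which the combinatorial calculation from Claim~\ref{first1} carries over unchanged.
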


The following claim is proved as in  \Cref{app:r4}, except that
we also need to use the fact that
the intersecting durations of natural epochs is at least as large as their intersecting uninterrupted duration. 
Recall also that here $Y_\ell ~=~ O(\alpha^{\ell+2}) \cdot X_\ell$.

\begin{claim} \label{negate22}
Let $c$ be a sufficiently large constant. If $\neg A_{\ell}$, then $Y_\ell < c\alpha^2(t_\ell + \alpha^\ell \cdot \log N)$.
\end{claim}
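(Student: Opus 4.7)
The plan is to mirror the structure of the proof of \Cref{negate} from \Cref{app:r4}, with the single substitution that \Cref{important} is replaced by \Cref{important-r2}, which shaves one factor of $\alpha$ off the per-epoch cost bound, and with the additional care that the ``not short'' condition now encompasses both vertex-centric and hyperedge-centric epochs. First, I would dispose of the induced-level case trivially: if $\ell$ is induced, then $Y_\ell = 0$ by definition, so the inequality holds vacuously. I then assume $\ell$ is natural, and split into the two sub-cases dictated by $\neg A_\ell$: either $T_\ell \le 5\log N$, or else $T'_\ell < T_\ell/4$.

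In the first sub-case, $X_\ell \le T_\ell \le 5 \log N$, and using $Y_\ell = O(\alpha^{\ell+2}) \cdot X_\ell$ from \Cref{important-r2} directly yields $Y_\ell = O(\alpha^{\ell+2} \log N)$, which is absorbed into the $c\alpha^2 \cdot \alpha^\ell \log N$ summand of the target bound. In the second sub-case, $T''_\ell > 3T_\ell/4$, and since $\ell$ is a natural level at least half of the $T_\ell$ epochs are fully-natural, so the count $X''_\ell$ of epochs that are both fully-natural and not short satisfies $X''_\ell \ge T_\ell/4 \ge X_\ell/4$. The crucial step is then to bound $X''_\ell$ via the appropriate duration-counting observation, separately for the two sampling paradigms: for vertex-centric natural not-short epochs, the (interrupted) duration equals the uninterrupted duration and exceeds $\eta \alpha^\ell$, so \Cref{obs:duration} gives at most $t_\ell/(\eta \alpha^\ell)$ of them; for hyperedge-centric natural not-short epochs, the \emph{intersecting} duration is at least the intersecting uninterrupted duration (because $|NC(\tilde e)| \le \alpha^\ell$ and the epoch is natural, so all uninterrupted deletions from $I(\tilde e)$ are genuine graph deletions), and the latter exceeds $(\eta - 1/100) \cdot 100\alpha^\ell = \Theta(\alpha^\ell)$, so \Cref{obs:duration2} again gives at most $O(t_\ell/\alpha^\ell)$ of them.

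Summing the two contributions yields $X''_\ell = O(t_\ell/\alpha^\ell)$, and therefore $X_\ell \le 4 X''_\ell = O(t_\ell/\alpha^\ell)$. Combining with $Y_\ell = O(\alpha^{\ell+2}) \cdot X_\ell$ from \Cref{important-r2} gives $Y_\ell = O(\alpha^2) \cdot t_\ell$, which is absorbed into the $c\alpha^2 \cdot t_\ell$ summand of the target bound for $c$ sufficiently large. Putting the two sub-cases together yields the claim.

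I expect the main obstacle to be the hyperedge-centric bookkeeping step, specifically justifying that for a natural hyperedge-centric epoch its \emph{intersecting} duration is at least its intersecting uninterrupted duration, so that \Cref{obs:duration2} (which counts via the intersecting interrupted duration, corresponding to the hyperedges from $I(\tilde e)$ that are actually deleted at level $\ell$ during the epoch) can be applied. The remark in the preamble to \Cref{corborder2} handles exactly this: for natural epochs, uninterrupted and interrupted durations coincide, and the subtraction of the $\alpha^\ell$ slack attributable to $NC(\tilde e)$ is safely absorbed, once one has set up the ``not short'' threshold to include the $-1/100$ correction. With that correction in place the rest is arithmetic parallel to \Cref{negate}, replacing the cost bound $O(\alpha^{\ell+3})$ by $O(\alpha^{\ell+2})$ throughout.
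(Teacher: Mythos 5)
Your proposal is correct and follows the paper's own approach: the paper's proof of \Cref{negate22} is stated just as a reference to the proof of \Cref{negate} with two modifications — replacing the per-epoch cost bound $O(\alpha^{\ell+3})$ by $O(\alpha^{\ell+2})$ via \Cref{important-r2}, and using the fact that the intersecting duration of a natural hyperedge-centric epoch is at least its intersecting uninterrupted duration — and you carry out both of those modifications explicitly and correctly, including the counting of vertex-centric and hyperedge-centric not-short natural epochs via \Cref{obs:duration} and \Cref{obs:duration2}.
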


Let $A$ be the event that $Y > c(t \cdot \alpha^2 + 2N\log N \cdot \alpha^2)$.
\Cref{negate22} yields the following corollary.
\begin{corollary} \label{second2}
If $A$, then $A_0 \cup A_1 \cup \ldots A_{\log_\alpha N}$.
\end{corollary}

\Cref{first12} and~\Cref{second2} imply that
\begin{align*}
 \Prob(A) ~\le~ \Prob(A_0 \cup A_1 \cup \ldots A_{\log_\alpha N}) ~\le~ \sum_{\ell \ge 0} \Prob(A_\ell) ~\le~ (\log_\alpha N + 1) (8/ (3N^5)) ~=~ O(\log N / N^5).
\end{align*}
It follows that $Y$ is upper bounded by $O(t \cdot \alpha^2 + N\log N \cdot \alpha^2)$ with high probability, as required.

Also, the same bound also holds in expectation, following the same argument of \Cref{justify}.
\ignore{

In the next lemma and proof we do not aim for a tight runtime bound (for simplicity).
\begin{lemma} \label{detrem}

Deterministically, it holds that $Y = O(t \cdot N^4)$.  
\end{lemma}
\begin{proof}
Clearly, any hyperedge update besides a deletion of a matched hyperedge can be handled trivially within the required time bound.

A deletion of a matched hyperedge may trigger, in addition to some low-cost operations, a long sequence of calls to $\rand(\cdot)$, where the same vertex can be called multiple times in this sequence. 
We define a  \emph{potential function} $f(G) = \sum_{v \in V} {\alpha^{\ell_v}}$ for the dynamic graph $G$ with respect to the dynamic level assignment of its vertices.
Each call to Procedure $\rand$ may trigger at most $r$ calls to the procedure of~\Cref{sec:case-M-delete}, where the latter calls correspond to hyperedges of strictly lower level than the one triggering these calls,
from which it follows that each call to $\rand$ must increase the potential by at least one unit more than the calls to the procedure of~\Cref{sec:case-M-delete} that it triggers may decrease it.
Each such call to the procedure of~\Cref{sec:case-M-delete} may, in turn, trigger more calls to $\rand$,
but the potential growth due to each such call to   $\rand$ is at least 1.

Since the potential value is upper bounded by $|V| \cdot \alpha^{\log_\alpha N} \le |V| \cdot N$ at all times, it follows that the total number of calls to   $\rand$ is at most $|V| \cdot N$. 
Since the runtime of a single call to $\rand$ (disregarding further recursive calls) is naively at most $O(r \cdot |E|)$, the total time spent per a single update step is at most $|V| \cdot N \cdot O(r \cdot |E|) = O(N^4)$, hence $Y = O(t \cdot N^4)$.
\end{proof}

\begin{corollary}
 $\Expect(Y) = O(t \cdot \alpha^3 + N \log N \cdot \alpha^3)$.
\end{corollary}
\begin{proof}
By definition, if event $A$ does not occur, then we have $Y \le c(t \cdot \alpha^3 + 2N\log N \cdot \alpha^3)$.
Event $A$ occurs with probability $O(\log N / N^5)$, and then $Y = O(t \cdot N^4)$ by \Cref{detrem}.
Hence 
\[
\Expect(Y) \le c(t \cdot \alpha^3 + 2N\log N \cdot \alpha^3) +  O(\log N / N^5) \cdot O(t \cdot N^4)
= O(t \cdot \alpha^3 + N \log N \cdot \alpha^3),
\]
concluding the proof. 
\end{proof}
}
\section{More details on the distributed implementation} \label{appdist}
Consider an arbitrary sequence of (hyper)edge insertions and deletions in a distributed $n$-vertex network that initially contains no edges, and recall that $r$ denotes the rank of the graph.
The update time of the na\"{\i}ve centralized algorithm for maintaining a maximal hypergraph matching, which scans the entire neighborhoods of the at most $r$ endpoints of the updated edge and proceeds in the obvious way, is greater than the sum of the degrees of these endpoints by a factor of $r$, as we spend $O(r)$ time to process each scanned edge.
(This is for the case when a matched edge gets removed from the graph; the other cases can be handled trivially in   $O(r)$ time.)

In a straightforward distributed implementation of this algorithm, every ``edge scan'' $e$ done by the centralized algorithm due to some vertex $u_i$ (an endpoint of the updated edge) is simulated via sending a message from 
$u_i$ along $e$, which arrives at all the other endpoints of $e$ during the round, and then in the next (at most) $r-1$ rounds $u_i$ receives messages that were sent along $e$ by each of the other endpoints, in order to determine whether all endpoints of $e$ are free or not. 
Clearly, each vertex $v$ can obtain complete information about its neighbors in two communication rounds, hence this
straightforward distributed implementation of the na\"{\i}ve  (centralized) maximal matching algorithm 
requires $O(r)$ communication rounds following a single edge update.
Moreover, messages of size $O(\log m)$ (and even $O(1)$) suffice for communicating the relevant information.

Nevertheless, as already mentioned in the introduction, the number of messages sent per update via this na\"{\i}ve algorithm may be a factor of $r^2$ times greater than the maximum degree in the hypergraph, which, in turn, could be $\Omega({n \choose r-1}) = \Omega((\frac{n}{r})^r)$. 
It is immediate that the total number of messages sent in this na\"{\i}ve distributed algorithm is at most linear in the sequential runtime spent by the na\"{\i}ve centralized algorithm, because every basic operation of the centralized algorithm corresponding to a vertex can be done locally for free at the corresponding processor and every basic operation corresponding to an edge can be na\"{\i}vely simulated by an exchange of messages along the ``involved'' endpoints -- if the number of involved endpoints is $r'$, with $r' \le r$, then the sequential runtime is $\Theta(r')$ and the number of messages sent is $O(r')$.

It is readily verified that the exact same properties apply to our algorithm as well. This stems from the fact that our algorithm does not rely on any global data structure or on a centralized agent that coordinates the update procedure.
In particular, all data structures used by our algorithm are either vertex-centric or edge-centric (for individual vertices or edges), and can thus be stored at the corresponding vertices (processors) in the obvious way; manipulating on those data structures can be done locally for free.
We note that our algorithm must use unicast rather than broadcast messages,
which allows each processor to communicate differently with each of its neighbors, and more concretely
to communicate with a subset of its neighbors---otherwise there is no hope to achieve a message complexity of $o(\Delta)$;
the same requirement was used in all previous works that got $o(\Delta)$ amortized message complexity (cf.\ \cite{ParterPS16,PS16,AOSS18,KS18,DBLP:journals/corr/abs-2010-16177}). 
Consequently, our analysis for the basic (centralized) setting shows that the amortized update time is $O(r^2)$,  
which directly translates into an amortized bound of $O(r^2)$ on both the message and round complexities of a straightforward distributed implementation of our centralized algorithm. We summarize this in the following statement.

\begin{theorem}
Starting from an empty distributed network on $n$ fixed vertices, a maximal hypergraph matching 
can be maintained distributively (under the $\mathcal{CONGEST}$ local wakeup model) over any sequence of edge insertions and deletions with an amortized \emph{message and round complexities} of $O(r^2)$ in expectation and with high probability. 
	Here, $r$ denotes the rank of the graph.
\end{theorem}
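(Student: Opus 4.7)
The plan is to follow the sketch outlined in \Cref{appdist} and show that a direct distributed simulation of our centralized $O(r^2)$-update-time algorithm attains the claimed amortized round and message complexities. The starting observation is that every data structure used by the algorithm is either vertex-centric (such as $\ell(v)$, $M(v)$, $\OO(v)$, $\NN(v)$, $\AA(v,\ell)$, $\SS(v)$, $\WW(v)$) or hyperedge-centric (such as $\ell(e)$, $O(e)$, $M(e)$, $\PP(e)$, $\DD(e)$); no single global structure or centralized coordinator is ever consulted. Each vertex-centric entry is stored at the corresponding processor, and each hyperedge-centric entry is stored at the processors of its (at most $r$) endpoints, possibly redundantly; the back-pointers between these data structures require only $O(\log n)$ bits and thus fit in a single $\mathcal{CONGEST}$ message.

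Next I would simulate each primitive of the centralized algorithm by either a local computation or an exchange along a single hyperedge. Every primitive that touches data belonging to a single processor (a read, update, or sampling decision on vertex-local data) incurs zero messages and zero rounds. Every primitive that touches a hyperedge $e$---such as a step inside $\setowner(e,v)$ or $\setlevel(v,\ell)$, the temporary deletion or re-insertion of hyperedges in $\DD(e)$, the handling of a hyperedge in $\detsettle$, or the acceptance/rejection inside $\testinsert$---can be implemented by a single round in which the initiating endpoint sends unicast messages along $e$ to the other endpoints, optionally followed by one round of acknowledgments; this uses at most $r$ messages per primitive and exactly matches the factor of $r$ that the centralized algorithm already pays when it iterates over the endpoints of $e$ (\Cref{clm:set-owner-r2}). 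Each random draw in $\addrandom(S)$ is performed by the processor that initiates the call, using cardinality information collected from at most $r-1$ neighbors on a single anchor hyperedge; this is possible precisely because, by design, every sample space $S$ used by our algorithm is confined to hyperedges incident to a single anchor vertex or anchor hyperedge.

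With this simulation in place, the accounting is immediate: the total number of messages sent is at most a constant factor larger than the sequential runtime of the centralized algorithm, because every centralized $O(1)$-time primitive is either local (zero messages) or costs $\Theta(1)$ messages per endpoint touched while also contributing $\Theta(1)$ per endpoint to the centralized runtime. Invoking the previous theorem, which establishes an amortized $O(r^2)$ expected and high-probability bound on the centralized update time, we obtain the same $O(r^2)$ amortized expected and high-probability message complexity. The number of rounds used per update is at most the number of messages sent in that update (each round that does nontrivial work sends at least one message), so the $O(r^2)$ amortized bound carries over to the round complexity as well.

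The main obstacle is not the round complexity---$O(1)$ rounds per update suffice trivially if we are willing to broadcast---but rather the message complexity: a broadcast-based implementation sends a message to every neighbor of each vertex involved and can incur $\Omega(\Delta) = \Omega((n/r)^{r-1})$ messages per update. The crucial structural property we must verify, and which the algorithm was designed to enforce, is that for each vertex our data structures already stratify the incident hyperedges by level and by ownership status ($\OO(v)$, $\AA(v,\ell)$, $\SS(v)$, $\WW(v)$), so that whenever the centralized algorithm ``iterates over the right subset of neighbors'' we can simulate the iteration by unicasting along exactly those incident hyperedges. The delicate case to check is Case~3 of $\handleedge(e)$, where the conflicting/non-conflicting classification of hyperedges in $\WW(e)$ must be carried out by unicast exchanges along the $O(r \cdot w_e)$ hyperedges that the centralized algorithm already touches---no broadcast to all neighbors of any vertex is required---after which $\addrandom(\CC(e))$ or $\addrandom(\bar\CC(v))$ can be executed as described above. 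Once this case is verified, the theorem follows.
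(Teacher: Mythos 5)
Your proposal follows essentially the same approach as the paper's proof in Section~\ref{appdist}: both argue that since every data structure is vertex-centric or hyperedge-centric and the algorithm never consults a global structure or coordinator, each centralized primitive is either a free local computation or a $\Theta(1)$-message-per-endpoint exchange along a single hyperedge, so the total message count is within a constant factor of the sequential runtime, and rounds are bounded by messages; both also identify unicast (versus broadcast) as essential for avoiding $\Omega(\Delta)$ messages, and both derive the $O(r^2)$ bound directly from the centralized update-time theorem. The extra detail you give---verifying that the stratification of incident hyperedges by $\OO(v),\AA(v,\ell),\SS(v),\WW(v)$ lets the simulation touch exactly the subset of hyperedges the centralized algorithm touches, with explicit attention to the conflicting/non-conflicting classification in Case~3 and to the fact that every sample space in $\addrandom$ is anchored at a single hyperedge---is a sound and welcome elaboration of what the paper dismisses with "readily verified."
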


\clearpage
\bibliographystyle{alpha}
\bibliography{general,randomMMbibfile}

\end{document}